\documentclass[12pt]{article}
\usepackage[charter]{mathdesign}

\usepackage[T1]{fontenc}
\usepackage[latin9]{inputenc}
\usepackage[letterpaper]{geometry}
\geometry{verbose,tmargin=1in,bmargin=1in,lmargin=1in,rmargin=1in}
\synctex=-1
\usepackage{mathrsfs}
\usepackage{amsmath}
\usepackage{amsthm}
\usepackage{setspace}
\usepackage[authoryear]{natbib}
\usepackage{microtype}
\onehalfspacing
\usepackage[unicode=true,
 bookmarks=false,
 breaklinks=false,pdfborder={0 0 1},backref=section,colorlinks=false]
 {hyperref}

\makeatletter
\theoremstyle{plain}
\newtheorem{assumption}{\protect\assumptionname}
\theoremstyle{plain}
\newtheorem{lem}{\protect\lemmaname}
\theoremstyle{plain}
\newtheorem{prop}{\protect\propositionname}
\theoremstyle{plain}
\newtheorem{cor}{\protect\corollaryname}
\theoremstyle{remark}
\newtheorem{rem}{\protect\remarkname}
\theoremstyle{plain}
\newtheorem{thm}{\protect\theoremname}


\usepackage{amsfonts}
\usepackage{eurosym}
\usepackage{ulem}
\usepackage{graphicx}
\usepackage{caption}
\usepackage{color}
\usepackage{sectsty}
\usepackage{comment}
\usepackage{footmisc}
\usepackage{caption}
\usepackage{pdflscape}
\usepackage{subfigure}
\usepackage{array}
\usepackage{upgreek}
\usepackage{bbm}

\normalem

\usepackage{amsthm}
\usepackage{graphicx}
\usepackage{verbatim}
\usepackage{ulem}
\usepackage{textpos}
\usepackage{changepage}
\usepackage{url}

    \DeclareMathOperator*{\argmax}{\arg\!\max}

\tolerance=5000

\newcolumntype{L}[1]{>{\raggedright\let\newline\\arraybackslash\hspace{0pt}}m{#1}}
\newcolumntype{C}[1]{>{\centering\let\newline\\arraybackslash\hspace{0pt}}m{#1}}
\newcolumntype{R}[1]{>{\raggedleft\let\newline\\arraybackslash\hspace{0pt}}m{#1}}

\makeatother

\providecommand{\assumptionname}{Assumption}
\providecommand{\corollaryname}{Corollary}
\providecommand{\lemmaname}{Lemma}
\providecommand{\propositionname}{Proposition}
\providecommand{\remarkname}{Remark}
\providecommand{\theoremname}{Theorem}

\begin{document}
\title{{\Large{}Nonparametric Identification of Production Function, Total
Factor Productivity, and Markup from Revenue Data}\thanks{Acknowledgement: We are grateful to comments made at seminars and
conferences with regard to an earlier version of this work. We thank
Zheng Han, Yoko Sakamoto and Makoto Tanaka for excellent research
assistance. Sugita acknowledges financial supports from JSPS KAKENHI
(grant numbers: 17H00986, 19H01477, and 19H00594).}{\Large{} }}
\author{Hiroyuki Kasahara\thanks{Department of Economics, University of British Columbia, Canada. (Email:
hkasahar@mail.ubc.ca)} \and Yoichi Sugita\thanks{Graduate School of Economics, Hitotsubashi University, Japan. (E-mail:
yoichi.sugita@r.hit-u.ac.jp)}}
\date{\today}
\maketitle
\begin{abstract}
Commonly used methods of production function and markup estimation
assume that a firm\textquoteright s output quantity can be observed
as data, but typical datasets contain only revenue, not output quantity.
We examine the nonparametric identification of production function
and markup from revenue data when a firm faces a general nonparametric
demand function under imperfect competition. Under standard assumptions,
we provide the constructive nonparametric identification of various
firm-level objects: gross production function, total factor productivity,
price markups over marginal costs, output prices, output quantities,
a demand system, and a representative consumer\textquoteright s utility
function.
\end{abstract}

\section{Introduction}

The estimation of production function and markup is a core tool used
in empirical analyses of market outcomes.\footnote{\citet{griliches_mairesse_1999} and \citet*{ACKERBERG20074171} provide
excellent surveys on production function estimations.} The residual of an estimated production function, total factor productivity
(TFP), is widely used to measure firm-level technological efficiency
(see \citet{bartelsman2000understanding} and \citet{syverson2011determines}
for recent surveys) and its contribution to aggregate efficiency (e.g.,
\citealp{olley1996dynamics}). Researchers often estimate the elasticity
of production functions to analyze technological changes (e.g., \citealp{van2003productivity};
\citealp{doraszelski2018measuring}) and price markups over marginal
costs (e.g., \citealp{hall1988relation}; \citealp{de2012markups}).
The estimation of firm-level markup via production function has been
widely applied in various topics and complements markup estimation
via demand function (e.g., \citealp*{blp1995ecta}) in economic analysis
of firm's market power.

Commonly used methods of production function and markup estimation
assume that a firm's output quantity can be observed as data. However,
typical firm-level datasets contain only revenue, not output quantity.
Therefore, in practice, many applications use revenue deflated by
an industry-level price deflator as output.\footnote{A few studies use firm-level datasets that include output quantity
(e.g., \citealp*{foster2008reallocation}; \citealp*{de2016prices};
\citealp{lu2015trade}; \citealp{nishioka2019measuring}). However,
those quantity datasets are available only for a limited number of
countries, industries, and years, and they are not easily accessible
to all researchers.} For production function estimation, this practice may be justified
under perfect competition where an output price is exogenous and identical
across firms. However, ever since \citet{ma44ecma}'s pioneering study,
several researchers have voiced cautions and suggested that the practice
may not be justified under imperfect competition; they show that using
revenue as output can significantly bias the identification of production
functions (e.g., \Citealp{klette1996jae}; \citealp*{de2011product})
and TFP (e.g., \citealp*{foster2008reallocation}; \citealp*{katayama2009firm};
\citealp*{de2011product}). Furthermore, as shown in \citealp*{bond2020some},
using revenue in place of output quantity may lead to serious biases
in estimation of firm's markups. Despite such criticism, the practice
of using revenue in place of output quantity persists in many applications
given a lack of output quantity data.

In the existing literature, it is not known whether identifying production
functions and markups from firm-level revenue data is possible without
imposing parametric assumptions. This paper contributes to the literature
of production function and markup estimation by establishing nonparametric
identification of production function, TFP, and markup from revenue
data. The proof is constructive and the required assumption is similar
to the standard assumption in the production function literature except
that we impose additional assumptions on firm's demand function.

Following \citet{ma44ecma}, \citet{klette1996jae} and \citet{de2011product},
we explicitly model a demand function that an individual firm faces
as a function of its output and observable characteristics that are
excluded from the production function.\footnote{\citet*{de2020rise} study an alternative approach using an exogenous
variable to remove output price variation from revenue data. } While each of these earlier studies examines a demand function with
a constant and identical demand elasticity\textemdash something that
implies identical markups across firms\textemdash we consider a general
nonparametric demand function that generates rich heterogeneity in
various firm-level outcomes, including markups; for this reason, we
can address the bias from markup heterogeneity across firms that the
literature has criticized. In other respects, our method requires
the standard assumptions and can be implemented using typical data
found in empirical applications.

We develop a three-step identification approach that combines the
control function approach developed by \citet{olley1996dynamics},
\citet{levinsohn2003estimating}, and \citet*{ackerberg2015identification}
and the first-order condition approach recently developed by \citet*{gandhi2020identification}.\footnote{These approaches assumed quantity data or perfect competition. \citet{gandhi2020identification}
also examined an imperfect competition with a constant elastic demand
as in \citet{klette1996jae} and \citet{de2011product} where markups
must be constant and identical across firms. } Following \citet{levinsohn2003estimating} and \citet{ackerberg2015identification},
the inverse function of a material demand function serves as a control
function for TFP. In the first step, we identify revenue as a function
of inputs and observable demand shifters by using the control function;
this first step corresponds to that of \citet{ackerberg2015identification}.
Our novel second step identifies the control function for TFP by applying
the nonparametric identification of transformation models (e.g., \citealp{horowitz1996semiparametric})
examined by \citet*{ekeland2004identification} and \citet*{chiappori2015nonparametric}.
By identifying the control function, TFP is identified (up to normalization)
from the dynamics of inputs, without output data. In the third step,
we identify a production function, markup, and a demand function,
using the first-order condition for the material and the control function
identified in the second step.

Our method identifies various objects from the revenue data. In our
main setting, markup and output elasticities are identified up to
scale; an output price, an output quantity, a gross production function,
and TFP are identified up to scale and location. Identification is
cross-sectional so that the identified objects can vary over time.
With an additional assumption of \emph{local} constant returns to
scale, we identify the levels of markup and output elasticities; we
may also identify an output price, an output quantity, a production
function, and TFP up to location.\footnote{\citet*{flynn2019measuring} used \emph{global} constant returns to
scale to identify a production function. In subsection \ref{subsec:LCRS},
we clarify \emph{local} and \emph{global} constant returns to scale.} Finally, if we are willing to assume monopolistic competition (without
imposing free entry), we further identify a demand system and a utility
function of a representative consumer\textemdash specifically, \citet{matsuyama2017beyond}'s
homothetic demand system with a single aggregator (HSA)\textemdash that
can be used for a counter-factual analysis and a welfare analysis.\footnote{One frequently sees within the literature an assumption of market
structure for the identification of demand and supply side objects.
For example, \citet*{blp1995ecta} identify firm-level marginal costs
by specifying oligopolistic competition; meanwhile, \citet*{ekeland2004identification}
and \citet*{heckman2010nonparametric} identify various demand and
supply side objects of a hedonic model by exploiting the properties
of perfect competition.}

The remainder of this paper is organized as follows. Section \ref{sec:Potential-Biases-When}
summarizes previous studies on how using revenue as output could bias
the identification of production function, TFP, and markup; readers
familiar with the literature can skip this section and proceed to
Section \ref{sec:Identification}. Subsection \ref{subsec:Setting}
explains our setting, and subsection \ref{subsec:example} demonstrates
our three-step approach by offering a parametric example. Subsection
\ref{subsec:Nonparametric} presents our nonparametric identification
results, and subsection \ref{subsec:Normalization} discusses additional
assumptions for fixing scale and location normalization. Subsection
\ref{subsec:demand} examines the identification of a demand system
and a representative consumer's utility function. Both subsection
\ref{subsec:Alternative-Settings} and the Appendix present identification
results in alternative settings, including endogenous labor input,
endogenous firm-level observable demand shifters, unobservable demand
shifters, and i.i.d. productivity shocks. Section \ref{sec:Concluding-Remarks}
provides concluding remarks.

\section{Biases from Using Revenue as Output Quantity\label{sec:Potential-Biases-When}}

This section summarizes possible biases in the identification of production
function, TFP, and markup when revenue is used as an output quantity.
We denote the logarithms of the price, output, and revenue of firm
$i$ at time $t$ as $p_{it}$, $y_{it}$, and $r_{it}:=p_{it}+y_{it}$,
respectively. Suppose that these variables are related via the inverse
demand function $p_{it}=\psi_{it}(y_{it})$ and the revenue function
$r_{it}=\varphi_{it}(y_{it}):=y_{it}+\psi_{it}(y_{it})$. Let $y_{it}=f_{t}(m_{it},k_{it},l_{it})+\omega_{it}$
be firm $i$'s production function where $\omega_{it}$ is TFP and
$x_{it}:=(m_{it},k_{it},l_{it})$ is a vector of the logarithms of
material, capital, and labor, respectively. To highlight the sources
of biases from using revenue as output, assume that TFP is identical
across firms within time $t$, with $\omega_{it}=\omega_{t}$ for
all $i$. This simplification eliminates an additional and well-known
source of bias, correlations between inputs and TFP. 

From the first-order condition for profit maximization, $P_{it}\left(1+\psi_{it}'(y_{it})\right)=MC_{it}$,
the elasticity of revenue with respect to output is equal to the inverse
of markup:
\begin{equation}
\frac{d\varphi_{it}(y_{it})}{dy}=\frac{MC_{it}}{P_{it}}.\label{eq:markup_foc}
\end{equation}
Under perfect competition where $P_{it}=MC_{it}$, the variation in
revenue across firms coincides with that of output. However, they
are generally different when markups vary across firms.

Suppose that, using revenue as output, a researcher identifies a true
relationship between revenue and inputs, $\tilde{\varphi}_{it}(x_{it}):=\varphi_{it}(f_{t}(x_{it})+\omega_{t})$
to use $\tilde{\varphi}_{it}(x_{it})$ as a proxy for $f_{t}(x_{it})$.
Prior studies show that the use of revenue as output could cause biases
in three forms. First, \citet{ma44ecma} and \citet{klette1996jae}
establish that, from (\ref{eq:markup_foc}), the elasticity of $\tilde{\varphi}_{it}(x_{it})$
relates to the true elasticity of $f_{t}(x_{it})$ via markup:
\begin{equation}
\frac{\partial\tilde{\varphi}_{it}(x_{it})}{\partial v_{it}}=\frac{MC_{it}}{P_{it}}\frac{\partial f_{t}(x_{it})}{\partial v_{it}}\text{ for }v_{it}\in\{m_{it},k_{it},l_{it}\}.\label{eq:elasticity_bias}
\end{equation}
Thus, output elasticities would be underestimated by the extent of
markup. 

Second, \citet{katayama2009firm} and \citet{de2011product} demonstrated
a bias in TFP estimates. Let $d\omega_{t}$ be a TFP change. Suppose
that a TFP change for firm $i$ is estimated as a change in revenue
with inputs being fixed, $d\tilde{\omega}_{it}=\left.d\tilde{\varphi}_{it}(x_{it})\right|_{dx_{it}=0}.$
From (\ref{eq:markup_foc}), we see that this TFP estimate relates
to the true TFP change via markup: 
\begin{align}
d\tilde{\omega}_{it} & =\frac{MC_{it}}{P_{it}}d\omega_{t}.\label{eq:TFP}
\end{align}
Therefore, TFP would be underestimated by the extent of markup.

Finally, \citet{bond2020some} show that markup estimates using the
method of \citet{hall1988relation} and \citet{de2012markups} are
generally biased when revenue elasticity is used in place of output
elasticity. Suppose a firm is a price-taker of flexible input $v$.
\citet{hall1988relation} and \citet{de2012markups} developed the
following equation relating to markup and output elasticity with respect
to $v$ as:
\begin{equation}
\frac{P_{it}}{MC_{it}}=\frac{\partial f_{t}(x_{it})/\partial v_{it}}{\alpha_{it}^{v}}\label{eq:DLW_markup}
\end{equation}
where $\alpha_{it}^{v}$ is the ratio of expenditure on input $v$
to revenue. If a researcher uses $\partial\tilde{\varphi}_{it}(x_{it})/\partial m_{it}$
instead of $\partial f_{t}(x_{it})/\partial m_{it}$ in markup equation
(\ref{eq:DLW_markup}), then from (\ref{eq:elasticity_bias}), the
estimated markup is 1:
\begin{equation}
\frac{\partial\tilde{\varphi}_{it}(x_{it})/\partial v_{it}}{\alpha_{it}^{v}}=\frac{\frac{MC_{it}}{P_{it}}\frac{\partial f_{t}(x_{it})}{\partial v_{it}}}{\alpha_{it}^{v}}=1.\label{eq:markup_one}
\end{equation}
In such a case, the markup would be underestimated.\footnote{Result (\ref{eq:markup_one}) by \citet{bond2020some} relies on the
assumption that a researcher can correctly identify $\tilde{\varphi}_{it}(x_{it})$.
In practice, misspecification of $\tilde{\varphi}_{it}(x_{it})$ could
derive markup estimates (\ref{eq:markup_one}) that contain some information
on true markups. For instance, \citeauthor{de2012markups} (2012,
Section VI) show that when $f$ is Cobb\textendash Douglas, it is
possible to identify the effect of firm-level variables (e.g., export)
on markups.}

\citet{klette1996jae} and \citet{de2011product} developed methods
by which to identify production functions from revenue data, by assuming
a constant elastic demand function with an identical elasticity.\footnote{\citet{katayama2009firm} also developed a method by which to identify
production functions from revenue data. Their method allows for markup
heterogeneity but requires the ability to estimate firm's marginal
costs from total costs. } However, with this specific demand function, markups must be constant
and identical across firms. Studies estimating markups from quantity
data report substantial heterogeneity in markups across firms (e.g.,
\citealp*{de2016prices}; \citealp{lu2015trade}; \citealp{nishioka2019measuring}).
To address the biases arising from firm-level markup heterogeneity,
we extend the approach of \citet{klette1996jae} and \citet{de2011product}
by incorporating a general nonparametric demand function that allows
for variable and heterogeneous markups.

\section{Identification\label{sec:Identification}}

\subsection{Setting\label{subsec:Setting}}

We denote the logarithm of physical output, material, capital, and
labor as $y_{it}$, $m_{it}$, $k_{it}$, and $l_{it}$, respectively,
with their respective supports denoted as $\mathcal{Y}$, $\mathcal{M}$,
$\mathcal{K}$, and $\mathcal{L}$. We collect the three inputs (material,
capital, and labor) into a vector as $x_{it}:=(m_{it},k_{it},l_{it})'\in\mathcal{X}:=\mathcal{M}\times\mathcal{K}\times\mathcal{L}$. 

At time $t$, output $y_{it}$ relates to inputs $x_{it}=(m_{it},k_{it},l_{it})'$
via the production function:
\begin{equation}
y_{it}=f_{t}(x_{it})+\omega_{it},\label{prod}
\end{equation}
where the firm's TFP $\omega_{it}$ follows an exogenous first-order
stationary Markov process given by 
\begin{align}
\omega_{it} & =h(\omega_{it-1})+\eta_{it},\label{omega}
\end{align}
where we assume that neither $h(\cdot)$ nor the marginal distribution
of $\eta_{it}$ change over time.\footnote{$h(\cdot)$ can include a firm's observable exogenous characteristics.}

The demand function for a firm's product is strictly decreasing in
its price, and its inverse demand function is given by 
\begin{equation}
p_{it}=\psi_{t}(y_{it},z_{it}),\label{eq:inverse_demand}
\end{equation}
where $p_{it}$ is the logarithm of output price and $z_{it}\in\mathcal{Z}$
is an observable firm characteristic that affects firm's demand (e.g.,
export status in \citet{de2012markups}). $z_{it}$ can be either
a continuous or discrete vector; in the main text below, $z_{it}$
is assumed to be continuous and exogenous\textemdash that is, $z_{it}\perp\eta_{it}$.
In subsection \ref{subsec:Alternative-Settings} and the Appendix,
we present the identification results when $z_{it}$ is discrete and/or
may correlate with $\eta_{t}$.

The inverse demand function (\ref{eq:inverse_demand}) generalizes
the constant elastic demand function examined by \citet{ma44ecma},
\citet{klette1996jae} and \citet{de2011product}. Although $\psi_{t}$
is nonparametric, (\ref{eq:inverse_demand}) implicitly makes two
assumptions. First, $\psi_{t}(\cdot)$ is a common function for all
firms once the observed characteristics $z_{it}$ are controlled for.
This implies that unobserved demand shifters must be common for all
firms\textemdash that is, $\psi_{t}$ can be written as $\psi_{t}(y_{it},z_{it},A_{t})$
where $A_{t}$ is a vector of unobserved variables and can include
an aggregate price/quantity index. In subsection \ref{subsec:Alternative-Settings},
we discuss the case where $\psi_{t}(\cdot)$ includes a firm-level
unobservable demand shifter such as quality. Second, $\psi_{t}(\cdot)$
represents a demand curve that an individual firm takes as given.
This is satisfied in the case of monopolistic competition (without
free entry) where each firm takes $A_{t}$ as given.

Let $\bar{r}_{it}$ and $\mathcal{\bar{R}}$ be the logarithm of (true)
revenue and its support, respectively. Revenue $r_{it}$ in the data
is observed with a measurement error $\varepsilon_{it}$, $r_{it}=\bar{r}_{it}+\varepsilon_{it}$.
Then, from (\ref{prod}), the observed revenue relates to output and
input as follows:
\begin{align}
r_{it} & =\varphi_{t}(y_{it},z_{it})+\varepsilon_{it}\nonumber \\
 & =\varphi_{t}(f_{t}(m_{it},k_{it},l_{it})+\omega_{it},z_{it})+\varepsilon_{it}\label{rev}
\end{align}
where $\varphi_{t}(y_{it},z_{it}):=\psi_{t}(y_{it},z_{it})+y_{it}.$

We assume that $l_{it}$ and $k_{it}$ are predetermined at the end
of the last period $t-1$, while $m_{it}$ is flexibly chosen after
observing $\omega_{it}$.\footnote{In subsection \ref{subsec:Alternative-Settings}, we present identification
when $l_{it}$ also correlates with $\omega_{it}$.} Specifically, $m_{it}=\mathbb{M}_{t}\left(\omega_{it},k_{it},l_{it},z_{it}\right)$
is chosen at time $t$ by:
\begin{equation}
\mathbb{M}_{t}\left(\omega_{it},k_{it},l_{it},z_{it}\right)\in\arg\max_{m}\exp(\varphi_{t}(f_{t}(m,k_{it},l_{it})+\omega_{it},z_{it}))-\exp(p_{t}^{m}+m),\label{eq:profit_maximization}
\end{equation}
where $p_{t}^{m}$ denotes the logarithm of the material input price
at time $t$, which is common to all firms. A firm is assumed to be
a price-taker for material input. 

Equation (\ref{rev}) highlights two identification issues raised
by \citet{ma44ecma}. First, $m_{it}$ correlates with the unobservable
$\omega_{it}$. Second, $r_{it}$ relates to $x_{it}=(m_{it},k_{it},l_{it})$
via two unknown nonlinear functions $\varphi_{t}(\cdot,z_{it})$ and
$f_{t}(\cdot)$, and two unobservables $\omega_{it}$ and $\varepsilon_{it}$.\footnote{In subsection \ref{subsec:Alternative-Settings} and the Appendix,
we present identification when a firm receives an i.i.d. shock $e_{it}$
to output and then, the firm's revenue includes a non-additive error,
$r_{it}=\varphi_{t}(f_{t}(x_{it})+\omega_{it}+e_{it},z_{it})$.} To address these issues via a control function and a transformation
model, we first make the following assumptions. 
\begin{assumption}
\label{A-1} (a) $f_{t}(\cdot)$ is continuously differentiable with
respect to $(m,k,l)$ on $\mathcal{M}\times\mathcal{K}\times\mathcal{L}$
and strictly increasing in $m$. (b) For every $z\in\mathcal{Z}$,
$\varphi_{t}(\cdot,z)$ is strictly increasing and invertible with
its inverse $\varphi_{t}^{-1}(\bar{r},z)$, which is continuously
differentiable with respect to $(\bar{r},z)$ on $\mathcal{\bar{R}}\times\mathcal{Z}$.
(c) For every $(k,l,z)\in\mathcal{K}\times\mathcal{L}\times\mathcal{Z}$,
$\mathbb{M}_{t}(\cdot,k,l,z)$ is strictly increasing and invertible
with its inverse $\mathbb{M}_{t}^{-1}(m,k,l,z)$, which is continuously
differentiable with respect to $(m,k,l,z)$ on $\mathcal{M}\times\mathcal{K}\times\mathcal{L\times\mathcal{Z}}$.
(d) $\varepsilon_{t}$ is mean independent of $x_{t}$ and $z_{t}$
with $E[\varepsilon_{t}|x_{t},z_{t}]=0$.
\end{assumption}
Assumptions \ref{A-1} (a) and (b) are standard assumptions about
smooth production and demand functions. Assumption \ref{A-1} (b)
$\partial\varphi_{t}(y,z)/\partial y>0$ is equivalent to that the
elasticity of demand with respect to price, $-\left(\partial\psi_{t}(y,z)/\partial y\right)^{-1}$,
is greater than 1; this necessarily holds under profit maximization.
Therefore, Assumption \ref{A-1} (b) is innocuous as long as we analyze
the outcomes of profit maximization. Assumption \ref{A-1} (c) is
a standard assumption in the control function approach that uses material
as a control function for TFP \citep{levinsohn2003estimating,ackerberg2015identification}.

The inverse function of the material demand function with respect
to TFP
\[
\omega_{it}=\mathbb{M}_{t}^{-1}(m_{it},k_{it},l_{it},z_{t})
\]
 is used as a control function for $\omega_{it}$. Since $\partial\varphi_{t}(y_{t},z_{t})/\partial y_{t}>0$,
there exists the inverse function $\varphi_{t}^{-1}(\cdot,z_{t})$
so that the revenue function $\bar{r}_{it}=\varphi_{t}(f_{t}(x_{itt})+\omega_{it})$
can be written as: 
\begin{align}
\varphi_{t}^{-1}\left(\bar{r}_{it},z_{it}\right)=f_{t}(x_{it})+\mathbb{M}_{t}^{-1}(x_{it},z_{it}).\label{eq:model_step2}
\end{align}

In the following, we identify $\varphi_{t}^{-1}\left(\cdot\right)$,
$f_{t}(\cdot)$, and $\mathbb{M}_{t}^{-1}(\cdot)$ from the distribution
of variables in the data. Let $v_{t}:=(k_{t},l_{t},z_{t},x_{t-1},z_{t-1})'\in\mathcal{V}:=\mathcal{K}\times\mathcal{L}\times\mathcal{Z}\times\mathcal{X}\times\mathcal{Z}$.
Data includes a random sample of firms $\{r_{it},v_{it}\}_{i=1}^{N}$
from the population. For instance, the variable $x_{it}$ of firm
$i$ is considered as a realization of the random variable $x_{t}$.
Given a sufficiently large $N$, an econometrician can recover their
joint distributions.
\begin{assumption}
\label{A-data} The following information at time $t$ is known: (a)
the conditional distribution $G_{m_{t}|v_{t}}(m_{t}|v_{t})$ of $m_{t}$
given $v_{t}$; (b) the conditional expectation $E[r_{t}|x_{t},z_{t}]$
of $r_{t}$ given $(x_{t},z_{t})$; (c) firm's expenditure on material
$\exp(p_{t}^{m}+m_{it})$.
\end{assumption}
Assumption \ref{A-data} (a) is required for the identification of
$\mathbb{M}_{t}^{-1}(\cdot)$. Assumptions \ref{A-data} (b) and (c)
are additionally required for the identification of $\varphi_{t}^{-1}\left(\cdot\right)$
and $f_{t}(\cdot)$. Typical production datasets include those variables
in Assumption \ref{A-data}.

Let $\{\varphi_{t}^{*-1}(\cdot),f_{t}^{*}(\cdot),\mathbb{M}_{t}^{*-1}(\cdot)\}$
be the true model structure that satisfies (\ref{eq:model_step2}).
Then, for any $(a_{1t},a_{2t},b_{t})\in\mathbb{R}^{2}\times\mathbb{R}_{++}$,
\begin{align}
\varphi_{t}^{-1}\left(\bar{r}_{t},z_{t}\right) & =(a_{1t}+a_{2t})+b_{t}\varphi_{t}^{*-1}\left(\bar{r}_{t},z_{t}\right),\ f_{t}(x_{t})=a_{1t}+b_{t}f_{t}^{*}(x_{t}),\nonumber \\
 & \text{{and}\ }\mathbb{M}_{t}^{-1}(x_{t},z_{t})=a_{2t}+b_{t}\mathbb{M}_{t}^{*-1}(x_{t},z_{t})\label{eq:equivalence}
\end{align}
also satisfy (\ref{eq:model_step2}), and the true structure $\{\varphi_{t}^{*-1}(\cdot),f_{t}^{*}(\cdot),\mathbb{M}_{t}^{*-1}(\cdot)\}$
is observationally equivalent to the structure (\ref{eq:equivalence}).
That is, the structure $\{\varphi_{t}^{-1}(\cdot),f_{t}(\cdot),\mathbb{M}_{t}^{-1}(\cdot)\}$
is identified only up to location and scale normalization $(a_{1t},a_{2t},b_{t})$
from restriction (\ref{eq:model_step2}).

Therefore, identification requires location and scale normalization.
We fix $(a_{1t},a_{2t},b_{t})$ in (\ref{eq:equivalence}) by fixing
the values of $\{\varphi_{t}^{-1}(\cdot),f_{t}(\cdot),\mathbb{M}_{t}^{-1}(\cdot)\}$
at some points. Specifically, choosing two points $(m_{t1}^{*},k_{t}^{*},l_{t}^{*},z_{t}^{*})$
and $(m_{t0}^{*},k_{t}^{*},l_{t}^{*},z_{t}^{*})$ on the support $\mathcal{X}\times\mathcal{Z}$
where $m_{t0}^{*}<m_{t1}^{*}$, we denote
\begin{equation}
c_{1t}:=f_{t}(m_{t0}^{*},k_{t}^{*},l_{t}^{*}),\ c_{2t}=\mathbb{M}_{t}^{-1}(m_{t0}^{*},k_{t}^{*},l_{t}^{*},z_{t}^{*}),\ \text{{and}\ }c_{3t}:=\mathbb{M}_{t}^{-1}(m_{t1}^{*},k_{t}^{*},l_{t}^{*},z_{t}^{*}).\label{eq:norm}
\end{equation}
Note that $\partial\mathbb{M}_{t}^{-1}/\partial m_{t}>0$ implies
that $c_{2t}<c_{3t}$. Then, there exists a unique one-to-one mapping
between $(c_{1t},c_{2t},c_{3t})$ in (\ref{eq:norm}) and $(a_{1t},a_{2t},b_{t})$
in (\ref{eq:equivalence}) such that $b_{t}=\left(c_{3t}-c_{2t}\right)/\left(\mathbb{M}_{t}^{*-1}(m_{t1}^{*},k_{t}^{*},l_{t}^{*},z_{t}^{*})-\mathbb{M}_{t}^{*-1}(m_{t0}^{*},k_{t}^{*},l_{t}^{*},z_{t}^{*})\right)$,
$a_{1t}=c_{1t}-b_{1t}f_{t}^{*}(m_{t0}^{*},k_{t}^{*},l_{t}^{*})$ and
$a_{2t}=c_{2t}-b_{1t}\mathbb{M}_{t}^{*-1}(m_{t0}^{*},k_{t}^{*},l_{t}^{*},z_{t}^{*})$.
Thus, we can fix the value of $(a_{1t},a_{2t},b_{t})$ by choosing
arbitrary values $(c_{1t},c_{2t},c_{3t})\in\mathbb{R}^{3}$ that satisfies
$c_{2t}<c_{3t}$. In particular, we impose the following normalization
that corresponds to (N2) in \citet{chiappori2015nonparametric}.
\begin{assumption}
(Normalization) \label{A-2} The support $\mathcal{X}\times\mathcal{Z}$
includes two points $(m_{t1}^{*},k_{t}^{*},l_{t}^{*},z_{t}^{*})$
and $(m_{t0}^{*},k_{t}^{*},l_{t}^{*},z_{t}^{*})$ such that $c_{1t}=c_{2t}=0$
and $c_{3t}=1$ in (\ref{eq:norm}).
\end{assumption}
As \citet{chiappori2015nonparametric} demonstrates, this choice of
normalization makes the identification proofs transparent.

\subsection{Identification in a Parametric Example\label{subsec:example}}

Before presenting the nonparametric identification results, we demonstrate
our identification approach by applying it to a simple parametric
example. Consider a monopolistically competitive market where each
firm $i$ faces the following constant elastic inverse demand function:

\begin{equation}
p_{it}=\alpha_{t}(z_{it})+(\rho(z_{it})-1)y_{it},\label{eq:demand}
\end{equation}
where $\alpha_{t}(z_{it})$ and $0<\rho(z_{it})\le1$ are unknown
parameters.\footnote{The demand function (\ref{eq:demand}) can be derived from a constant
elasticity of substitution (CES) utility function; $a_{t}(z_{t})$
implicitly includes aggregate expenditure and an aggregate price index.} The markup equals $1/\rho(z_{it})$ and depends on the exogenous
scalar $z_{it}\in\mathcal{Z}:=\{1,0\}$ such that $z_{it}\perp\eta_{it}$.
Firm $i$ has a Cobb\textendash Douglas production function and $\omega_{it}$
follows a first-order autoregressive (AR(1)) process:
\begin{align}
y_{it} & =\theta_{0}+\theta_{m}m_{it}+\theta_{k}k_{it}+\theta_{l}l_{it}+\omega_{it},\nonumber \\
\omega_{it} & =h_{0}+h_{1}\omega_{it-1}+\eta_{it},\label{eq:AR1}
\end{align}
where $\{\theta_{0},\theta_{m},\theta_{k},\theta_{l},h_{0},h_{1}\}$
are unknown parameters. The firm's revenue function is expressed as:
\begin{equation}
r_{it}=\alpha_{t}(z_{it})+\rho(z_{it})\theta_{0}+\rho(z_{it})\theta_{m}m_{it}+\rho(z_{it})\theta_{k}k_{it}+\rho(z_{it})\theta_{l}l_{it}+\rho(z_{it})\omega_{it}+\varepsilon_{it}.\label{eq:r_t}
\end{equation}
The first-order condition for (\ref{eq:profit_maximization}),
\begin{equation}
\rho(z_{it})\theta_{m}=\frac{\exp(p_{t}^{m}+m_{it})}{\exp(r_{it}-\varepsilon_{it})},\label{eq:foc}
\end{equation}
determines the control function for $\omega_{it}$ as
\begin{align}
\omega_{it} & =\mathbb{M}_{t}^{-1}(m_{it},k_{it},l_{it},z_{it})=\beta_{t}(z_{it})+\beta_{m}(z_{it})m_{it}+\beta_{k}k_{it}+\beta_{l}l_{it}\label{eq:omega}
\end{align}
where $\beta_{t}(z_{it}):=\left(p_{t}^{m}-\alpha_{t}(z_{it})-\theta_{0}-\ln\rho(z_{it})\theta_{m}\right)/\rho(z_{it})$,
$\beta_{m}(z_{it}):=\left(1-\rho(z_{it})\theta_{m}\right)/\rho(z_{it})>0$,
$\beta_{k}:=-\theta_{k}$ and $\beta_{l}:=-\theta_{l}$. 

For notational brevity, assume that the support $\mathcal{X}\times\mathcal{Z}$
includes two points $(m_{t1}^{*},k_{t}^{*},l_{t}^{*},z_{t}^{*})=(0,0,0,0)$
and $(m_{t0}^{*},k_{t}^{*},l_{t}^{*},z_{t}^{*})=(1,0,0,0)$. Following
Assumption \ref{A-2}, we fix the location and scale of $f_{t}(\cdot)$
and $\mathbb{M}_{t}^{-1}(\cdot)$ by imposing the following normalization:
\begin{align}
0 & =f_{t}(0,0,0)=\theta_{0},\,0=\mathbb{M}_{t}^{-1}(0,0,0,0)=\beta_{t}(0),\nonumber \\
1 & =\mathbb{M}_{t}^{-1}(1,0,0,0)=\beta_{t}(0)+\beta_{m}(0)\label{eq:normalization_example}
\end{align}
which implies $\theta_{0}=0$, $\beta_{t}(0)=0$, and $\beta_{m}(0)=1$.

Our identification approach follows three steps.

\paragraph{Step 1: Identification of Measurement Errors}

The first step removes the measurement error $\varepsilon_{it}$ in
the spirit of \citet{ackerberg2015identification}. Substituting (\ref{eq:omega})
into (\ref{eq:r_t}) and using $\theta_{0}=0$, we obtain two expressions
of $r_{it}$ as follows:
\begin{align}
r_{it}= & \left(\alpha_{t}(z_{it})+\rho(z_{it})\beta_{t}(z_{it})\right)+\rho(z_{it})\left(\theta_{m}+\beta_{m}(z_{it})\right)m_{it}\nonumber \\
 & +\rho(z_{it})\left(\theta_{k}+\beta_{k}\right)k_{it}+\rho(z_{it})\left(\theta_{l}+\beta_{l}\right)l_{it}+\varepsilon_{it}\label{eq:expr1}\\
= & \phi(z_{it})+m_{it}+\varepsilon_{it},\label{eq:expr2}
\end{align}
where $\phi(z_{it}):=\alpha_{t}(z_{it})+\rho(z_{it})\beta_{t}(z_{it})$.
Applying the conditional moment restriction $E[\varepsilon_{it}|m_{t},z_{t}]=0$
for the second expression (\ref{eq:expr2}), we identify $\phi(z_{it})$,
$\bar{r}_{it}$ and $\varepsilon_{it}$ by
\[
\phi(z_{t})=E[r_{it}-m_{it}|m_{t},z_{t}],\bar{r}_{it}=\phi(z_{it})\text{ and }\varepsilon_{it}=r_{it}-m_{it}-\phi(z_{it}).
\]

\paragraph{Step 2: Identification of Control Function and TFP}

The second step identifies the control function $\mathbb{M}_{t}^{-1}(\cdot)$.
Substituting (\ref{eq:omega}) into the AR(1) process (\ref{eq:AR1})
leads to
\begin{equation}
\mathbb{M}_{t}^{-1}(m_{it},k_{it},l_{it},z_{it})=h_{0}+h_{1}\mathbb{M}_{t-1}^{-1}(m_{it-1},k_{it-1},l_{it-1},z_{it-1})+\eta_{it}.\label{eq:model_ex1}
\end{equation}
Since $\mathbb{M}_{t}^{-1}(m_{it},k_{it},l_{it},z_{it})$ is linear
in $m_{it}$ from (\ref{eq:omega}), we can rearrange (\ref{eq:model_ex1})
as: 
\begin{align}
m_{it} & =\gamma(z_{it},z_{t-1})+\gamma_{k}(z_{it})k_{it}+\gamma_{l}(z_{it})l_{it}+\delta_{m}(z_{it},z_{it-1})m_{it-1}\nonumber \\
 & +\delta_{k}(z_{it})k_{it-1}+\delta_{l}(z_{it})l_{it-1}+\tilde{\eta}_{it},\label{eq:m_model}
\end{align}
where 
\begin{align}
\gamma(z_{it},z_{it-1}) & :=\frac{h_{0}-\beta_{t}(z_{it})+h_{1}\beta_{t-1}(z_{it-1})}{\beta_{m}(z_{it})},\,\gamma_{k}(z_{it}):=-\frac{\beta_{k}}{\beta_{m}(z_{it})},\gamma_{l}(z_{it}):=-\frac{\beta_{l}}{\beta_{m}(z_{it})},\nonumber \\
\delta_{m}(z_{it},z_{it-1}) & :=\frac{h_{1}\beta_{m}(z_{it-1})}{\beta_{m}(z_{it})},\,\delta_{k}(z_{it}):=\frac{h_{1}\beta_{k}}{\beta_{m}(z_{it})},\,\delta_{l}:=\frac{h_{1}\beta_{l}}{\beta_{m}(z_{it})},\tilde{\eta}_{it}:=\frac{\eta_{it}}{\beta_{m}(z_{it})}.\label{eq:parameters}
\end{align}
For a given $(z_{it},z_{it-1})$, (\ref{eq:m_model}) is a linear
model. Since $E\left[\left.\tilde{\eta}_{it}\right|v_{it}\right]=E\left[\left.\eta_{it}\right|v_{it}\right]/\beta_{m}(z_{it})=0$,
where $v_{it}:=(k_{it},l_{it},x_{it-1},z_{it},z_{it-1})$, we can
identify $\{\gamma(z_{it},z_{t-1})$, $\gamma_{k}(z_{it})$, $\gamma_{l}(z_{it})$,
$\delta_{m}(z_{it},z_{it-1})$, $\delta_{k}(z_{it})$, $\delta_{l}(z_{it})\}$
in (\ref{eq:m_model}) from the conditional moment restriction $E\left[\left.\tilde{\eta}_{it}\right|v_{it}\right]=0$. 

From (\ref{eq:normalization_example}) and (\ref{eq:parameters}),
we identify the parameters of the control function (under the normalization
(\ref{eq:normalization_example})) as:
\begin{align*}
\beta_{t}(1) & =\gamma(0,0)-\gamma(1,0)\frac{\gamma_{k}(0)}{\gamma_{k}(1)},\ \beta_{m}(1)=\frac{\gamma_{k}(0)}{\gamma_{k}(1)},\ \beta_{k}=\gamma_{k}(0)\text{ and }\beta_{l}=\gamma_{l}(0).
\end{align*}

\paragraph{Step 3: Identification of Production Function and Markup}

The final step identifies the parameters of the demand and production
functions. Comparing the two expressions of $r_{it}$ in (\ref{eq:expr1})
and (\ref{eq:expr2}), we obtain the following relationships:
\begin{align}
\alpha_{t}(z_{it})+\rho(z_{it})\beta_{t}(z_{it}) & =\phi(z_{it}),\,\rho(z_{it})\left(\theta_{m}+\beta_{m}(z_{it})\right)=1,\nonumber \\
\theta_{k}=-\beta_{k} & \text{ and}\,\theta_{l}=-\beta_{l}.\label{eq:restriction_2nd}
\end{align}
Given that $(\beta_{t}(z_{t}),\beta_{m}(z_{t}),\beta_{k},\beta_{l})$
are identified in step 2, the first line in (\ref{eq:restriction_2nd})
contains four equations (two equations for two values of $z_{it}\in\{0,1\}$)
and five parameters $(\alpha_{t}(0),\alpha_{t}(1),\rho(0),\rho(1),\theta_{m})$.
Therefore, to identify these parameters, we need a further restriction.

Following \citet{gandhi2020identification}, we use as an additional
restriction the first-order condition for material (\ref{eq:foc}).
The first-order condition (\ref{eq:foc}) implies that the revenue
share of material expenditure on the right hand side of (\ref{eq:foc})
is a function of $z_{it}$. Using $\varepsilon_{it}$, we obtain the
revenue share of material expenditure $\exp(p_{t}^{m}+m_{it})/\exp(r_{it}-\varepsilon_{it})$
and identify it as a function of $z_{it}$ by taking its expectation
conditional on $z_{it}$:
\[
s(z_{t}):=E\left[\left.\frac{\exp(p_{t}^{m}+m_{it})}{\exp(\bar{r}_{it})}\right|z_{t}\right].
\]
Then, we obtain an additional restriction on the parameters:
\begin{equation}
\rho(z_{it})\theta_{m}=s(z_{it}).\label{eq:foc3}
\end{equation}
From (\ref{eq:restriction_2nd}) and (\ref{eq:foc3}), we identify
the parameters of the demand and production functions as follows
\begin{align*}
\rho(0) & =1-s(0),\,\rho(1)=\frac{1-s(1)}{\beta_{m}(1)},\,\\
\alpha_{t}(0) & =\phi(0),\,\alpha_{t}(1)=\phi(1)-\rho(1)\beta_{t}(1),\\
\theta_{0} & =0,\,\theta_{m}=\frac{s(0)}{1-s(0)},\,\theta_{k}=-\beta_{k}\text{ and }\theta_{l}=-\beta_{l}.
\end{align*}

Note that the parameters are identified under the scale and location
normalization of $f_{t}(\cdot)$ and $\mathbb{M}_{t}^{-1}(\cdot)$
in (\ref{eq:normalization_example}). Let $\theta_{i}$ ($i=0,m,k,l$)
and $\beta_{j}(z_{t})$ ($j=t,m,k,l$) be those parameters identified
above and let $\theta_{j}^{*}$ and $\beta_{i}^{*}(z_{t})$ be the
true parameters. Then, there exist unknown normalization parameters
$(a,b)\in\mathbb{R}\times\mathbb{R}_{+}$ such that $\theta_{0}=a+b\theta_{0}^{*}$,
$\beta_{t}=a+b\beta_{t}^{*}$, $\theta_{i}=b\theta_{i}^{*}$, $\beta_{j}(z_{t})=b\beta_{j}^{*}(z_{t})$.
We can fix the normalization by imposing further restrictions. For
instance, if constant returns to scale $\theta_{m}^{*}+\theta_{k}^{*}+\theta_{l}^{*}=1$
are imposed, then the scale parameter $b$ can be identified as follows:
\[
b=b\left(\theta_{m}^{*}+\theta_{k}^{*}+\theta_{l}^{*}\right)=\theta_{m}+\theta_{k}+\theta_{l}=\frac{s(0)}{1-s(0)}-\beta_{k}-\beta_{l}.
\]
 We discuss in subsection \ref{subsec:Normalization} additional assumptions
for fixing normalization.

The above identification argument is illustrative, but it relies on
the linearity of $\mathbb{M}_{t}^{-1}(m_{it},k_{it},l_{it},z_{it})$
in $m_{it}$, which holds only\textbf{ }under restrictive parametric
assumptions. Extending the argument, the following subsection establishes
nonparametric identification.

\subsection{Nonparametric Identification\label{subsec:Nonparametric}}

\subsubsection{Step 1: Identification of Measurement Error}

The first step removes the measurement error $\varepsilon_{it}$.
Substituting the control function $\omega_{it}=\mathbb{M}_{t}^{-1}(m_{it},k_{it},l_{it},z_{it})$,
the revenue function (\ref{rev}) can be written as:
\begin{align*}
r_{it} & =\varphi_{t}\left(f_{t}(x_{it})+\mathbb{M}_{t}^{-1}(x_{it},z_{it}),z_{it}\right)+\varepsilon_{it}\\
 & =\phi_{t}(x_{it},z_{it})+\varepsilon_{it},
\end{align*}
where $\phi_{t}(x_{t},z_{t}):=\varphi_{t}\left(f(x_{t})+\mathbb{M}_{t}^{-1}\left(x_{t},z_{t}\right),z_{t}\right)$.
From Assumption \ref{A-1}, $\phi_{t}(\cdot)$ is continuously differentiable.
From $E\left[\varepsilon_{it}|x_{t},z_{t}\right]=0$, we can identify
$\phi_{t}(\cdot)$, $\bar{r}_{it}$, and $\varepsilon_{it}$ as:
\begin{align}
\phi_{t}(x_{t},z_{t}) & =E\left[r_{it}|x_{t},z_{t}\right],\,\bar{r}_{it}=\phi_{t}(x_{it},z_{it})\text{ and }\varepsilon_{it}=r_{it}-\phi_{t}(x_{it},z_{it}).\label{eq:phi_t}
\end{align}

\begin{lem}
\label{lem:step2} Suppose that Assumptions \ref{A-1}\textendash \ref{A-data}
hold. Then, we can identify $\phi_{t}(\cdot)$, $\bar{r}_{it}$, and
$\varepsilon_{it}$ as in (\ref{eq:phi_t}).
\end{lem}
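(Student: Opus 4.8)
The plan is to use the control-function substitution to turn the revenue equation into an ordinary nonparametric mean regression, and then to invoke the conditional-mean-zero restriction on the measurement error. First I would record the representation underlying (\ref{eq:phi_t}): by Assumption \ref{A-1}(c), the material demand $m_{it}=\mathbb{M}_t(\omega_{it},k_{it},l_{it},z_{it})$ implied by the profit-maximization problem (\ref{eq:profit_maximization}) is strictly increasing and invertible in $\omega_{it}$, so $\omega_{it}=\mathbb{M}_t^{-1}(m_{it},k_{it},l_{it},z_{it})$ holds identically. Substituting this into the true revenue $\bar r_{it}=\varphi_t(f_t(x_{it})+\omega_{it},z_{it})$ gives $\bar r_{it}=\varphi_t\!\left(f_t(x_{it})+\mathbb{M}_t^{-1}(x_{it},z_{it}),z_{it}\right)=\phi_t(x_{it},z_{it})$, and therefore $r_{it}=\phi_t(x_{it},z_{it})+\varepsilon_{it}$ with $\phi_t$ as defined in the text. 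I would also note that $\phi_t$ is continuously differentiable, being a composition of $f_t$ and $\mathbb{M}_t^{-1}$ (Assumptions \ref{A-1}(a),(c)) with $\varphi_t$, whose differentiability follows from that of $\varphi_t^{-1}$ together with $\partial\varphi_t/\partial y>0$ in Assumption \ref{A-1}(b).

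The second step is a single application of the law of iterated expectations. Because $\phi_t(x_{it},z_{it})$ is measurable with respect to $(x_t,z_t)$ and $E[\varepsilon_{it}\mid x_t,z_t]=0$ by Assumption \ref{A-1}(d), taking conditional expectations in $r_{it}=\phi_t(x_{it},z_{it})+\varepsilon_{it}$ yields $E[r_{it}\mid x_t,z_t]=\phi_t(x_t,z_t)$. By Assumption \ref{A-data}(b) the conditional expectation on the left is known, so $\phi_t(\cdot)$ is identified. Evaluating the identified $\phi_t$ at the realized $(x_{it},z_{it})$ identifies $\bar r_{it}=\phi_t(x_{it},z_{it})$, and subtracting from the observed $r_{it}$ identifies $\varepsilon_{it}=r_{it}-\phi_t(x_{it},z_{it})$; these are exactly the expressions in (\ref{eq:phi_t}).

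There is no substantive obstacle: the entire content of the lemma is the reduction to a regression equation, and the one point deserving a word of care is the legitimacy of the control-function substitution $\omega_{it}=\mathbb{M}_t^{-1}(x_{it},z_{it})$, which is precisely what the strict monotonicity and invertibility in Assumption \ref{A-1}(c) provide. This step is the analogue of the first stage in \citet{ackerberg2015identification}; its role is simply to purge the measurement error $\varepsilon_{it}$ so that the transformation-model argument of Step 2 can be carried out on $\bar r_{it}$ rather than on $r_{it}$.
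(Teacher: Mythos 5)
Your argument is correct and coincides with the paper's own reasoning: the text preceding Lemma \ref{lem:step2} performs exactly the same control-function substitution $\omega_{it}=\mathbb{M}_{t}^{-1}(x_{it},z_{it})$ to write $r_{it}=\phi_{t}(x_{it},z_{it})+\varepsilon_{it}$ and then reads off $\phi_{t}(x_{t},z_{t})=E[r_{it}|x_{t},z_{t}]$ from the conditional-mean-zero restriction in Assumption \ref{A-1}(d) together with Assumption \ref{A-data}(b). Your additional remarks on the continuous differentiability of $\phi_{t}$ and on the role of Assumption \ref{A-1}(c) in legitimizing the inversion are consistent with the paper and add nothing that conflicts with it.
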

Hereafter, $\phi_{t}(\cdot)$, $\bar{r}_{it}$, and $\varepsilon_{it}$
are assumed to be known.\footnote{As will be shown, $\omega_{it}$ is identified in step 2 independently
of step 1. Therefore, one can think of an alternative approach that
first identifies $\omega_{it}$ and then regresses $r_{it}$ on $(x_{it},z_{it},\omega_{it})$
to obtain $E\left[r_{it}|x_{it},z_{it},\omega_{it}\right]$ instead
of $E\left[r_{it}|x_{it},z_{it}\right]$. However, it is not possible
to identify $E\left[r_{it}|x_{it},z_{it},\omega_{it}\right]$ because
$\omega_{it}=\mathbb{M}_{t}^{-1}(x_{it},z_{it})$ is a deterministic
function of $(x_{it},z_{it}$). Once $(x_{it},z_{it})$ are conditioned,
there is no remaining source of variation in $\omega_{it}$.}

\subsubsection{Step 2: Identification of Control Function and TFP}

From (\ref{omega}), the control function $\omega_{it}=\mathbb{M}_{t}^{-1}(m_{it},k_{it},l_{it},z_{it})$
satisfies
\begin{align}
\mathbb{M}_{t}^{-1}(m_{it},k_{it},l_{it},z_{it}) & =\bar{h}_{t}\left(x_{it-1},z_{it-1}\right)+\eta_{it},\label{eq:model}
\end{align}
where $\bar{h}_{t}\left(x_{t-1},z_{t-1}\right):=h\left(\mathbb{M}_{t-1}^{-1}(m_{t-1},k_{t-1},l_{t-1},z_{t-1})\right)$.
As $\partial\mathbb{M}_{t}^{-1}/\partial m_{it}>0$, given the values
of $(k_{it},l_{it},z_{it})$, the dependent variable in (\ref{eq:model})
is a monotonic transformation of $m_{it}$. Therefore, the model (\ref{eq:model})
belongs to a class of transformation models, the identification of
which \citet{chiappori2015nonparametric} analyze. 

We make the following assumption, which corresponds to Assumptions
A1\textendash A3, A5, and A6 in \citet{chiappori2015nonparametric}.\footnote{Assumption \ref{A-1} (c) corresponds to Assumption A4 of \citet{chiappori2015nonparametric}.}
\begin{assumption}
\label{A-3} (a) The distribution $G_{\eta}(\cdot)$ of $\eta$ is
absolutely continuous with a density function $g_{\eta}(\cdot)$ that
is continuous on its support. (b) $\eta_{t}$ is independent of $v_{t}:=(k_{t},l_{t},z_{t},x_{t-1},z_{t-1})'\in\mathcal{V}:=\mathcal{K}\times\mathcal{L}\times\mathcal{Z}\times\mathcal{X}\times\mathcal{Z}$
with $E[\eta_{t}|v_{t}]=0$. (c) $v_{t}$ is continuously distributed
on $\mathcal{V}$. (d) Support $\varOmega$ of $\omega_{t}$ is an
interval $[\text{\ensuremath{\underbar{\ensuremath{\omega}}}},\bar{\omega}]\subset\mathbb{R}$
where $\text{\ensuremath{\underbar{\ensuremath{\omega}}}}<0$ and
$1<\bar{\omega}$. (e) $h(\cdot)$ is continuously differentiable
with respect to $\omega$ on $\Omega$. (f) The set $\mathcal{A}_{q_{t-1}}:=\{(x_{t-1},z_{t-1})\in\mathcal{X}\times\mathcal{Z}:\partial G_{m_{t}|v_{t}}(m_{t}|v_{t})/\partial q_{t-1}\neq0\text{ for all }(m_{t},k_{t},l_{t},z_{t})\in\mathcal{M}\times\mathcal{K}\times\mathcal{L}\times\mathcal{Z}\}$
is nonempty for some $q_{t-1}\in\{k_{t-1},l_{t-1},m_{t-1},z_{t-1}\}$. 
\end{assumption}
We can relax Assumption \ref{A-3}(b) by allowing $z_{t}$ and $l_{t}$
to correlate with $\eta_{t}$, which we discuss this in subsection
\ref{subsec:Alternative-Settings}. Assumption \ref{A-3}(d) holds
without loss of generality because we can choose any two points on
the support of $\omega_{t}$ without changing the essence of our argument.
Assumption \ref{A-3}(f) can be interpreted as a generalized rank
condition, thus implying that a given exogenous variable $q_{t-1}$
has a causal impact on $(m_{t},k_{t},l_{t},z_{t})$. Suppose $g_{\eta}\left(\eta\right)>0$
for all $\eta\in\mathbb{R}$. Then, as will be shown below (in (\ref{eq:dG3})),
Assumption \ref{A-3}(f) holds if and only if
\begin{align*}
\frac{\partial\bar{h}\left(\tilde{x}_{t-1},\tilde{z}_{t-1}\right)}{\partial q_{t-1}} & =h'\left(\mathbb{M}_{t-1}^{-1}(\tilde{x}_{t-1},\tilde{z}_{t-1})\right)\frac{\partial\mathbb{M}_{t-1}^{-1}(\tilde{x}_{t-1},\tilde{z}_{t-1})}{\partial q_{t-1}}\neq0
\end{align*}
for some $(\tilde{x}_{t-1},\tilde{z}_{t-1})$ and some $q_{t-1}\in\{k_{t-1},l_{t-1},m_{t-1},z_{t-1}\}$.
This condition is equivalent to (1) $\omega_{t-1}$ has a causal impact
on $\omega_{t}$ ($h'(\omega_{t-1})\neq0$) and (2) $q_{t-1}$ has
a causal impact on $m_{t-1}$, ($\partial\mathbb{M}_{t-1}/\partial q_{t-1}\neq0$).
These conditions must be satisfied for at least one exogenous variable
$q_{t-1}$ and some point $(\tilde{x}_{t-1},\tilde{z}_{t-1})$.

Proposition \ref{P-step1} shows that the control function is identified
from the distribution of $(m_{it},v_{it})$.
\begin{prop}
\label{P-step1} Suppose that Assumptions \ref{A-1}\textendash \ref{A-3}
hold. Then, we can identify $\mathbb{M}_{t}^{-1}(m_{t},k_{t},l_{t},z_{t})$
up to scale and location and $G_{\eta}(\cdot)$ up to the scale normalization
of $\eta_{t}$.
\end{prop}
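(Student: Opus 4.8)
The plan is to read \eqref{eq:model} as a transformation model and adapt the identification argument of \citet{chiappori2015nonparametric} to the present notation. Write $w:=(k,l,z)$ and $s:=(x_{t-1},z_{t-1})$, so that $v_{t}=(w,s)$, and abbreviate $T_{t}(m,w):=\mathbb{M}_{t}^{-1}(m,k,l,z)$ and $H_{t}(s):=\bar{h}_{t}(x_{t-1},z_{t-1})$. Since $T_{t}(\cdot,w)$ is strictly increasing in $m$ by Assumption \ref{A-1}(c) and $\eta_{t}\Perp v_{t}$ by Assumption \ref{A-3}(b), the conditional distribution known from Assumption \ref{A-data}(a) satisfies
\begin{equation*}
G_{m_{t}|v_{t}}(m\mid v)=\Pr\!\bigl(T_{t}(m_{t},w)\le T_{t}(m,w)\,\big|\,v\bigr)=G_{\eta}\!\bigl(T_{t}(m,w)-H_{t}(s)\bigr),
\end{equation*}
so the task is to recover $T_{t}$, $H_{t}$ and $G_{\eta}$ from the left-hand side.

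First I would differentiate this identity. Writing $u:=T_{t}(m,w)-H_{t}(s)$, Assumptions \ref{A-1} and \ref{A-3}(a),(c),(e) give $\partial_{m}G_{m_{t}|v_{t}}=g_{\eta}(u)\,\partial_{m}T_{t}(m,w)$, $\partial_{k}G_{m_{t}|v_{t}}=g_{\eta}(u)\,\partial_{k}T_{t}(m,w)$ (and likewise with $k$ replaced by $l$ or $z$), and $\partial_{q_{t-1}}G_{m_{t}|v_{t}}=-g_{\eta}(u)\,\partial_{q_{t-1}}H_{t}(s)$ for $q_{t-1}\in\{m_{t-1},k_{t-1},l_{t-1},z_{t-1}\}$. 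Because $g_{\eta}(u)>0$ and $\partial_{m}T_{t}>0$ on the effective domain, we may divide through by $\partial_{m}G_{m_{t}|v_{t}}>0$, which cancels the common factor $g_{\eta}(u)$ and yields
\begin{equation*}
\frac{\partial_{k}G_{m_{t}|v_{t}}}{\partial_{m}G_{m_{t}|v_{t}}}=\frac{\partial_{k}T_{t}(m,w)}{\partial_{m}T_{t}(m,w)},\qquad\frac{\partial_{q_{t-1}}G_{m_{t}|v_{t}}}{\partial_{m}G_{m_{t}|v_{t}}}=-\frac{\partial_{q_{t-1}}H_{t}(s)}{\partial_{m}T_{t}(m,w)}.
\end{equation*}
The second ratio is multiplicatively separable --- a function of $s$ times a function of $(m,w)$. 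By Assumption \ref{A-3}(f) there is a component $q_{t-1}$ and a point $s^{\dagger}\in\mathcal{A}_{q_{t-1}}$ with $\partial_{q_{t-1}}G_{m_{t}|v_{t}}\neq0$ for every $(m,k,l,z)$; since $g_{\eta}>0$ this forces $C:=-\partial_{q_{t-1}}H_{t}(s^{\dagger})\neq0$, and fixing $s=s^{\dagger}$ then identifies
\begin{equation*}
\partial_{m}T_{t}(m,w)=C\left.\left(\frac{\partial_{q_{t-1}}G_{m_{t}|v_{t}}}{\partial_{m}G_{m_{t}|v_{t}}}\right)^{-1}\right|_{s=s^{\dagger}}
\end{equation*}
for all $(m,w)$, up to the single unknown constant $C$. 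Substituting this into the first ratio identifies $\partial_{k}T_{t}$, $\partial_{l}T_{t}$, $\partial_{z}T_{t}$ on the whole support, also up to $C$.

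Next I would reconstruct $\mathbb{M}_{t}^{-1}$ by integration and pin down $C$ using the normalization. Integrating $\partial_{m}T_{t}(\cdot,w^{*})$ from $m_{t0}^{*}$ and using $T_{t}(m_{t0}^{*},w^{*})=c_{2t}=0$ from \eqref{eq:norm} and Assumption \ref{A-2} gives $T_{t}(m,w^{*})=C\int_{m_{t0}^{*}}^{m}(\cdot)\,dm'$; evaluating at $m_{t1}^{*}$ and imposing $T_{t}(m_{t1}^{*},w^{*})=c_{3t}=1$ determines $C$ and hence fixes the scale. Integrating the now-identified derivatives $\partial_{k}T_{t},\partial_{l}T_{t},\partial_{z}T_{t}$ along a path from $w^{*}$ to $w$ at $m=m_{t0}^{*}$ recovers $T_{t}(m_{t0}^{*},w)$ (path-independence being automatic, as these are partials of one function), and then integrating $\partial_{m}T_{t}(\cdot,w)$ recovers $T_{t}(m,w)=\mathbb{M}_{t}^{-1}(m,k,l,z)$ on all of $\mathcal{X}\times\mathcal{Z}$ --- ``up to scale and location'' precisely in that $C$ and the base value $c_{2t}$ were set by the arbitrary normalization \eqref{eq:norm}. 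Finally, with $\mathbb{M}_{t}^{-1}$ known, Assumption \ref{A-3}(b) gives $\bar{h}_{t}(s)=E[\mathbb{M}_{t}^{-1}(m_{t},k_{t},l_{t},z_{t})\mid v_{t}=(w,s)]=\int\mathbb{M}_{t}^{-1}(m,w)\,dG_{m_{t}|v_{t}}(m\mid w,s)$, which is known, so $\eta_{it}=\mathbb{M}_{t}^{-1}(m_{it},k_{it},l_{it},z_{it})-\bar{h}_{t}(x_{it-1},z_{it-1})$ is identified observation by observation, delivering $G_{\eta}$ --- necessarily only up to the scale normalization inherited from that of $\mathbb{M}_{t}^{-1}$.

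The hard part is the separation step: dividing by $\partial_{m}G_{m_{t}|v_{t}}$ must cleanly decouple the current-period transformation $\mathbb{M}_{t}^{-1}(\cdot)$ from the lagged index $\bar{h}_{t}(\cdot)$, and the relevance/rank condition Assumption \ref{A-3}(f) is exactly what turns the resulting separable ratio into an identification of $\partial_{m}\mathbb{M}_{t}^{-1}$ up to a single global constant rather than up to an unknown function of $s$. Secondary but necessary care goes into justifying the differentiations and the cancellation of $g_{\eta}(u)$ on the effective domain, and into ensuring the supports are connected enough that integration from the normalization points of \eqref{eq:norm} reaches every point of $\mathcal{X}\times\mathcal{Z}$; this is what Assumptions \ref{A-1}(c) and \ref{A-3}(a),(c)--(e) (matching Assumptions A1--A3, A5, A6 of \citet{chiappori2015nonparametric}) provide.
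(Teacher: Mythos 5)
Your proposal is correct and follows essentially the same route as the paper's proof: differentiate the conditional CDF $G_{m_{t}|v_{t}}=G_{\eta}(\mathbb{M}_{t}^{-1}-\bar{h}_{t})$, take ratios of derivatives to cancel $g_{\eta}$, invoke Assumption \ref{A-3}(f) to fix a lagged coordinate $q_{t-1}$ and a point where the lagged derivative is nonvanishing so that everything is pinned down up to the single constant $-\partial\bar{h}/\partial q_{t-1}$, determine that constant from the two-point normalization of Assumption \ref{A-2}, integrate along a path to recover $\mathbb{M}_{t}^{-1}$, and then back out $\bar{h}_{t}$ via the conditional mean and hence $\eta_{it}$ and $G_{\eta}$. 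The only cosmetic difference is that you divide by $\partial_{m}G$ rather than $\partial_{q_{t-1}}G$ before rearranging, but your final expression for $\partial_{m}\mathbb{M}_{t}^{-1}$ coincides with the paper's equation (\ref{eq:dM_dq_identified}).
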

\begin{proof}
The proof follows the proof of Theorem 1 in \citet{chiappori2015nonparametric}.
In view of equation (\ref{eq:model}), the conditional distribution
of $m_{t}$ given $v_{t}$ satisfies 
\begin{align*}
G_{m_{t}|v_{t}}(m_{t}|v_{t}) & =G_{\eta_{t}|v_{t}}\left(\mathbb{M}_{t}^{-1}(m_{t},k_{t},l_{t},z_{t})-\bar{h}_{t}\left(x_{t-1},z_{t-1}\right)|v_{t}\right)\\
 & =G_{\eta}\left(\mathbb{M}_{t}^{-1}(m_{t},k_{t},l_{t},z_{t})-\bar{h}_{t}\left(x_{t-1},z_{t-1}\right)\right),
\end{align*}
where the second equality follows from $\eta_{t}\perp v_{t}$ in Assumption
\ref{A-3}(b). Let $q_{t}\in\{m_{t},k_{t},l_{t},z_{t}\}$ and $q_{t-1}\in\{k_{t-1},l_{t-1},m_{t-1},z_{t-1}\}$.
The derivatives of $G_{m_{t}|v_{t}}(m_{t}|v_{t})$ are
\begin{align}
\frac{\partial G_{m_{t}|v_{t}}\left(m_{t}|v_{t}\right)}{\partial q_{t}} & =\frac{\partial\mathbb{M}_{t}^{-1}(m_{t},k_{t},l_{t},z_{t})}{\partial q_{t}}g_{\eta}\left(\mathbb{M}_{t}^{-1}(m_{t},k_{t},l_{t},z_{t})-\bar{h}_{t}\left(x_{t-1},z_{t-1}\right)\right),\label{eq:dG2}\\
\frac{\partial G_{m_{t}|v_{t}}\left(m_{t}|v_{t}\right)}{\partial q_{t-1}} & =-\frac{\partial\bar{h}\left(x_{t-1},z_{t-1}\right)}{\partial q_{t-1}}g_{\eta}\left(\mathbb{M}_{t}^{-1}(m_{t},k_{t},l_{t},z_{t})-\bar{h}_{t}\left(x_{t-1},z_{t-1}\right)\right).\label{eq:dG3}
\end{align}
Using Assumption \ref{A-3}(f), we can choose $q_{t-1}\in\{k_{t-1},l_{t-1},m_{t-1},z_{t-1}\}$
and $(\tilde{x}_{t-1},\tilde{z}_{t-1})\in\mathcal{A}_{q_{t-1}}$ such
that $\partial G_{m_{t}|v_{t}}\left(m_{t}|k_{t},l_{t},z_{t},\tilde{x}_{t-1},\tilde{z}_{t-1}\right)/\partial q_{t-1}\neq0$
for all $(m_{t},k_{t},l_{t},z_{t})\in\mathcal{M}\times\mathcal{K}\times\mathcal{L}\times\mathcal{Z}$. 

Dividing (\ref{eq:dG2}) by (\ref{eq:dG3}), we derive
\begin{align}
\frac{\partial\mathbb{M}_{t}^{-1}(m_{t},k_{t},l_{t},z_{t})}{\partial q_{t}} & =-\frac{\partial\bar{h}\left(\tilde{x}_{t-1},\tilde{z}_{t-1}\right)}{\partial q_{t-1}}\frac{\partial G_{m_{t}|v_{t}}\left(m_{t}|k_{t},l_{t},z_{t},\tilde{x}_{t-1},\tilde{z}_{t-1}\right)/\partial q_{t}}{\partial G_{m_{t}|v_{t}}\left(m_{t}|k_{t},l_{t},z_{t},\tilde{x}_{t-1},\tilde{z}_{t-1}\right)/\partial q_{t-1}}.\label{eq:dM/dq}
\end{align}
Then, from (\ref{eq:dM/dq}) for $q_{t}=m_{t}$ and the normalization
in Assumption \ref{A-2}, we obtain
\begin{align}
1 & =\mathbb{M}_{t}^{-1}(m_{t1}^{*},k_{t}^{*},l_{t}^{*},z_{t}^{*})-\mathbb{M}_{t}^{-1}(m_{t0}^{*},k_{t}^{*},l_{t}^{*},z_{t}^{*})\nonumber \\
 & =-\frac{1}{S_{q_{t-1}}}\frac{\partial\bar{h}\left(\tilde{x}_{t-1},\tilde{z}_{t-1}\right)}{\partial q_{t-1}},\label{eq:hbar}
\end{align}
where
\[
S_{q_{t-1}}:=\left(\int_{m_{t0}^{*}}^{m_{t1}^{*}}\frac{\partial G_{m_{t}|v_{t}}\left(m|k_{t}^{*},l_{t}^{*},z_{t}^{*},\tilde{x}_{t-1},\tilde{z}_{t-1}\right)/\partial m_{t}}{\partial G_{m_{t}|v_{t}}\left(m|k_{t}^{*},l_{t}^{*},z_{t}^{*},\tilde{x}_{t-1},\tilde{z}_{t-1}\right)/\partial q_{t-1}}dm\right)^{-1}.
\]
Then, we identify $\partial\bar{h}\left(\tilde{x}_{t-1},\tilde{z}_{t-1}\right)/\partial q_{t-1}=-S_{q_{t-1}}$.
Substituting this into (\ref{eq:dM/dq}), $\partial\mathbb{M}_{t}^{-1}(m_{t},k_{t},l_{t},z_{t})/\partial q_{t}$
for $q_{t}\in\{m_{t},k_{t},l_{t},z_{t}\}$ are identified as follows:
\begin{align}
\frac{\partial\mathbb{M}_{t}^{-1}(m_{t},k_{t},l_{t},z_{t})}{\partial q_{t}} & =S_{q_{t-1}}\frac{\partial G_{m_{t}|v_{t}}\left(m_{t}|k_{t},l_{t},z_{t},\tilde{x}_{t-1},\tilde{z}_{t-1}\right)/\partial q_{t}}{\partial G_{m_{t}|v_{t}}\left(m_{t}|k_{t},l_{t},z_{t},\tilde{x}_{t-1},\tilde{z}_{t-1}\right)/\partial q_{t-1}}.\label{eq:dM_dq_identified}
\end{align}
Integrating (\ref{eq:dM_dq_identified}) with respective to $q_{t}\in\{m_{t},k_{t},l_{t},z_{t}\}$
obtains 
\begin{align}
\mathbb{M}_{t}^{-1}(m_{t},k_{t},l_{t},z_{t})= & \mathbb{M}_{t}^{-1}(m_{t},k_{t},l_{t},z_{t})-\mathbb{M}_{t}^{-1}(m_{t0}^{*},k_{t},l_{t},z_{t})\nonumber \\
+ & \mathbb{M}_{t}^{-1}(m_{t0}^{*},k_{t},l_{t},z_{t})-\mathbb{M}_{t}^{-1}(m_{t0}^{*},k_{t}^{*},l_{t},z_{t})\nonumber \\
+ & \mathbb{M}_{t}^{-1}(m_{t0}^{*},k_{t}^{*},l_{t},z_{t})-\mathbb{M}_{t}^{-1}(m_{t0}^{*},k_{t}^{*},l_{t}^{*},z_{t})\nonumber \\
+ & \mathbb{M}_{t}^{-1}(m_{t0}^{*},k_{t}^{*},l_{t}^{*},z_{t})-\mathbb{M}_{t}^{-1}(m_{t0}^{*},k_{t}^{*},l_{t}^{*},z_{t}^{*})\nonumber \\
= & \int_{m_{t0}^{*}}^{m_{t}}\frac{\partial\mathbb{M}_{t}^{-1}(s,k_{t},l_{t},z_{t})}{\partial m_{t}}ds+\int_{k_{t}^{*}}^{k_{t}}\frac{\partial\mathbb{M}_{t}^{-1}(m_{t0}^{*},s,l_{t},z_{t})}{\partial k_{t}}ds\nonumber \\
+ & \int_{l_{t}^{*}}^{l_{t}}\frac{\partial\mathbb{M}_{t}^{-1}(m_{t0}^{*},k_{t}^{*},s,z_{t})}{\partial l_{t}}ds+\int_{z_{t}^{*}}^{z_{t}}\frac{\partial\mathbb{M}_{t}^{-1}(m_{t0}^{*},k_{t}^{*},l_{t}^{*},s)}{\partial z_{t}}ds,\label{eq:M_t_const}
\end{align}
where the first equality follows from $\mathbb{M}_{t}^{-1}(m_{t0}^{*},k_{t}^{*},l_{t}^{*},z_{t}^{*})=0$
in Assumption \ref{A-2}. Substituting the identified derivatives
of $\mathbb{M}_{t}^{-1}(\cdot)$ in (\ref{eq:dM_dq_identified}) into
(\ref{eq:M_t_const}), we can identify $\mathbb{M}_{t}^{-1}(m_{t},k_{t},l_{t},z_{t})$
for all $\left(m_{t},k_{t},l_{t},z_{t}\right)$.

Finally, from $\omega_{it}=\mathbb{M}_{t}^{-1}(m_{it},k_{it},l_{it},z_{it})$,
we can identify $\bar{h}_{t}(x_{t-1},z_{t-1})=E\left[\omega_{it}|x_{t-1},z_{t-1}\right]$
and $\eta_{it}=\omega_{it}-\bar{h}_{t}(x_{it-1},z_{it-1})$. Thus,
we can identify the distribution of $\eta_{t}$, $G_{\eta_{t}}(\eta)$. 
\end{proof}

\subsubsection{Step 3: Identification of Production Function and Markup}

The final step identifies production function, markup and other remaining
objects. From $\bar{r}=\phi_{t}(x_{t},z_{t})=\varphi_{t}(f_{t}(x_{t})+\mathbb{M}_{t}^{-1}\left(x_{t},z_{t}\right),z_{t})$
and the monotonicity of $\varphi_{t}$, differentiating $\varphi_{t}^{-1}(\phi(x_{t},z_{t}),z_{t})=f_{t}(x_{t})+\mathbb{M}_{t}^{-1}\left(x_{t},z_{t}\right)$
with respect to $q_{t}\in\{m_{t},k_{t},l_{t}\}$ and $z_{t}$ gives:
\begin{align}
\frac{\partial\varphi_{t}^{-1}(\bar{r}_{t},z_{t})}{\partial\bar{r}_{t}}\frac{\partial\phi_{t}(x_{t},z_{t})}{\partial q_{t}} & =\frac{\partial f_{t}(x_{t})}{\partial q_{t}}+\frac{\partial\mathbb{M}_{t}^{-1}\left(x_{t},z_{t}\right)}{\partial q_{t}},\label{eq:phi_qt}\\
\frac{\partial\varphi_{t}^{-1}(\bar{r}_{t},z_{t})}{\partial\bar{r}_{t}}\frac{\partial\phi_{t}(x_{t},z_{t})}{\partial z_{t}} & =\frac{\partial\mathbb{M}_{t}^{-1}\left(x_{t},z_{t}\right)}{\partial z_{t}}-\frac{\partial\varphi_{t}^{-1}(\bar{r}_{t},z_{t})}{\partial z_{t}}.\label{eq:phi_zt}
\end{align}
Note that $\partial\varphi_{t}^{-1}(\bar{r}_{t},z_{t})/\partial\bar{r}_{t}=\left(\partial\varphi_{t}(y_{t},z_{t})/\partial y_{t}\right)^{-1}$
represents the markup from (\ref{eq:markup_foc}). If the markup $\partial\varphi_{t}^{-1}(\bar{r}_{t},z_{t})/\partial\bar{r}_{t}$
were known, then equations (\ref{eq:phi_qt}) and (\ref{eq:phi_zt})
could identify $\partial f_{t}(x_{t})/\partial q_{t}$ and $\partial\varphi_{t}^{-1}(\bar{r}_{t},z_{t})/\partial z_{t}$
given that $\mathbb{M}_{t}^{-1}(x_{t},z_{t})$ is identified. However,
since the markup is unknown, identification requires further restriction.
Following \citet{gandhi2020identification}, we use the first-order
condition with respect to the material as an additional restriction.
\begin{assumption}
\label{A-FOC} The first-order condition with respect to material
for the profit maximization problem (\ref{eq:profit_maximization})
\begin{equation}
\frac{\partial f_{t}(x_{it})}{\partial m_{it}}=\frac{\partial\varphi_{t}^{-1}(\bar{r}_{it},z_{it})}{\partial\bar{r}_{t}}\text{\ensuremath{\frac{\exp(p_{t}^{m}+m_{it})}{\exp\left(\bar{r}_{it}\right)}}}\label{eq:foc_m}
\end{equation}
holds for all firms.
\end{assumption}
Rearranging the first-order condition, we obtain the Hall-De Loecker-Warzynski
markup equation:
\begin{equation}
\frac{\partial\varphi_{t}^{-1}(\bar{r}_{it},z_{it})}{\partial\bar{r}_{t}}=\frac{\partial f_{t}(x_{it})/\partial m_{it}}{\exp(p_{t}^{m}+m_{it})/\exp\left(\bar{r}_{it}\right)}.\label{eq:DLW}
\end{equation}

We establish the following proposition. 
\begin{prop}
\label{P-step3} Suppose that Assumptions \ref{A-1}\textendash \ref{A-FOC}
hold. Then, we can identify $\varphi_{t}^{-1}(\cdot)$ and $f_{t}(\cdot)$
up to scale and location and each firm's markup $\partial\varphi_{t}^{-1}(\bar{r}_{it},z_{it})/\partial\bar{r}_{t}$
up to scale. 
\end{prop}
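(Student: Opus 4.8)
The plan is to exploit the three ingredients already in hand---the revenue function $\phi_t$ together with $\bar r_{it}$ and $\varepsilon_{it}$ from Lemma~\ref{lem:step2}; the control function $\mathbb{M}_t^{-1}$, known up to the common scale $b_t$ and a location constant by Proposition~\ref{P-step1}; and the material revenue share $s_t(x_t,z_t):=\exp(p_t^m+m_t)/\exp(\bar r_t)$, which is known because $\exp(p_t^m+m_{it})$ is data (Assumption~\ref{A-data}(c)) and $\bar r_{it}=\phi_t(x_{it},z_{it})$---together with the first-order condition of Assumption~\ref{A-FOC}. Writing $\mu_t(\bar r_t,z_t):=\partial\varphi_t^{-1}(\bar r_t,z_t)/\partial\bar r_t$ for the markup, the argument has three steps: (i) solve one linear equation for $\mu_t$ and then read off every output elasticity $\partial f_t(x_t)/\partial q_t$; (ii) integrate those elasticities from the normalization point to recover $f_t$; (iii) recover $\varphi_t^{-1}$ from the identity $\varphi_t^{-1}(\phi_t(x_t,z_t),z_t)=f_t(x_t)+\mathbb{M}_t^{-1}(x_t,z_t)$.

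\textbf{Step (i).} Setting $q_t=m_t$ in (\ref{eq:phi_qt}) and substituting the first-order condition (\ref{eq:foc_m}) in the form $\partial f_t(x_t)/\partial m_t=\mu_t(\bar r_t,z_t)\,s_t(x_t,z_t)$, the unknown $\partial f_t/\partial m_t$ cancels and leaves
\[
\mu_t(\bar r_t,z_t)\left(\frac{\partial\phi_t(x_t,z_t)}{\partial m_t}-s_t(x_t,z_t)\right)=\frac{\partial\mathbb{M}_t^{-1}(x_t,z_t)}{\partial m_t},
\]
so $\mu_t(\bar r_t,z_t)=(\partial\mathbb{M}_t^{-1}/\partial m_t)/(\partial\phi_t/\partial m_t-s_t)$. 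The denominator equals $\mu_t^{-1}\,\partial\mathbb{M}_t^{-1}/\partial m_t$, which is strictly positive by $\partial\varphi_t/\partial y_t>0$ (Assumption~\ref{A-1}(b)) and $\partial\mathbb{M}_t^{-1}/\partial m_t>0$ (Assumption~\ref{A-1}(c)), so the quotient is well defined; since $\phi_t$ and $s_t$ are known exactly and $\partial\mathbb{M}_t^{-1}/\partial m_t$ is known up to $b_t$, the markup is identified up to the scale $b_t$. The right-hand side depends on $x_t$ only through $(\phi_t(x_t,z_t),z_t)$, hence is constant on the level sets of $\phi_t(\cdot,z_t)$, and may be evaluated at any $x_t$ with $\phi_t(x_t,z_t)=\bar r_t$; this identifies $\mu_t$---and therefore each firm's markup $\mu_t(\bar r_{it},z_{it})$---on the joint support of $(\bar r_t,z_t)$. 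Substituting $\mu_t$ into (\ref{eq:foc_m}) yields $\partial f_t/\partial m_t$, and into (\ref{eq:phi_qt}) for $q_t\in\{k_t,l_t\}$ yields $\partial f_t/\partial k_t$ and $\partial f_t/\partial l_t$, all up to $b_t$.

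\textbf{Steps (ii)--(iii).} Since $\mathcal X=\mathcal M\times\mathcal K\times\mathcal L$ and the normalization (Assumption~\ref{A-2}, with (\ref{eq:norm})) sets $f_t(m_{t0}^*,k_t^*,l_t^*)=c_{1t}=0$, integrating the identified partials $\partial f_t/\partial q_t$ along a path in $\mathcal X$ from $(m_{t0}^*,k_t^*,l_t^*)$ to $x_t$ identifies $f_t(x_t)$ up to scale ($b_t$) and location. For any $(\bar r_t,z_t)$ in the joint support, choosing $x_t$ with $\phi_t(x_t,z_t)=\bar r_t$ and using $\varphi_t^{-1}(\bar r_t,z_t)=f_t(x_t)+\mathbb{M}_t^{-1}(x_t,z_t)$ then identifies $\varphi_t^{-1}$ up to scale and location; the value does not depend on the chosen $x_t$, and at $(\bar r_{t0}^*,z_t^*)$ with $\bar r_{t0}^*:=\phi_t(m_{t0}^*,k_t^*,l_t^*,z_t^*)$ it equals $c_{1t}+c_{2t}=0$, consistent with the normalization. (Equation (\ref{eq:phi_zt}) now also identifies $\partial\varphi_t^{-1}/\partial z_t$, which can serve as an over-identifying restriction or as an alternative route to $\varphi_t^{-1}$ by integration in $(\bar r,z)$.)

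\textbf{The main obstacle.} The substantive step is (i): on their own, (\ref{eq:phi_qt})--(\ref{eq:phi_zt}) carry more unknowns---$\mu_t$, $\partial f_t/\partial m_t$, $\partial\varphi_t^{-1}/\partial z_t$---than equations, and Assumption~\ref{A-FOC} is exactly what closes the system, because it expresses the unobserved marginal product of material as the markup times an observable revenue share, so that $\partial f_t/\partial m_t$ drops out of (\ref{eq:phi_qt}) and a single linear equation in $\mu_t$ remains. The delicate points are then that the denominator $\partial\phi_t/\partial m_t-s_t$ never vanishes (forced by Assumptions~\ref{A-1}(b),(c) to equal a strictly positive quantity) and that the various right-hand sides are genuinely functions of the stated arguments, i.e.\ the level-set invariances above; the remaining path integrals go through under the continuity and support conditions of Assumptions~\ref{A-1} and \ref{A-3}.
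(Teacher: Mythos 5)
Your proposal is correct and follows essentially the same route as the paper's proof: you derive the markup formula (\ref{eq:markup_identified}) by eliminating $\partial f_t/\partial m_t$ between (\ref{eq:phi_qt}) and the first-order condition (\ref{eq:foc_m}), back out the output elasticities, integrate from the normalization point to get $f_t$, and recover $\varphi_t^{-1}$ from $\varphi_t^{-1}(\phi_t(x_t,z_t),z_t)=f_t(x_t)+\mathbb{M}_t^{-1}(x_t,z_t)$ on the level sets of $\phi_t$. Your added observations---that the denominator $\partial\phi_t/\partial m_t - s_t$ equals $\mu_t^{-1}\,\partial\mathbb{M}_t^{-1}/\partial m_t>0$ and that the right-hand sides are constant on level sets of $\phi_t(\cdot,z_t)$---are correct refinements the paper leaves implicit.
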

\begin{proof}
From (\ref{eq:phi_qt}) and (\ref{eq:foc_m}), the markup $\partial\varphi_{t}^{-1}(\bar{r}_{it},z_{it})/\partial\bar{r}_{t}$
is identified as
\begin{equation}
\frac{\partial\varphi_{t}^{-1}(\bar{r}_{it},z_{it})}{\partial\bar{r}_{t}}=\frac{\partial\mathbb{M}_{t}^{-1}\left(x_{it},z_{it}\right)}{\partial m_{t}}\left(\frac{\partial\phi_{t}(x_{it},z_{it})}{\partial m_{t}}-\frac{\exp(p_{t}^{m}+m_{it})}{\exp\left(\bar{r}_{it}\right)}\right)^{-1}.\label{eq:markup_identified}
\end{equation}
From $\bar{r}_{t}=\phi_{t}(x_{t},z_{t})$ and (\ref{eq:markup_identified}),
the markup is also identified as a function of $(x_{t},z_{t})$ as
\begin{align}
\mu_{t}(x_{t},z_{t}) & :=\frac{\partial\varphi_{t}^{-1}(\phi_{t}(x_{t},z_{t}),z_{t})}{\partial r_{t}}\nonumber \\
 & =\frac{\partial\mathbb{M}_{t}^{-1}\left(x_{t},z_{t}\right)}{\partial m_{t}}\left(\frac{\partial\phi_{t}(x_{t},z_{t})}{\partial m_{t}}-\frac{\exp(p_{t}^{m}+m_{t})}{\exp\left(\phi_{t}(x_{t},z_{t})\right)}\right)^{-1}\label{eq:markup_function}
\end{align}
Substituting (\ref{eq:markup_function}) into (\ref{eq:phi_qt}),
we identify $\partial f_{t}(x_{t})/\partial q_{t}$ for $q_{t}\in\{m_{t},k_{t},l_{t}\}$
as follows:
\begin{align}
\frac{\partial f_{t}(x_{t})}{\partial q_{t}} & =\mu_{t}(x_{t},z_{t})\frac{\partial\phi_{t}(x_{t},z_{t})}{\partial q_{t}}-\frac{\partial\mathbb{M}_{t}^{-1}\left(x_{t},z_{t}\right)}{\partial q_{t}}.\label{eq:output_elasticities_pro2}
\end{align}
Using $f_{t}(m_{t0}^{*},k_{t}^{*},l_{t}^{*})=0$ in Assumption \ref{A-2},
we identify $f_{t}(x_{t})$ by integration:
\begin{align}
f_{t}(m_{t},k_{t},l_{t}) & =\int_{m_{t0}^{*}}^{m_{t}}\frac{\partial f_{t}(s,k_{t},l_{t})}{\partial m_{t}}ds+\int_{k_{t}^{*}}^{k_{t}}\frac{\partial f_{t}(m_{t0}^{*},s,l_{t})}{\partial k_{t}}ds\nonumber \\
 & +\int_{l_{t}^{*}}^{l_{t}}\frac{\partial f_{t}(m_{t0}^{*},k_{t}^{*},s)}{\partial l_{t}}ds.\label{eq:f_integral}
\end{align}

Let $\mathcal{\bar{R}}:=\{\bar{r}_{t}:\bar{r}_{t}=\phi_{t}(x_{t},z_{t})\ \text{{for}\ \text{{some}\ }}(x_{t},z_{t})\in\mathcal{X}\times\mathcal{Z}\}$
be the support of $\bar{r}_{t}$. For given $(\bar{r}_{t},z_{t})\in\mathcal{\bar{R}}\times\mathcal{Z}$,
$X_{t}(\bar{r}_{t},z_{t}):=\{x_{t}\in\mathcal{X}:\phi_{t}(x_{t},z_{t})=\bar{r}_{t}\}$
is non-empty by the construction of $\mathcal{\bar{R}}$. Then, because
$f_{t}(x_{t})$ and $\mathbb{M}_{t}^{-1}(x_{t},z_{t})$ are identified,
the output quantity $\varphi_{t}^{-1}(\bar{r}_{t},z_{t})$ for any
$(\bar{r}_{t},z_{t})\in\mathcal{\bar{R}}\times\mathcal{Z}$ is identified
by
\[
\varphi_{t}^{-1}(\bar{r}_{t},z_{t})=f_{t}(x_{t})+\mathbb{M}_{t}^{-1}(x_{t},z_{t})\text{ for }x_{t}\in X_{t}(\bar{r}_{t},z_{t}).
\]
\end{proof}
The output price for individual firms is identified as
\begin{align*}
p_{it}: & =\bar{r}_{it}-\varphi_{t}^{-1}(\bar{r}_{it},z_{it}).
\end{align*}

\begin{cor}
Suppose that Assumptions \ref{A-1}\textendash \ref{A-FOC} hold.
Then, the production function, output quantities, output prices, and
TFP are identified up to scale and location; markups and output elasticities
are identified up to scale.
\end{cor}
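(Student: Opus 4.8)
The plan is to obtain the corollary purely by assembling the three identification results already established — Lemma \ref{lem:step2} (Step 1), Proposition \ref{P-step1} (Step 2), and Proposition \ref{P-step3} (Step 3) — and then reading off the remaining objects, while keeping careful track of how the single normalization triple $(a_{1t},a_{2t},b_t)$ from (\ref{eq:equivalence}) propagates to each quantity. Under Assumptions \ref{A-1}–\ref{A-FOC} the earlier results fix a representative of the observational-equivalence class (\ref{eq:equivalence}) through the normalization in Assumption \ref{A-2}, and every object named in the statement is a known transformation of $\{\varphi_t^{-1},f_t,\mathbb{M}_t^{-1}\}$; hence its indeterminacy is inherited from that of the triple.

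Concretely, I would first invoke Lemma \ref{lem:step2} to treat $\phi_t(\cdot)$, the true revenue $\bar r_{it}=\phi_t(x_{it},z_{it})$, and $\varepsilon_{it}$ as known \emph{exactly}, since these come from the conditional mean $E[r_{it}|x_t,z_t]$ and carry no residual normalization. Next, Proposition \ref{P-step1} delivers $\mathbb{M}_t^{-1}(\cdot)$ up to scale and location, hence TFP $\omega_{it}=\mathbb{M}_t^{-1}(x_{it},z_{it})$, together with $\bar h_t$, $\eta_{it}$, and $G_{\eta_t}$ up to the scale of $\eta_t$. Then Proposition \ref{P-step3} gives $f_t(\cdot)$ and $\varphi_t^{-1}(\cdot)$ up to scale and location via the integration formula (\ref{eq:f_integral}) and the preimage construction, each firm's markup $\partial\varphi_t^{-1}(\bar r_{it},z_{it})/\partial\bar r_t$ up to scale via (\ref{eq:markup_identified}), and the output elasticities $\partial f_t(x_t)/\partial q_t$ for $q_t\in\{m_t,k_t,l_t\}$ likewise up to scale. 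Finally I would define output $y_{it}:=f_t(x_{it})+\mathbb{M}_t^{-1}(x_{it},z_{it})=\varphi_t^{-1}(\bar r_{it},z_{it})$ and output price $p_{it}:=\bar r_{it}-y_{it}$; using (\ref{eq:equivalence}), $f_t=a_{1t}+b_tf_t^{*}$ and $\mathbb{M}_t^{-1}=a_{2t}+b_t\mathbb{M}_t^{*-1}$ yield $\omega_{it}=a_{2t}+b_t\omega_{it}^{*}$ and $y_{it}=(a_{1t}+a_{2t})+b_ty_{it}^{*}$, so TFP, the production function, and output quantity are all identified up to scale $b_t$ and location; since $\bar r_{it}$ is identified exactly, $p_{it}=\bar r_{it}-y_{it}$ differs from its true counterpart only through $y_{it}$ and is therefore identified up to the same scale $b_t$ and location, while $\partial f_t/\partial q_t=b_t\,\partial f_t^{*}/\partial q_t$ and $\partial\varphi_t^{-1}/\partial\bar r_t=b_t\,\partial\varphi_t^{*-1}/\partial\bar r_t$ give the up-to-scale claim for elasticities and markups.

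Since the argument is essentially bookkeeping, there is no deep obstacle; the two points deserving a little care are (i) checking that $\varphi_t^{-1}(\bar r_t,z_t)$ is well defined for every $(\bar r_t,z_t)$ in the support, i.e.\ that $f_t(x_t)+\mathbb{M}_t^{-1}(x_t,z_t)$ takes the same value for every $x_t$ in the preimage $X_t(\bar r_t,z_t)$ — which is immediate from (\ref{eq:model_step2}) — and (ii) stating precisely, via (\ref{eq:equivalence}), which of the normalization constants attaches to which derived object so that the "up to scale and location" versus "up to scale" labels are assigned correctly. With these observations the corollary follows by combining Lemma \ref{lem:step2} with Propositions \ref{P-step1} and \ref{P-step3}.
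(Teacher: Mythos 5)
Your proposal is correct and matches the paper's treatment: the corollary is obtained there exactly as you describe, by combining Lemma \ref{lem:step2} with Propositions \ref{P-step1} and \ref{P-step3} and reading off output as $\varphi_t^{-1}(\bar r_{it},z_{it})$ and price as $p_{it}=\bar r_{it}-\varphi_t^{-1}(\bar r_{it},z_{it})$, with the normalization bookkeeping via (\ref{eq:equivalence}) assigning "up to scale and location" to level objects and "up to scale" to derivatives. The paper gives no separate proof beyond this assembly, so your argument is essentially identical to its intended one.
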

\begin{rem}
Examination of the proofs reveals that we have over-identifying restrictions.
In particular, the proof of Proposition \ref{P-step1} goes through
with any choice of $q_{t-1}\in\{k_{t-1},l_{t-1},m_{t-1},z_{t-1}\}$
in (\ref{eq:dM_dq_identified}). Furthermore, the proof of Proposition
\ref{P-step3} does not rely on the restriction in (\ref{eq:phi_zt})
for identifying $\varphi_{t}^{-1}(\cdot)$. These over-identifying
restrictions can be useful in developing a specification test for
the model as well as for efficiently estimating the model.
\end{rem}

\subsubsection{Comparison to Existing Identification Approaches}

Our approach follows the spirits of existing identification approaches,
but it does differ from them in terms of implementations. First, step
2 distinguishes our approach from the standard control function approach
(e.g., \citealp{ackerberg2015identification}). In step 2, we identify
the control function from the dynamics of the inputs, and without
using any output measure. To clarify why this approach is necessary,
consider an alternative approach that uses an output measure. Specifically,
in the second step, we substitute $\omega_{it}=\varphi_{t}^{-1}(\bar{r}_{it},z_{it})-f_{t}(x_{it})$
into (\ref{eq:model}) and obtain the alternative transformation model:
\[
\varphi_{t}^{-1}(\bar{r}_{it},z_{it})=f_{t}(x_{it})+\tilde{h}_{t}(\bar{r}_{it-1},x_{it-1},z_{t-1})+\eta_{it}
\]
where $\tilde{h}_{t}(\bar{r}_{t-1},x_{t-1},z_{t-1}):=h(\varphi_{t-1}^{-1}(\bar{r}_{t-1},z_{t-1})-f_{t}(x_{t-1}))$.
Since this model also belongs to the class of transformation models
examined by \citet{chiappori2015nonparametric}, one might think that
we could have identified $\varphi_{t}(\cdot)$ and $f_{t}(\cdot)$
from the conditional distribution function $G_{\bar{r}_{t}|w_{t}}(\bar{r}_{t}|w_{t})$
of $\bar{r}_{t}$ given $w_{t}:=(x_{t},z_{t},\bar{r}_{t-1},x_{t-1},z_{t-1})$.
This is not possible, however, because once $(x_{t},z_{t})$ is conditioned
on, $\bar{r}_{t}=\phi_{t}(x_{t},z_{t})$ loses all variations. Therefore,
the derivatives of $G_{\bar{r}_{t}|w_{t}}$ with respect to past variables
$(\bar{r}_{t-1},x_{t-1},z_{t-1})$ are always 0, which violates the
condition corresponding to Assumption \ref{A-3} (f). 

Second, \citet{ackerberg2015identification} identify a structural
value-added function, $y_{it}=v_{t}(k_{it},l_{it})+\omega_{it}$,
which under perfect competition derives from a Leontief production
function $y_{it}=\min\left\{ v_{t}(k_{it},l_{it})+\omega_{it},a+m_{it}\right\} $.
However, the structural value-added function is difficult to employ
under imperfect competition because $y_{it}<v_{t}(k_{it},l_{it})+\omega_{it}$
can occur. Note that the maximum output capacity $y_{it}^{*}:=v_{t}(k_{it},l_{it})+\omega_{it}$
is determined before a firm chooses $m_{it}$ and $y_{it}$. Therefore,
if $y_{it}^{*}$ is large\textemdash due, for example, to a large
shock on $\omega_{it}$\textemdash then the profit maximizing output
$y_{it}$ can be lower than $y_{it}^{*}$.\footnote{As \citet{ackerberg2015identification} explains, under perfect competition,
if $y_{it}<y_{it}^{*}$, then the optimal output is 0 since the output
becomes linear in material. Since firms in a dataset have positive
outputs, $y_{it}=y_{it}^{*}$ holds for firms observed in a dataset.
However, under imperfect competition, it is possible to have $y_{it}<y_{it}^{*}$
and the optimal output is strictly positive. } Intuitively speaking, when increases in TFP double, a firm can preclude
a price drop by increasing its output by less than double. 

Third, our approach uses the first-order condition for material in
a way different from that seen in \citet{gandhi2020identification},
whose step identifies the material elasticity $\partial f_{t}(x_{t})/\partial m_{t}$
from the first-order condition (\ref{eq:foc_m}):
\[
\ln\text{\ensuremath{\frac{\exp(p_{t}^{m}+m_{it})}{\exp\left(r_{it}\right)}}}=\ln\frac{\partial f_{t}(x_{it})}{\partial m_{it}}-\ln\frac{\partial\varphi_{t}^{-1}(r_{it}-\varepsilon_{it},z_{it})}{\partial r_{t}}-\varepsilon_{it}
\]
under the assumption of perfect competition where $\ln\partial\varphi_{t}^{-1}(r_{it}-\varepsilon_{it},z_{it})/\partial r_{it}=0$
for all $i$. Under imperfect competition, when the markup depends
on revenue $r_{it}-\varepsilon_{it}$, $\partial f_{t}(x_{t})/\partial m_{t}$
cannot be identified solely from the first-order condition.

\subsection{Fixing Normalization across Periods\label{subsec:Normalization}}

Let $(\varphi_{t}^{-1}(\cdot),f_{t}(\cdot),\mathbb{M}_{t}^{-1}(\cdot))$
be a model structure for period $t$ identified by using Propositions
\ref{P-step1} and \ref{P-step3} under the normalization in Assumption
\ref{A-2}. Let $(\varphi_{t}^{*-1}(\cdot),f_{t}^{*}(\cdot),\mathbb{M}_{t}^{*-1}(\cdot))$
denote the true model structure. Since the structure is identified
up to scale and location normalization, there exist period-specific
location and scale parameters $(a_{1t},a_{2t},b_{t})\in\mathbb{R}^{2}\times\mathbb{R}_{+}$
such as 
\begin{align}
\varphi_{t}^{-1}(r_{t},z_{t}) & =a_{1t}+a_{2t}+b_{t}\varphi_{t}^{*-1}(r_{t},z_{t}),\ f_{t}(x)=a_{1t}+b_{t}f_{t}^{*}(x_{t}),\ \nonumber \\
\mathbb{M}_{t}^{-1}(x_{t},z_{t}) & =a_{2t}+b_{t}\mathbb{M}_{t}^{*-1}(x_{t},z_{t}).\label{eq:norm-t}
\end{align}
Generally speaking, the location and scale normalization differ across
periods\textemdash that is, $(a_{1t},a_{2t},b_{t})\neq(a_{1t+1},a_{2t+1},b_{t+1})$.
For the identified objects to be comparable across periods, we need
to fix normalization across periods by assuming that some object in
the model is time-invariant. The subsection discusses these additional
assumptions.\footnote{\citet{klette1996jae} and \citet{de2011product} identify the levels
of markups and output elasticities from revenue data by using a functional
form property of a demand function. They consider a constant elastic
demand function leading to $\varphi_{t}(y_{it},z_{it})=\alpha y_{it}-(\alpha-1)z_{it}$
where $z_{it}$ is an aggregate demand shifter, which is an weighted
average of revenue across firms, and $\alpha$ is an unknown parameter.
This formulation implies $\varphi_{t}^{-1}(r_{it},z_{it})=(1/\alpha)r_{it}+(1-1/\alpha)z_{it}$
and imposes a linear restriction $\partial\varphi_{t}^{-1}(r_{it},z_{it})/\partial r_{it}+\partial\varphi_{t}^{-1}(r_{it},z_{it})/\partial z_{it}=1$,
which fixes the scale parameter $b_{t}$.}

\subsubsection{Scale Normalization}

From (\ref{eq:norm-t}), the ratio of identified markups across two
periods relates to the ratio of true markups as 
\[
\frac{\partial\varphi_{t+1}^{-1}(r,z)/\partial r}{\partial\varphi_{t}^{-1}(r,z)/\partial r}=\frac{b_{t+1}}{b_{t}}\frac{\partial\varphi_{t+1}^{*-1}(r,z)/\partial r}{\partial\varphi_{t}^{*-1}(r,z)/\partial r}.
\]
Therefore, the ability to identify how true markups change over two
periods requires identification of the ratio of scale parameters,
$b_{t+1}/b_{t}$. Similarly, the ratio of identified output elasticities
across periods and that of identified TFP deviation from the mean
are related to their true values via the ratio of scale parameters:
\[
\frac{\partial f_{t+1}(x)/\partial q}{\partial f_{t}(x)/\partial q}=\frac{b_{t+1}}{b_{t}}\frac{\partial f_{t+1}^{*}(x)/\partial q}{\partial f_{t}^{*}(x)/\partial q}\text{ and }\frac{\omega_{it+1}-E\left[\omega_{it+1}\right]}{\omega_{it}-E\left[\omega_{it}\right]}=\frac{b_{t+1}}{b_{t}}\left(\frac{\omega_{it+1}^{*}-E\left[\omega_{it+1}^{*}\right]}{\omega_{it}^{*}-E\left[\omega_{it}^{*}\right]}\right)
\]
for $q\in\{m,k,l\}$. 

To identify $b_{t+1}/b_{t}$, we consider the following assumptions.
\begin{assumption}
\label{A-period} At least one of the following conditions (a)\textendash (c)
holds. (a) The unconditional variance of $\eta_{it}$ does not change
over time. (b) For some known interval $\mathcal{B}$ of $\mathcal{X}$,
the output elasticity of one of the inputs does not change over time
for all $x\in\mathcal{B}$. (c) For some known interval $\mathcal{B}$
of $\mathcal{X}$, the sum of output elasticities of the three inputs
does not change over time for all $x\in\mathcal{B}$.
\end{assumption}
Assumption \ref{A-period} (a) holds, for example, if the productivity
shock $\omega_{it}$ follows a stationary process because stationarity
requires that the distribution of $\eta_{it}$ does not change over
time. Assumption \ref{A-period} (b) assumes that the elasticity of
output with respect to one input does not change over time for some
known interval; meanwhile, under Assumption \ref{A-period} (c), returns
to scale in production technology does not change for some known interval
of inputs. 
\begin{prop}
\label{P-3} Suppose that Assumptions \ref{A-1}\textendash \ref{A-period}
hold for time $t$ and $t+1$. Then, we can identify the ratio of
markups between two periods $t$ and $t+1$, the ratio of output elasticities
between $t$ and $t+1$, and the ratio of TFP deviation from the mean
between $t$ and $t+1$. 
\end{prop}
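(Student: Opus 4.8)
The plan is to reduce the entire statement to identifying the single ratio $b_{t+1}/b_t$ of the period-specific scale parameters, after which the three claimed ratios follow at once from the relations displayed immediately before Assumption~\ref{A-period}. By Propositions~\ref{P-step1} and~\ref{P-step3}, for each $\tau\in\{t,t+1\}$ the objects $\varphi_\tau^{-1}(\cdot)$, $f_\tau(\cdot)$, $\mathbb{M}_\tau^{-1}(\cdot)$, the markup $\partial\varphi_\tau^{-1}/\partial\bar r$, the elasticities $\partial f_\tau/\partial q$, and the innovation $\eta_{i\tau}$ (hence $\omega_{i\tau}-E[\omega_{i\tau}]$) are identified under the normalization of Assumption~\ref{A-2}, and by~(\ref{eq:norm-t}) they coincide with the corresponding true objects up to an affine reparametrization with period-specific constants $(a_{1\tau},a_{2\tau},b_\tau)$, $b_\tau>0$. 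Consequently the identified cross-period ratios of markups, of output elasticities, and of TFP deviations from the mean equal the true ratios multiplied by $b_{t+1}/b_t$, exactly as in those displays. So it suffices to pin down $b_{t+1}/b_t$ under each of conditions~(a)--(c).

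Under~(a), forming the innovation $\eta_{i\tau}=\omega_{i\tau}-E[\omega_{i\tau}|x_{i\tau-1},z_{i\tau-1}]$ cancels the location shift $a_{2\tau}$: since $\omega_{i\tau-1}$ is a deterministic function of $(x_{i\tau-1},z_{i\tau-1})$ and the true innovation is mean-independent of the past by Assumption~\ref{A-3}(b) (equivalently, by the exogenous Markov structure~(\ref{omega})), the innovation identified in Proposition~\ref{P-step1} satisfies $\eta_{i\tau}=b_\tau\,\eta_{i\tau}^{*}$, with $\eta_{i\tau}^{*}$ the true structural innovation in~(\ref{omega}); therefore $\mathrm{Var}(\eta_{i\tau})=b_\tau^{2}\,\mathrm{Var}(\eta_{i\tau}^{*})$. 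Condition~(a) makes the true variance common across the two periods (and it is positive and finite), so $b_{t+1}/b_t=\sqrt{\mathrm{Var}(\eta_{it+1})/\mathrm{Var}(\eta_{it})}$, the positive root being taken because $b_t,b_{t+1}>0$, and both variances on the right are known from Proposition~\ref{P-step1}. Under~(b), by~(\ref{eq:norm-t}) the identified elasticity of the relevant input $q$ obeys $\partial f_\tau(x)/\partial q=b_\tau\,\partial f_\tau^{*}(x)/\partial q$; choosing $x^{\dagger}\in\mathcal{B}$ at which this true elasticity is nonzero (such a point exists unless the elasticity of $q$ vanishes throughout $\mathcal{B}$, in which case~(b) carries no information), condition~(b) gives $\partial f_{t+1}^{*}(x^{\dagger})/\partial q=\partial f_t^{*}(x^{\dagger})/\partial q\neq0$, so $b_{t+1}/b_t=\bigl(\partial f_{t+1}(x^{\dagger})/\partial q\bigr)\big/\bigl(\partial f_t(x^{\dagger})/\partial q\bigr)$ with both derivatives known from Proposition~\ref{P-step3}. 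Under~(c), the same argument applies to $\sum_{q\in\{m,k,l\}}\partial f_\tau(x)/\partial q=b_\tau\sum_q\partial f_\tau^{*}(x)/\partial q$, whose value at any $x\in\mathcal{B}$ is nonzero because $\partial f_\tau/\partial m>0$ by Assumption~\ref{A-1}(a) and $b_\tau>0$; hence $b_{t+1}/b_t=\bigl(\sum_q\partial f_{t+1}(x)/\partial q\bigr)\big/\bigl(\sum_q\partial f_t(x)/\partial q\bigr)$ for any such $x$.

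With $b_{t+1}/b_t$ identified, dividing each identified cross-period ratio by it recovers $\bigl(\partial\varphi_{t+1}^{*-1}(r,z)/\partial r\bigr)\big/\bigl(\partial\varphi_t^{*-1}(r,z)/\partial r\bigr)$, $\bigl(\partial f_{t+1}^{*}(x)/\partial q\bigr)\big/\bigl(\partial f_t^{*}(x)/\partial q\bigr)$, and $\bigl(\omega_{it+1}^{*}-E[\omega_{it+1}^{*}]\bigr)\big/\bigl(\omega_{it}^{*}-E[\omega_{it}^{*}]\bigr)$, which are precisely the three ratios in the statement. I expect the only genuinely delicate step to be case~(a): verifying that the affine reparametrization in~(\ref{eq:norm-t}) acts on the identified innovation as pure multiplication by $b_\tau$ requires both that the location term drops out of the within-period demeaning and that the true innovation has conditional mean zero given the lagged inputs, so that the identified and true innovation distributions differ only by a scale factor equal to $b_\tau$. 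In cases~(b) and~(c) the only care required is the nonvanishing of the relevant (sum of) derivative(s) at the chosen evaluation point, which I have flagged above.
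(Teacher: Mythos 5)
Your proof is correct and follows essentially the same route as the paper's: reduce everything to identifying $b_{t+1}/b_{t}$ via (\ref{eq:norm-t}) and then pin that ratio down from the innovation variance under (a), a single input elasticity under (b), or the sum of elasticities under (c). Two small points: your orientation of the square root, $b_{t+1}/b_{t}=\sqrt{\mathrm{var}(\eta_{t+1})/\mathrm{var}(\eta_{t})}$, is the correct one given $\mathrm{var}(\eta_{\tau})=b_{\tau}^{2}\mathrm{var}(\eta_{\tau}^{*})$ (the paper's displayed formula has the ratio inverted, an apparent typo), and your explicit requirement that the relevant derivative (or sum of derivatives) be nonzero at the evaluation point in cases (b)--(c) supplies a detail the paper leaves implicit.
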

\begin{proof}
Suppose that Assumption \ref{A-period}(a) holds. Let $\text{var}(\eta_{t})$
and $\text{var}(\eta_{t+1})$ be the variance of $\eta_{t}$ and $\eta_{t+1}$
identified under the period-specific normalization in Assumption \ref{A-2}
for $t$ and $t+1$, respectively. From (\ref{eq:model}) and (\ref{eq:norm-t}),
$\text{var}(\eta_{t})=b_{t}^{2}\text{var}(\eta_{t}^{*})$ and $\text{var}(\eta_{t+1})=b_{t+1}^{2}\text{var}(\eta_{t+1}^{*})$.
From $\text{var}(\eta_{t}^{*})=\text{var}(\eta_{t+1}^{*})$, $b_{t+1}/b_{t}$
is identified as $b_{t+1}/b_{t}=\sqrt{{\text{var}(\eta_{t})}/{\text{var}(\eta_{t+1})}}$.

Let $\partial f_{t}(x)/\partial q$ and $\partial f_{t+1}(x)/\partial q$
be those elasticities identified under the period-specific normalization
in Assumption \ref{A-2} for $t$ and $t+1$, respectively, and $\partial f_{t}^{*}(x)/\partial q$
and $\partial f_{t+1}^{*}(x)/\partial q$ be the true elasticities.
From (\ref{eq:norm-t}), $\partial f_{t}(x)/\partial q=b_{t}\partial f_{t}^{*}(x)/\partial q$
and $\partial f_{t+1}(x)/\partial q=b_{t+1}\partial f_{t+1}^{*}(x)/\partial q$
hold.

Suppose that Assumption \ref{A-period}(b) holds. Then, $\partial f_{t}^{*}(x)/\partial q=\partial f_{t+1}^{*}(x)/\partial q$
for some input $q\in\{m,k,l\}$ and $x\in\mathcal{B}$. Then, $b_{t+1}/b_{t}$
is identified as $b_{t+1}/b_{t}=(\partial f_{t+1}(x)/\partial q)/(\partial f_{t}(x)/\partial m)$
for $x\in\mathcal{B}$. 

Suppose that Assumption \ref{A-period}(c) holds, implying
\[
1=\frac{\partial f_{t+1}^{*}(x)/\partial m+\partial f_{t+1}^{*}(x)/\partial k+\partial f_{t+1}^{*}(x)/\partial l}{\partial f_{t}^{*}(x)/\partial m+\partial f_{t}^{*}(x)/\partial k+\partial f_{t}^{*}(x)/\partial l}\,\text{for }\ensuremath{x\in\mathcal{B}}.
\]
Then, $b_{t+1}/b_{t}$ is identified as 
\[
\frac{b_{t+1}}{b_{t}}=\frac{\partial f_{t+1}(x)/\partial m+\partial f_{t+1}(x)/\partial k+\partial f_{t+1}(x)/\partial l}{\partial f_{t}(x)/\partial m+\partial f_{t}(x)/\partial k+\partial f_{t}(x)/\partial l}\quad\text{for \ensuremath{x\in\mathcal{B}}}.
\]
\end{proof}

\subsubsection{Local Constant Returns to Scale\label{subsec:LCRS}}

We consider the following local constant returns to scale that strengthens
Assumption \ref{A-period} (c).
\begin{assumption}
\label{A-LCRS} (Local Constant Returns to Scale) For some known interval
$\mathcal{B}$ of $\mathcal{X}$, the sum of the output elasticities
of the three inputs equals to 1 for all $x\in\mathcal{B}$. 
\end{assumption}
Assumption \ref{A-LCRS} is stronger than Assumption \ref{A-period}(c),
but it is weaker than the assumptions used in some other studies on
markups. Markup is sometimes estimated as the ratio of revenue $\exp(r_{it})$
to total costs $TC_{it}$ under the assumption that a cost function
is linear in output $TC_{it}=MC_{it}y_{it}$ with constant marginal
costs $MC_{it}$. The linear cost function requires the following
assumptions that are stronger than Assumption \ref{A-LCRS}: (1) constant
returns to scale \emph{globally} holds for all $x\in\mathcal{B}$;
(2) all three inputs are flexible and (3) a firm is a price taker
of all three inputs. Under Assumption \ref{A-LCRS}, marginal costs
may increase in output, especially in the short run, when dynamic
inputs such as capital require adjustment costs.

With Assumption \ref{A-LCRS}, the scale normalization parameter $b_{t}$
can be identified for all periods as follows. Let $f_{t}(x)$ be the
identified production function and $f_{t}^{*}(x)$ be the true one
where $f_{t}(x_{t})=a_{t}+b_{t}f_{t}^{*}(x_{t})$ from (\ref{eq:norm-t}).
For $x\in\mathcal{B}$, we have
\[
b_{t}=b_{t}\left(\frac{\partial f_{t}^{*}(x)}{\partial m}+\frac{\partial f_{t}^{*}(x)}{\partial k}+\frac{\partial f_{t}^{*}(x)}{\partial l}\right)=\frac{\partial f_{t}(x)}{\partial m}+\frac{\partial f_{t}(x)}{\partial k}+\frac{\partial f_{t}(x)}{\partial l}.
\]
Given that we have identified the scale parameter $b_{t}$ in (\ref{eq:norm-t}),
we have established the following proposition.
\begin{prop}
\label{P-CRS} Suppose that Assumptions \ref{A-1}\textendash \ref{A-FOC}
and \ref{A-LCRS} hold. Then, $\varphi_{t}(\cdot)$ and $f_{t}(\cdot)$
can be identified up to location. The levels of markup and output
elasticities can be identified. Output quantity, output price, and
TFP can be identified up to location.
\end{prop}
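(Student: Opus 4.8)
The statement is essentially a corollary of Propositions \ref{P-step1} and \ref{P-step3} once the scale parameter is pinned down, so the plan is short. Recall that under the normalization in Assumption \ref{A-2} we have already identified $\varphi_{t}^{-1}(\cdot)$, $f_{t}(\cdot)$, $\mathbb{M}_{t}^{-1}(\cdot)$, the markup $\partial\varphi_{t}^{-1}(\bar{r}_{t},z_{t})/\partial\bar{r}_{t}$, the output elasticities $\partial f_{t}(x_{t})/\partial q_{t}$ for $q_{t}\in\{m_{t},k_{t},l_{t}\}$, the output quantity $\varphi_{t}^{-1}(\bar{r}_{t},z_{t})$, the output price $p_{it}=\bar{r}_{it}-\varphi_{t}^{-1}(\bar{r}_{it},z_{it})$, and TFP $\omega_{it}=\mathbb{M}_{t}^{-1}(x_{it},z_{it})$; by the discussion around (\ref{eq:norm-t}) each of these equals the corresponding true object distorted only through the period-$t$ location and scale constants $(a_{1t},a_{2t},b_{t})$. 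The one structural fact I would emphasize at the outset is that differentiating $f_{t}(x)=a_{1t}+b_{t}f_{t}^{*}(x)$ produces $\partial f_{t}(x)/\partial q=b_{t}\,\partial f_{t}^{*}(x)/\partial q$ with \emph{no} location term, and likewise $\partial\varphi_{t}^{-1}(\bar{r},z)/\partial\bar{r}=b_{t}\,\partial\varphi_{t}^{*-1}(\bar{r},z)/\partial\bar{r}$; hence a restriction on output elasticities can only bite on $b_{t}$.

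First I would use Assumption \ref{A-LCRS} to identify $b_{t}$. Summing the three already-identified output elasticities at any $x\in\mathcal{B}$ and using $\sum_{q\in\{m,k,l\}}\partial f_{t}^{*}(x)/\partial q=1$ gives
\[
\frac{\partial f_{t}(x)}{\partial m}+\frac{\partial f_{t}(x)}{\partial k}+\frac{\partial f_{t}(x)}{\partial l}=b_{t}\left(\frac{\partial f_{t}^{*}(x)}{\partial m}+\frac{\partial f_{t}^{*}(x)}{\partial k}+\frac{\partial f_{t}^{*}(x)}{\partial l}\right)=b_{t},
\]
so $b_{t}$ is identified; the fact that the left-hand side must be constant over $x\in\mathcal{B}$ is an over-identifying restriction that can be used as a specification check. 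This is precisely the computation displayed just before the statement.

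Next, with $b_{t}$ in hand I would undo the scaling. The relations $\partial f_{t}^{*}(x)/\partial q=b_{t}^{-1}\partial f_{t}(x)/\partial q$ for $q\in\{m,k,l\}$ identify the levels of the output elasticities, and $\partial\varphi_{t}^{*-1}(\bar{r},z)/\partial\bar{r}=b_{t}^{-1}\partial\varphi_{t}^{-1}(\bar{r},z)/\partial\bar{r}$ identifies the level of each firm's markup (recall this derivative is the markup by (\ref{eq:markup_foc})). For the remaining objects only the additive location constants $a_{1t},a_{2t}$ survive: $f_{t}^{*}(x)=b_{t}^{-1}\bigl(f_{t}(x)-a_{1t}\bigr)$ identifies $f_{t}(\cdot)$ up to location; $\varphi_{t}^{*-1}(\bar{r},z)=b_{t}^{-1}\bigl(\varphi_{t}^{-1}(\bar{r},z)-a_{1t}-a_{2t}\bigr)$ identifies the output quantity $y_{it}^{*}=\varphi_{t}^{*-1}(\bar{r}_{it},z_{it})$ up to location, and inverting it identifies $\varphi_{t}(\cdot)$ up to a location shift of its (output) argument; then $p_{it}=\bar{r}_{it}-y_{it}^{*}$ is identified up to location; and $\omega_{it}=\mathbb{M}_{t}^{*-1}(x_{it},z_{it})=b_{t}^{-1}\bigl(\mathbb{M}_{t}^{-1}(x_{it},z_{it})-a_{2t}\bigr)$ identifies TFP up to location.

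There is no genuine obstacle here beyond bookkeeping: the only point requiring care is confirming that Assumption \ref{A-LCRS} constrains exactly the scale $b_{t}$ — because output elasticities are derivatives of $f_{t}$ and are therefore location-free — while the location constants $a_{1t},a_{2t}$ remain entirely unrestricted by (\ref{eq:model_step2}) and the first-order condition in Assumption \ref{A-FOC}, so the ``up to location'' qualifier cannot be dropped.
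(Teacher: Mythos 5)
Your proposal is correct and follows essentially the same route as the paper: the paper likewise identifies $b_{t}$ by summing the three identified output elasticities over $x\in\mathcal{B}$ and invoking Assumption \ref{A-LCRS}, then treats the rest as immediate from Propositions \ref{P-step1} and \ref{P-step3}. Your additional bookkeeping on undoing the scale for each object (and the remark that the location constants remain unrestricted) is consistent with, and slightly more explicit than, what the paper writes.
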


\subsubsection{Location Normalization}

Suppose that scale normalization $b_{t}$ is already identified\textemdash for
example, from Proposition \ref{P-CRS}. Define
\begin{align}
\tilde{\varphi}_{t}^{-1}(r_{t},z_{t}) & :=\varphi_{t}^{-1}(r_{t},z_{t})/b_{t},\ \tilde{f}_{t}(x):=f_{t}(x)/b_{t},\ \tilde{\omega}_{t}:=\omega_{t}/b_{t},\nonumber \\
\ \tilde{a}_{1t} & :=a_{1t}/b_{t},\text{\ {and}}\ \tilde{a}_{2t}:=a_{2t}/b_{t}.\label{eq:norm-tilde}
\end{align}
Then, (\ref{eq:norm-t}) is written as
\begin{align}
\tilde{\varphi}_{t}^{-1}(r_{t},z_{t}) & =\tilde{a}_{1t}+\tilde{a}_{2t}+\varphi_{t}^{*-1}(r_{t},z_{t}),\ \tilde{f}_{t}(x)=\tilde{a}_{1t}+f_{t}^{*}(x_{t}),\ \tilde{\omega}_{t}=\tilde{a}_{2t}+\omega_{t}^{*}.\label{eq:norm-t2}
\end{align}
From (\ref{eq:norm-t}), the growth rates (log differences) of the
identified output and TFP between $t$ and $t+1$ are related to their
true values as follows: 
\begin{align}
\tilde{\varphi}_{t+1}^{-1}(\bar{r}_{it+1},z_{it+1})-\tilde{\varphi}_{t}^{-1}(\bar{r}_{it},z_{it}) & =\tilde{a}_{1t+1}+\tilde{a}_{2t+1}-\tilde{a}_{1t}-\tilde{a}_{2t}+\varphi_{t+1}^{*-1}(\bar{r}_{it+1},z_{it+1})-\varphi_{t}^{*-1}(\bar{r}_{it},z_{it}),\nonumber \\
\tilde{f}_{t+1}(x_{t+1})-\tilde{f}_{t}(x_{t}) & =\tilde{a}_{1t+1}-\tilde{a}_{1t}+f^{*}(x_{t+1})-f_{t}^{*}(x_{t}),\nonumber \\
\tilde{\omega}_{it+1}-\tilde{\omega}_{it} & =\tilde{a}_{2t+1}-\tilde{a}_{2t}+\omega_{it+1}^{*}-\omega_{it}^{*}.\label{eq:growth}
\end{align}
Therefore, to identify the growth rates of output and TFP, we need
to identify the changes in the location parameters. To do so, we can
use an industry-level producer price index $P_{t}^{*}$, which is
often available as data, to identify the change in the location parameters.
Suppose that $P_{t}^{*}$ is a Laspeyres index
\begin{equation}
P_{t}^{*}:=\frac{\sum_{i\in\tilde{N}}\exp(p_{it}^{*}+y_{i0}^{*})}{\sum_{i\in\tilde{N}}\exp(p_{i0}^{*}+y_{i0}^{*})},\label{eq:PPI}
\end{equation}
where $\tilde{N}$ is a known set (or a random sample) of products.
$p_{i0}^{*}$ and $y_{i0}^{*}$ are firm $i$'s log true price and
log true output at the base period, respectively. The following argument
holds for forms of a price index (other than Laspeyres) as long as
the price index is a known function of prices that is homogenous of
degree 1; this condition is usually satisfied.
\begin{assumption}
\label{A-location}(a) The industry-level producer price index $P_{t}^{*}$
is known as data. (b) For some known point $\bar{x}\in\mathcal{X}$,
the true production functions of $t$ and $t+1$, $f_{t}^{*}(\cdot)$
and $f_{t+1}^{*}(\cdot)$, satisfy $f_{t}^{*}(\bar{x})=f_{t+1}^{*}(\bar{x})$. 
\end{assumption}
Assumption \ref{A-location}(b) is innocuous, implying that any output
change between $t$ and $t+1$ when inputs are fixed at $\bar{x}$
is attributed to a TFP change. 

Using the aggregate price index, we can identify the change in the
location parameters and identify the growth of TFP and output.
\begin{prop}
\label{P-location} Suppose Assumptions \ref{A-1}\textendash \ref{A-FOC},
\ref{A-LCRS}, and \ref{A-location} hold. Then, the true growth rate
of output $\varphi_{t+1}^{*-1}(\bar{r}_{it+1},z_{it+1})-\varphi_{t}^{*-1}(\bar{r}_{it},z_{it})$
and that of TFP $\omega_{it+1}^{*}-\omega_{it}^{*}$ can be identified
for each firm.
\end{prop}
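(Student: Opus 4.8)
The plan is to build on Proposition~\ref{P-CRS}: under Assumption~\ref{A-LCRS} the scale parameter $b_t$ is identified in every period (including the base period $0$), so that the only remaining indeterminacy is the pair of location shifts. I work throughout with the rescaled objects of~(\ref{eq:norm-tilde}), for which~(\ref{eq:norm-t2}) gives $\tilde f_t(x)=\tilde a_{1t}+f^{*}_t(x)$, $\tilde\varphi^{-1}_t(r,z)=\tilde a_{1t}+\tilde a_{2t}+\varphi^{*-1}_t(r,z)$, and $\tilde\omega_{it}=\tilde a_{2t}+\omega^{*}_{it}$, with the $\tilde a$'s unknown. By~(\ref{eq:growth}), identifying the true growth rates of output and TFP between $t$ and $t+1$ is equivalent to identifying the two increments $\tilde a_{1,t+1}-\tilde a_{1t}$ and $\tilde a_{2,t+1}-\tilde a_{2t}$.

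Step~1: pin down the increment of $\tilde a_{1t}$ from Assumption~\ref{A-location}(b). Evaluating $\tilde f_t$ and $\tilde f_{t+1}$ at the known point $\bar x$ and differencing, the common true value $f^{*}_t(\bar x)=f^{*}_{t+1}(\bar x)$ cancels, so $\tilde f_{t+1}(\bar x)-\tilde f_t(\bar x)=\tilde a_{1,t+1}-\tilde a_{1t}$ is identified. Substituting this into the second line of~(\ref{eq:growth}) identifies the true production-function growth $f^{*}_{t+1}(x_{t+1})-f^{*}_t(x_t)$ for every firm.

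Step~2: pin down the increment of $\tilde a_{1t}+\tilde a_{2t}$ from the producer price index. Note that the identified rescaled output price is $\tilde p_{it}:=\bar r_{it}-\tilde\varphi^{-1}_t(\bar r_{it},z_{it})=p^{*}_{it}-(\tilde a_{1t}+\tilde a_{2t})$, the identified base output is $\tilde y_{i0}:=\tilde\varphi^{-1}_0(\bar r_{i0},z_{i0})=y^{*}_{i0}+(\tilde a_{10}+\tilde a_{20})$, and the denominator of~(\ref{eq:PPI}) equals $\sum_{i\in\tilde N}\exp(\bar r_{i0})$, i.e., directly observed base-period revenue. Substituting $p^{*}_{it}=\tilde p_{it}+\tilde a_{1t}+\tilde a_{2t}$ and $y^{*}_{i0}=\tilde y_{i0}-\tilde a_{10}-\tilde a_{20}$ into the numerator of~(\ref{eq:PPI}), every unknown shift enters each summand as the same additive constant inside the exponential, so it factors out as the single multiplicative constant $\exp\!\big((\tilde a_{1t}+\tilde a_{2t})-(\tilde a_{10}+\tilde a_{20})\big)$. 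Writing $\tilde P_t:=\big(\sum_{i\in\tilde N}\exp(\tilde p_{it}+\tilde y_{i0})\big)/\big(\sum_{i\in\tilde N}\exp(\bar r_{i0})\big)$, which is fully identified, this yields $P^{*}_t=\exp\!\big((\tilde a_{1t}+\tilde a_{2t})-(\tilde a_{10}+\tilde a_{20})\big)\,\tilde P_t$, so $(\tilde a_{1t}+\tilde a_{2t})-(\tilde a_{10}+\tilde a_{20})$ is identified; homogeneity of degree one in prices makes the same argument go through for any admissible index form. Differencing between $t$ and $t+1$ removes the base-period constant and identifies $(\tilde a_{1,t+1}+\tilde a_{2,t+1})-(\tilde a_{1t}+\tilde a_{2t})$. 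The first line of~(\ref{eq:growth}) then delivers the true output growth $\varphi^{*-1}_{t+1}(\bar r_{it+1},z_{it+1})-\varphi^{*-1}_t(\bar r_{it},z_{it})$, and subtracting the Step~1 increment of $\tilde a_{1t}$ gives $\tilde a_{2,t+1}-\tilde a_{2t}$, hence via the third line of~(\ref{eq:growth}) the true TFP growth $\omega^{*}_{it+1}-\omega^{*}_{it}$, for each firm.

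The main obstacle I anticipate is the bookkeeping in Step~2: the Laspeyres weights are base-period \emph{quantities}, which are themselves identified only up to the base-period location shift, and one must verify that this shift combines with the current-period price shift into exactly one constant that divides out. The fact that the index denominator is observed exactly as base-period revenue $\exp(\bar r_{i0})$, together with the shift being common across firms in the numerator, is what makes the cancellation clean; everything else is the linear algebra of~(\ref{eq:growth}).
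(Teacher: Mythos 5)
Your proposal is correct and follows essentially the same route as the paper's proof: you use Assumption \ref{A-location}(b) at $\bar{x}$ to identify $\tilde{a}_{1,t+1}-\tilde{a}_{1t}$, and you compare the observed index $P_{t}^{*}$ with the index computed from the identified prices and quantities (whose denominator collapses to observed base-period revenue and whose numerator absorbs the location shifts into a single multiplicative constant) to identify the increment of $\tilde{a}_{1t}+\tilde{a}_{2t}$, exactly as in equation (\ref{eq:a_diff}). The only difference is the order of the two steps, which is immaterial.
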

\begin{proof}
Let $\tilde{p}_{it}:=\bar{r}_{it}-\tilde{\varphi}_{t}^{-1}(\bar{r}_{it},z_{it})$
and $\tilde{y}_{it}:=\tilde{\varphi}_{t}^{-1}(\bar{r}_{it},z_{it})$
be an output price and an output quantity identified under the normalization
in (\ref{eq:norm-tilde}) and Assumption \ref{A-2}, respectively.
Using these, we calculate an industry-level producer price index with
them:
\[
P_{t}:=\frac{\sum_{i\in\tilde{N}}\exp(\tilde{p}_{it}+\tilde{y}_{i0})}{\sum_{i\in\tilde{N}}\exp(\tilde{p}_{i0}+\tilde{y}_{i0})}.
\]
 From (\ref{eq:norm-t2}) and (\ref{eq:PPI}), $P_{t}$ is written
as
\begin{align*}
P_{t} & =\frac{\sum_{i\in\tilde{N}}\exp(-(\tilde{a}_{1t}+\tilde{a}_{2t})+p_{it}^{*}+\tilde{a}_{1,0}+\tilde{a}_{2,0}+y_{i0}^{*})}{\sum_{i\in\tilde{N}}\exp(p_{i0}^{*}+y_{i0}^{*})}\\
 & =\exp(\tilde{a}_{1,0}+\tilde{a}_{2,0}-(\tilde{a}_{1t}+\tilde{a}_{2t}))P_{t}^{*}.
\end{align*}
Therefore, $\tilde{a}_{1t+1}+\tilde{a}{}_{2t+1}-\tilde{a}_{1t}-\tilde{a}{}_{2t}$
is identified as:
\begin{equation}
\tilde{a}_{1t+1}+\tilde{a}{}_{2t+1}-\tilde{a}_{1t}-\tilde{a}{}_{2t}=\ln P_{t+1}^{*}-\ln P_{t+1}-\left(\ln P_{t}^{*}-\ln P_{t}\right)\label{eq:a_diff}
\end{equation}
From (\ref{eq:growth}), we identify the output growth rate $\varphi_{t+1}^{*-1}(\bar{r}_{it+1},z_{it+1})-\varphi_{t}^{*-1}(\bar{r}_{it},z_{it})$.

Evaluating the second equation in (\ref{eq:growth}) at $x_{t+1}=x_{t}=\bar{x}$
in Assumption \ref{A-location}(b), we identify $\tilde{a}_{1t+1}-\tilde{a}_{1t}$
as: 
\begin{align*}
\tilde{a}_{1t+1}-\tilde{a}_{1t} & =\tilde{a}_{1,t+1}+f_{t+1}^{*}(\bar{x})-\left(\tilde{a}_{1,t}+f_{t}^{*}(\bar{x})\right)\\
 & =\tilde{f}_{t+1}(\bar{x})-\tilde{f}_{t}(\bar{x}).
\end{align*}
From (\ref{eq:a_diff}), $\tilde{a}_{2t+1}-\tilde{a}_{2t}$ is also
identified as 
\[
\tilde{a}_{2t+1}-\tilde{a}_{2t}=\ln P_{t+1}^{*}-\ln P_{t+1}-\left(\ln P_{t}^{*}-\ln P_{t}\right)-\left(\tilde{f}_{t+1}(\bar{x})-\tilde{f}_{t}(\bar{x})\right).
\]
Therefore, from (\ref{eq:growth}), the true TFP growth rate $\omega_{it+1}^{*}-\omega_{it}^{*}$
is also identified.
\end{proof}

\subsection{Identification of Demand System and Utility Function \label{subsec:demand}}

Given that we have identified each firm's output price and quantity,
it is possible to identify with additional assumptions a system of
demand functions and a homothetic utility function of a representative
consumer. The identified demand system and the identified utility
function can be used to undertake counterfactual analysis and welfare
analysis. 

We consider an HSA system \citep{matsuyama2017beyond}, which can
be expressed as a system of direct demand functions or of inverse
demand functions. The two systems are self-dual in the sense that
either can be derived from the other. We consider a system of inverse
demand functions. Let $P_{it}:=\exp(p_{it})$ and $Y_{it}:=\exp(y_{it})$
be the levels of price and quantity of firm $i$'s output at time
$t$, respectively. Let $N_{t}$ be the set of firms in the industry
and $\Phi_{t}:=\sum_{i\in N_{t}}P_{it}Y_{it}$ be the industry expenditure.
The inverse demand function for product $i$ is given by 
\[
P_{it}=\frac{\Phi_{t}}{Y_{it}}S_{t}\left(\frac{Y_{it}}{A_{t}\left(\mathbf{Y}_{t},\mathbf{z}_{t}\right)},z_{it}\right).
\]
where $S_{t}(\cdot,z_{it})$ provides the budget share of product
$i$, $\mathbf{Y}_{t}:=(Y_{1t},...,Y_{Nt})\in\mathscr{\bar{Y}}:=\exp\left(\mathcal{Y}\right)^{N}$
is a vector of consumption, $\mathbf{z}_{t}:=(z_{1t},...,z_{Nt})$
is a vector of observable demand shifters and $A_{t}(\mathbf{Y}_{t},\mathbf{z}_{t})$
is the aggregate quantity index summarizing interactions across products.\footnote{If the utility function is CES $U_{t}(\mathbf{Y}_{t},\mathbf{z}_{t})=\left[\sum_{i=1}^{N}Y_{it}^{\rho(z_{it})}\right]^{1/\rho(z_{it})}$,
then the inverse demand function is given by $P_{it}=\frac{\Phi_{t}}{Y_{it}}\left(\frac{Y_{it}}{U_{t}(\mathbf{Y}_{t},\mathbf{z}_{t})}\right)^{\rho(z_{it})}$.
In this case, the quantity index is the same as the utility function,
but they are generally different.} Since $S_{t}(\cdot)$ is nonparametric, the HSA system can nest various
demand functions used in the literature such as the constant elastic
demand from the CES utility, the symmetric translog demand (\citealp{feenstra2003homothetic};
\citealp{feenstra2017globalization}), or the constant response demand
(Mrázová and Neary, \citeyear{mrazova2017not,MRAZOVA2019102561}).\footnote{A HSA version of the constant response demand (Mrázová and Neary,
\citeyear{mrazova2017not,MRAZOVA2019102561}) can be formulated as
for example, $P_{it}=\frac{\beta\Phi_{t}}{Y_{it}}\left[\left(\frac{Y_{it}}{A_{t}\left(\mathbf{Y}_{t},\mathbf{z}_{t}\right)}\right)^{\alpha}+\gamma z_{it}\right]^{\delta}$
where firm $i$'s markup is given by $\mu_{it}=\frac{1}{\alpha\beta}+\frac{\gamma z_{it}}{\alpha\beta\left(Y_{it}\right)^{\alpha}}$.
See \citet{matsuyama2017beyond} regarding how the HSA nests the translog
demand.}

For identification of a demand system, we make assumptions regarding
the market structure.
\begin{assumption}
\label{A-demand} The good market is monopolistically competitive
(without free entry)\textemdash that is, each firm takes the quantity
index $A_{t}(\mathbf{Y}_{t},\mathbf{z}_{t})$ as given.
\end{assumption}
The assumption of monopolistic competition follows \citet{klette1996jae}
and \citet{de2011product}, with the inverse demand function becoming
a symmetric function of the firm's own output, as in (\ref{eq:inverse_demand}). 

The demand elasticity equals $(\mu-1)/\mu$ when $\mu$ is markup.
If the markup is identified up to scale, then the demand elasticity
is not uniquely identified. Therefore, we need to fix the scale normalization
to identify the demand function.
\begin{assumption}
\label{A-demand-2} $\varphi_{t}^{-1}(\bar{r}_{t},z_{t})$ is identified
up to location.
\end{assumption}
Assumption \ref{A-demand-2} is satisfied when Proposition \ref{P-CRS}
holds. 

An HSA demand system can be constructed as follows. Suppose $\varphi_{t}^{-1}(\bar{r}_{t},z_{t})$
is identified from Proposition \ref{P-CRS}; taking its inverse function
obtains the revenue function $\varphi_{t}(y_{t},z_{t})$. Fixing a
realized data point of $\mathbf{Y}_{t}^{0}:=(Y_{1t}^{0},...,Y_{Nt}^{0})\in\mathscr{\mathscr{\bar{Y}}}$
and $\mathbf{z}_{t}^{0}:=(z_{1t}^{0},...,z_{Nt}^{0})\in\mathcal{Z}^{N}$,
we let $\Phi_{t}:=\sum_{i\in N_{t}}\exp\left(\varphi_{t}\left(\ln Y_{it}^{0},z_{it}^{0}\right)\right)$
be the consumer's budget, which is taken as given. For given $(\mathbf{Y}_{t},\mathbf{z}_{t})\in\mathscr{\bar{Y}}\times\mathcal{Z}^{N}$,
we define a vector of market shares $S_{t}(\mathbf{Y}_{t},\mathbf{z}_{t}):=\left(S_{t}(Y_{1t},z_{1t}),....,S_{t}(Y_{Nt},z_{Nt})\right)$
such that
\[
S_{t}(Y_{it},z_{it}):=\frac{\exp\left(\varphi_{t}\left(\ln Y_{it},z_{it}\right)\right)}{\Phi_{t}}.
\]
The quantity index $A_{t}(\mathbf{Y}_{t},\mathbf{z}_{t})$ is identified
as follows. First, since $\sum_{i\in N_{t}}S_{t}\left(Y_{it}^{0},z_{it}^{0}\right)=1$,
by construction, $A_{t}(\mathbf{Y}_{t}^{0},\mathbf{z}_{t}^{0})=1$
holds for the data point $(\mathbf{Y}_{t}^{0},\mathbf{z}_{t}^{0})$.
For other values $(\mathbf{Y}_{t},\mathbf{z}_{t})\in\mathscr{\bar{Y}}\times\mathcal{Z}^{N}$,
we can obtain $A_{t}(\mathbf{Y}_{t},\mathbf{z}_{t})$ by solving
\[
\sum_{i\in N_{t}}S_{t}\left(\frac{Y_{it}}{A_{t}(\mathbf{Y}{}_{t},\mathbf{z}{}_{t})},z{}_{it}\right)=1.
\]
Since $S_{t}(\cdot,z_{it})$ is continuous and strictly increasing,
$A_{t}(\mathbf{Y}_{t},\mathbf{z}_{t})$ is uniquely determined. Then,
we obtain the inverse demand function for all $(\mathbf{Y}_{t},\mathbf{z}_{t})\in\mathscr{\bar{Y}}\times\mathcal{Z}^{N}$:
\begin{equation}
P_{it}=\frac{\Phi_{t}}{Y_{it}}S_{t}\left(\frac{Y_{it}}{A_{t}\left(\mathbf{Y}_{t},\mathbf{z}_{t}\right)},z_{it}\right).\label{eq:HSA_system}
\end{equation}
Applying the result of \citeauthor{matsuyama2017beyond} (2017, Proposition
1 and Remark 3), the following proposition establishes that the HSA
demand system (\ref{eq:HSA_system}) constructed above can be derived
from a unique consumer preference, and that it is possible to identify
an associated utility function. Appendix \ref{subsec:Demand} supplies
the proof.
\begin{prop}
\label{P-demand} Suppose Assumption \ref{A-demand-2} holds. (a)
There exists a unique monotone, convex, and homothetic rational preference
$\succsim$ over $\mathscr{\mathscr{\bar{Y}}}$ that generates an
HSA demand system (\ref{eq:HSA_system}). (b) This preference $\succsim$
is represented by a homothetic utility function defined by
\[
\ln U_{t}(\mathbf{Y}_{t},\mathbf{z}_{t})=\ln A_{t}(\mathbf{Y}_{t},\mathbf{z}_{t})+\sum_{i=1}^{N}\int_{c_{i}(\mathbf{z}_{t})}^{Y_{it}/A_{t}(\mathbf{Y}_{t},\mathbf{z}_{t})}\frac{S_{t}\left(\xi,z_{it}\right)}{\xi}d\xi,
\]
where $c(\mathbf{z}_{t}):=(c_{1}(\mathbf{z}_{t}),...,c_{N}(\mathbf{z}_{t}))$
is defined by $U_{t}(c(\mathbf{z}_{t}),\mathbf{z}_{t})=1$. (c) The
identified demand system $S_{t}(\cdot)$ and preference $\succsim$
do not depend on the location normalization of $\varphi_{t}^{-1}(\bar{r}_{t},z_{t})$.
\end{prop}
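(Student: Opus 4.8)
The plan is to reduce Proposition~\ref{P-demand} to \citet{matsuyama2017beyond}'s characterization of HSA demand systems (their Proposition~1 and Remark~3) and then to trace how the residual location indeterminacy of $\varphi_t^{-1}$ propagates through the construction preceding the proposition. Throughout, Assumption~\ref{A-demand-2} is used only to guarantee that the sole remaining freedom is a single additive constant.

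For parts~(a) and~(b) I would first check that the object built in~(\ref{eq:HSA_system}) is genuinely an HSA system in the sense of \citet{matsuyama2017beyond}. By Assumption~\ref{A-1}(b) the revenue function $\varphi_t(\cdot,z_{it})$---the inverse in its first argument of the identified $\varphi_t^{-1}(\cdot,z_{it})$---is continuous and strictly increasing, so each share function $S_t(\xi,z_{it})=\exp(\varphi_t(\ln\xi,z_{it}))/\Phi_t$ is continuous, strictly positive, strictly increasing in $\xi$, and valued in $(0,1)$. I would then verify that the implicit equation $\sum_{i\in N_t}S_t(Y_{it}/A,z_{it})=1$ pins down a unique aggregator $A=A_t(\mathbf{Y}_t,\mathbf{z}_t)$ on $\mathscr{\bar{Y}}\times\mathcal{Z}^N$: uniqueness is immediate from strict monotonicity of each $S_t(\cdot,z_{it})$ in $A$, while existence follows from continuity together with the boundary behavior of $S_t$ and the fact that $A=1$ solves it at the reference point $(\mathbf{Y}_t^0,\mathbf{z}_t^0)$. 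Granting these regularity properties, \citeauthor{matsuyama2017beyond}'s Proposition~1 yields a unique monotone, convex, homothetic rational preference $\succsim$ on $\mathscr{\bar{Y}}$ whose Marshallian demand reproduces~(\ref{eq:HSA_system})---this is part~(a)---and their Remark~3 supplies the line-integral representation of the associated homothetic utility, which is precisely the displayed formula once the cardinal normalization $U_t(c(\mathbf{z}_t),\mathbf{z}_t)=1$ is used to fix the lower limits $c_i(\mathbf{z}_t)$. I expect the only genuinely delicate point here to be the domain issue, i.e., showing that the aggregator is well defined on all of $\mathscr{\bar{Y}}$ rather than merely near the reference point; this is where continuity, strict monotonicity, and the structure of the support of output enter.

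For part~(c) I would argue directly. Under Assumption~\ref{A-demand-2} the residual freedom in $\varphi_t^{-1}$ is a single additive constant $a$, which amounts to replacing every identified quantity $Y_{it}$ by $e^{a}Y_{it}$, the revenue function $\varphi_t(y,z)$ by $\varphi_t(y-a,z)$, and hence $S_t(\xi,z)$ by $S_t(e^{-a}\xi,z)$; the budget $\Phi_t$ is unchanged because it equals total observed revenue at the reference point, which is normalization-free. Feeding the rescaled share function into the implicit-aggregator equation shows that the new aggregator is simply the old one evaluated at $e^{-a}\mathbf{Y}_t$, so at corresponding bundles the budget shares are unchanged; substituting $\xi\mapsto e^{-a}\xi$ in the Remark~3 integral then shows that the new utility equals the old utility at the rescaled bundle times a positive constant, so the two normalizations induce the same ordinal ranking, and since that ranking is homothetic it is literally the same binary relation on $\mathbb{R}_{++}^{N}$ once extended there. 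Hence neither the identified demand system $S_t(\cdot)$ nor $\succsim$ depends on the location normalization---only the units in which output is measured do. The bookkeeping that keeps straight which objects are functions of raw quantities and which of the normalized quantity $Y_{it}/A_t$ is the main place where care is needed, though no estimates are involved.
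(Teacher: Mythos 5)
Your overall architecture matches the paper's: reduce (a)--(b) to Matsuyama's integrability result and argue (c) directly by tracking the additive constant. Part (c) of your proposal is essentially the paper's argument (revenue, and hence $S_{t}$ evaluated at the rescaled quantity, is normalization-free; homogeneity of degree one of $A_{t}$ makes $Y_{it}/A_{t}$ invariant; the utility shifts by a monotone transformation). However, there is a genuine gap in part (a). The hypothesis of Matsuyama's theorem is not merely that each share function is continuous, strictly increasing, and valued in $(0,1)$; the substantive condition is the elasticity bound $s_{i}'(Y_{i})Y_{i}<s_{i}(Y_{i})$ (together with $s_{i}'(Y_{i})s_{j}'(Y_{j})\ge0$). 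Your list of properties omits the elasticity bound entirely, and it does not follow from the properties you do list. In the paper this is exactly where the economics enters: $S_{t}'(\xi,z)\xi=S_{t}(\xi,z)\,\partial\varphi_{t}(\ln\xi,z)/\partial\ln\xi$, and the required inequality holds because $0<\partial\varphi_{t}/\partial\ln Y=1+\partial\psi_{t}/\partial\ln Y<1$, i.e., the revenue elasticity with respect to output lies strictly between $0$ and $1$ by Assumption \ref{A-1}(b) and the downward-sloping inverse demand. Without verifying this, Matsuyama's proposition cannot be invoked.

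A second, smaller gap concerns part (b). You cite Matsuyama's Remark 3 for the line-integral utility representation, but that result is proved for a system of \emph{direct} demand functions, whereas the object here is a system of \emph{inverse} demand functions. The paper accordingly supplies its own derivation of the representation (via the indirect utility, Roy's identity, Shephard's lemma, and the degree-zero homogeneity of $\partial U_{t}/\partial Y_{it}$) before integrating to obtain the displayed formula. Your proposal treats this as a direct citation and so skips a step the authors judged necessary to prove. Neither gap requires a different strategy to fix, but both are places where the mathematical content of the proof actually lives.
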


\subsection{Identification in Alternative Settings\label{subsec:Alternative-Settings}}

\subsubsection{Endogenous Labor Input\label{subsec:Endogenous-Labor-Input}}

Identification is possible when $l_{t}$ correlates with $\eta_{t}$.
In the spirits of \citet{ackerberg2015identification} and the dynamic
generalized method of moment approach (e.g., \citealp{arellano1991some,arellano1995another,blundell1998initial,blundell2000gmm}),
we provide identification using lagged labor $l_{t-1}$ as an instrument
for $l_{t}$. Specifically, we follow the approach of \citet{ackerberg2015identification},
which assumes (1) $l_{t}$ correlates with $l_{t-1}$ and (2) the
firm's profit maximization problem regarding $m_{t}$ conditional
on $l_{t}$ is expressed by (\ref{eq:profit_maximization}), which
allows the material demand to be written as $m_{it}=\mathbb{M}_{t}(\omega_{it},k_{it},l_{it},z_{it})$.
This approach has the advantage of being consistent with various data
generating processes regarding the choice of $l_{t}$.\footnote{See \citet{ackerberg2015identification} for examples of such data-generating
processes. For example, $l_{t}$ can be chosen at time $t$ with adjustment
costs; a firm can face an auto-correlated firm-specific wage; or $l_{t}$
can be chosen at time $t-1$ or at an intermediate time between $t$
and $t-1$. } 

Identifying $\mathbb{M}_{t}^{-1}(x_{t},z_{t})$ using $l_{t-1}$ as
an instrument for $l_{t}$ is nontrivial because the model (\ref{eq:model})
includes $l_{t-1}$ in $\bar{h}_{t}\left(x_{t-1},z_{t-1}\right)$.
It is not possible to use the variation of $l_{t-1}$ simultaneously
for two purposes (i.e., identifying $\bar{h}_{t}\left(x_{t-1},z_{t-1}\right)$
and instrumenting $l_{t}$). Therefore, we proceed to identification
in two steps. We first identify $\bar{h}_{t}\left(x_{t-1},z_{t-1}\right)$
(up to location) and then use $l_{t-1}$ to identify $\mathbb{M}_{t}^{-1}(m_{t},k_{t},l_{t},z_{t})$.

\paragraph{Identification of $\bar{h}_{t}\left(x_{t-1},z_{t-1}\right)$.}
\begin{assumption}
\label{assu:labor} (i) Assumptions \ref{A-3} (a), (d), (e), and
(f) hold. (ii) $\eta_{t}$ is independent of $\tilde{w}_{t}:=(k_{t},z_{t},x_{t-1},z_{t-1})'\in\mathcal{\tilde{W}}:=\mathcal{K}\times\mathcal{Z}\times\mathcal{X}\times\mathcal{Z}$
with $E[\eta_{t}|\tilde{w}_{t}]=0$. $\tilde{w}_{t}$ is continuously
distributed on $\mathcal{\tilde{W}}$. (iii) For each $(x_{t-1},z_{t-1})\in\mathcal{X}\times\mathcal{Z}$,
$\mathscr{A}_{m_{t}}\left(x_{t-1},z_{t-1}\right)=\{(\tilde{x}_{t},\tilde{z}_{t})\in\mathcal{X}\times\mathcal{Z}|\partial G_{m_{t}|v_{t}}(\tilde{m}_{t}|\tilde{k}_{t},\tilde{l}_{t},\tilde{z}_{t},x_{t-1},z_{t-1})/\partial m_{t}>0\}$
is non-empty.
\end{assumption}
Assumptions \ref{assu:labor} (i) and (ii) simply modify Assumption
\ref{A-3} such that $l_{t}$ may correlate with $\eta_{t}$. Assumption
\ref{assu:labor} (iii) is innocuous because it is satisfied if the
firm's survival probability at time $t$ conditional on $(x_{t-1},z_{t-1})$
is not 0. 

The conditional distribution of $m_{t}$ given $v_{t}$ satisfies

\[
G_{m_{t}|v_{t}}(m_{t}|v_{t})=G_{\eta_{t}|l_{t}}\left(\mathbb{M}_{t}^{-1}(m_{t},k_{t},l_{t},z_{t})-\bar{h}_{t}\left(x_{t-1},z_{t-1}\right)|l_{t}\right).
\]
Taking the derivatives of both sides with respect to $q_{t}\in\{m_{t},k_{t},z_{t}\}$
and $q_{t-1}\in\{k_{t-1},l_{t-1},m_{t-1},z_{t-1}\}$ and their ratios,
we identify $\partial\mathbb{M}_{t}^{-1}(m,k_{t},l_{t},z_{t})/\partial q_{t}$
and $\partial\bar{h}(x_{t-1},z_{t-1})/\partial q_{t}$ as follows:
\begin{align}
\frac{\partial\mathbb{M}_{t}^{-1}(m_{t},k_{t},l_{t},z_{t})}{\partial q_{t}} & =-\frac{\partial\bar{h}\left(\tilde{x}_{t-1},\tilde{z}_{t-1}\right)}{\partial q_{t-1}}\frac{\partial G_{m_{t}|v_{t}}\left(m_{t}|k_{t},l_{t},z_{t},\tilde{x}_{t-1},\tilde{z}_{t-1}\right)/\partial q_{t}}{\partial G_{m_{t}|v_{t}}\left(m_{t}|k_{t},l_{t},z_{t},\tilde{x}_{t-1},\tilde{z}_{t-1}\right)/\partial q_{t-1}},\label{eq:identified_M}\\
\frac{\partial\bar{h}\left(x_{t-1},z_{t-1}\right)}{\partial q_{t-1}} & =-\frac{\partial\mathbb{M}_{t}^{-1}(\tilde{m}_{t},\tilde{k}_{t},\tilde{l}_{t},\tilde{z}_{t})}{\partial m_{t}}\frac{\partial G_{m_{t}|v_{t}}\left(\tilde{m}_{t}|\tilde{k}_{t},\tilde{l}_{t},\tilde{z}_{t},x_{t-1},z_{t-1}\right)/\partial q_{t-1}}{\partial G_{m_{t}|v_{t}}\left(\tilde{m}_{t}|\tilde{k}_{t},\tilde{l}_{t},\tilde{z}_{t},x_{t-1},z_{t-1}\right)/\partial m_{t}},\label{eq:identified_h}
\end{align}
where $\left(\tilde{x}_{t-1},\tilde{z}_{t-1}\right)\in\mathcal{A}_{q_{t-1}}$
and $\left(\tilde{x}_{t},\tilde{z}_{t}\right)\in\mathcal{A}_{m_{t}}(x_{t-1},z_{t-1})$.
Note that (\ref{eq:identified_M}) is the same as in (\ref{eq:dM/dq}).
Thus, following the same steps as those in the proof for Proposition
\ref{P-step1}, we identify $\partial\mathbb{M}_{t}^{-1}(m,k_{t},l_{t},z_{t})/\partial q_{t}$
up to scale, and then $\partial\bar{h}\left(x_{t-1},z_{t-1}\right)/\partial q_{t-1}$
up to scale from (\ref{eq:identified_h}). 

Define $d_{l}\left(l_{t}\right):=\mathbb{M}_{t}^{-1}(m_{t0}^{*},k_{t}^{*},l_{t},z_{t}^{*})$
for $(m_{t0}^{*},k_{t}^{*},z_{t}^{*})$ in (\ref{eq:norm}) and $d:=\bar{h}_{t}\left(x_{t-1}^{*},z_{t-1}^{*}\right)$
for some point $(x_{t-1}^{*},z_{t-1}^{*})\in\mathcal{X}\times\mathcal{Z}$.
Integrating the identified elasticities in (\ref{eq:identified_M})
and (\ref{eq:identified_h}), we obtain 
\begin{align}
\mathbb{M}_{t}^{-1}(m_{t},k_{t},l_{t},z_{t}) & =d_{l}(l_{t})+\Lambda_{lt}\left(x_{t},z_{t}\right),\label{eq:identified1}\\
\bar{h}\left(x_{t-1},z_{t-1}\right) & =d+\Lambda_{ht}\left(x_{t-1},z_{t-1}\right),\label{eq:identified2}
\end{align}
where function $d_{l}(l_{t})$ and constant $d$ are unknown objects
to be identified; $\Lambda_{lt}\left(x_{t},z_{t}\right)$ and $\Lambda_{ht}\left(x_{t-1},z_{t-1}\right)$
are identified and thus treated as known functions.\footnote{Specifically, $\Lambda_{lt}\left(x_{t},z_{t}\right)$ and $\Lambda_{ht}\left(x_{t-1},z_{t-1}\right)$
are given by 
\begin{align*}
\Lambda_{lt}\left(x_{t},z_{t}\right) & :=\int_{m_{t0}^{*}}^{m_{t}}\frac{\partial\mathbb{M}_{t}^{-1}(s,k_{t},l_{t},z_{t})}{\partial m_{t}}ds+\int_{k_{t}^{*}}^{k_{t}}\frac{\partial\mathbb{M}_{t}^{-1}(m_{t0}^{*},s,l_{t},z_{t})}{\partial k_{t}}ds+\int_{z_{t}^{*}}^{z_{t}}\frac{\partial\mathbb{M}_{t}^{-1}(m_{t0}^{*},k_{t}^{*},l_{t},s)}{\partial z_{t}}ds\\
\Lambda_{ht}\left(x_{t-1},z_{t-1}\right) & :=\int_{m_{t-1}^{*}}^{m_{t-1}}\frac{\partial\bar{h}_{t}(s,k_{t-1},l_{t-1},z_{t-1})}{\partial m_{t-1}}ds+\int_{k_{t-1}^{*}}^{k_{t-1}}\frac{\partial\bar{h}_{t}(m_{t-1}^{*},s,l_{t-1},z_{t-1})}{\partial k_{t-1}}ds\\
 & +\int_{l_{t-1}^{*}}^{l_{t-1}}\frac{\partial\bar{h}_{t}(m_{t-1}^{*},k_{t-1}^{*},s,z_{t-1})}{\partial l_{t-1}}ds+\int_{z_{t-1}^{*}}^{z_{t-1}}\frac{\partial\bar{h}_{t}(m_{t-1}^{*},k_{t-1}^{*},l_{t-1}^{*},s)}{\partial z_{t-1}}ds
\end{align*}
 } 

\paragraph{Identification of $\mathbb{M}_{t}^{-1}(m_{t},k_{t},l_{t},z_{t})$.}

Defining $H_{it}:=\Lambda_{lt}\left(x_{it},z_{it}\right)-\Lambda_{ht}\left(x_{it-1},z_{it-1}\right)$
as a known variable, we rewrite model (\ref{eq:model}) as
\[
H_{it}=d-d_{l}(l_{it})+\eta_{it}.
\]
From $l_{t-1}\bot\eta_{t}$, we obtain the following moment condition
for nonparametric instrument variable (IV) identification: 
\begin{equation}
E\left[H_{it}-d+d_{l}(l_{it})|l_{it-1}\right]=0.\label{eq:moment}
\end{equation}
For instance, if $f_{t}$ is Cobb-Douglas as in (\ref{eq:AR1}), then
$d_{l}(l_{t})=-\theta_{l}\left(l_{t}-l_{t}^{*}\right)$ from (\ref{eq:omega}),
and the moment condition (\ref{eq:moment}) becomes that for linear
IV regression:
\[
E\left[H_{it}-d-\theta_{l}(l_{it}-l_{it}^{*})|l_{it-1}\right]=0.
\]
A standard procedure of linear IV regression identifies $(d,\theta_{l})$
if $l_{it}$ sufficiently correlates with $l_{it-1}$. 

Following the literature on nonparametric IV (e.g, \citealp{newey2003instrumental}),
we assume that $l_{t-1}$ satisfies the following completeness condition. 
\begin{assumption}
\label{assu:completeness} For all functions $\delta(l_{t}):\mathcal{L}\rightarrow\mathbb{R}$
such that $E[\delta(l_{t})|l_{t-1}]<\infty$, $E[\delta(l_{t})|l_{t-1}]=0$
a.s. implies $\delta(l_{t})=0$ a.s..
\end{assumption}
With Assumption \ref{assu:completeness}, the moment condition (\ref{eq:moment})
uniquely identifies $\{d,d_{l}(l_{t})\}$.\footnote{The proof is as follows. Suppose $\{\tilde{d},\tilde{d}_{l}(l_{it})\}$
also satisfies the moment condition (\ref{eq:moment}). Then, it holds
that $E\left[\tilde{d}-d+\tilde{d}_{l}(l_{it})-d_{l}(l_{it})|l_{it-1}\right]=0$
a.s. The completeness condition implies $\tilde{d}-d+\tilde{d}_{l}(l_{it})-d_{l}(l_{it})=0$
a.s. Since $\tilde{d}_{l}(l_{t}^{*})=d_{t}(l_{t}^{*})=0$ from Assumption
\ref{A-2}, $\tilde{d}=d$ holds so that $\tilde{d}_{0}(l_{it})=d_{0}(l_{it})$.} Since $E[\varepsilon_{t}|x_{t},z_{t}]=0$, step 1 continues to identify
$\phi_{t}(\cdot)$. Therefore, once $\mathbb{M}_{t}^{-1}(m_{t},k_{t},l_{t},z_{t})$
is identified, step 3 identifies all the same objects as before. 
\begin{prop}
\label{P-IV2-1} Suppose that $l_{t}$ may correlate with $\eta_{t}$
and that Assumptions \ref{A-1}\textendash \ref{A-2}, \ref{A-FOC},
\ref{assu:labor} and \ref{assu:completeness} hold. Then, the production
function, output quantities, output prices and TFP are identified
up to scale and location; markups and output elasticities are identified
up to scale.
\end{prop}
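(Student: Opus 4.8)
The plan is to run the three-step identification argument sketched in Subsection \ref{subsec:Endogenous-Labor-Input}, checking at each stage that nothing breaks when Assumption \ref{A-3}(b) is weakened to Assumption \ref{assu:labor}, and then invoking Proposition \ref{P-step3} for the last step. \emph{Step 1} is immediate: Assumption \ref{A-1}(d), $E[\varepsilon_{it}|x_{t},z_{t}]=0$, is untouched, so the argument of Lemma \ref{lem:step2} applies verbatim and recovers $\phi_{t}(x_{t},z_{t})=E[r_{it}|x_{t},z_{t}]$, $\bar{r}_{it}=\phi_{t}(x_{it},z_{it})$, and $\varepsilon_{it}=r_{it}-\phi_{t}(x_{it},z_{it})$.

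For \emph{Step 2}, Assumption \ref{assu:labor}(ii) gives $\eta_{t}\perp(k_{t},z_{t},x_{t-1},z_{t-1})$ conditional on $l_{t}$, so $G_{m_{t}|v_{t}}(m_{t}|v_{t})=G_{\eta_{t}|l_{t}}(\mathbb{M}_{t}^{-1}(m_{t},k_{t},l_{t},z_{t})-\bar{h}_{t}(x_{t-1},z_{t-1})\,|\,l_{t})$. Differentiating in $q_{t}\in\{m_{t},k_{t},z_{t}\}$ and in $q_{t-1}\in\{k_{t-1},l_{t-1},m_{t-1},z_{t-1}\}$ and taking ratios cancels the unknown conditional density; using the non-vanishing conditions in Assumption \ref{assu:labor} to locate admissible evaluation points and fixing scale by the normalization in Assumption \ref{A-2} (following the steps in the proof of Proposition \ref{P-step1}), I obtain the derivatives of $\mathbb{M}_{t}^{-1}$ with respect to $m_{t},k_{t},z_{t}$ and of $\bar{h}_{t}$ with respect to $q_{t-1}$, exactly as in (\ref{eq:identified_M})--(\ref{eq:identified_h}). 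Integrating yields (\ref{eq:identified1})--(\ref{eq:identified2}), i.e. $\mathbb{M}_{t}^{-1}(x_{t},z_{t})=d_{l}(l_{t})+\Lambda_{lt}(x_{t},z_{t})$ and $\bar{h}_{t}(x_{t-1},z_{t-1})=d+\Lambda_{ht}(x_{t-1},z_{t-1})$ with $\Lambda_{lt},\Lambda_{ht}$ known and only the scalar $d$ and the function $d_{l}(\cdot)$, normalized by $d_{l}(l_{t}^{*})=0$, still to be pinned down. I then form the known variable $H_{it}:=\Lambda_{lt}(x_{it},z_{it})-\Lambda_{ht}(x_{it-1},z_{it-1})$, rewrite (\ref{eq:model}) as $H_{it}=d-d_{l}(l_{it})+\eta_{it}$, and, using $l_{t-1}\perp\eta_{t}$, invoke the moment condition (\ref{eq:moment}); Assumption \ref{assu:completeness} together with $d_{l}(l_{t}^{*})=0$ then identifies $\{d,d_{l}(\cdot)\}$ uniquely. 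Hence $\mathbb{M}_{t}^{-1}(\cdot)$, and therefore $\omega_{it}=\mathbb{M}_{t}^{-1}(x_{it},z_{it})$ and the distribution $G_{\eta_{t}}$ (via $\eta_{it}=\omega_{it}-\bar{h}_{t}(x_{it-1},z_{it-1})$), are identified up to the location and scale fixed by Assumption \ref{A-2}.

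\emph{Step 3} is a direct citation of Proposition \ref{P-step3}: with $\phi_{t}(\cdot)$ and $\mathbb{M}_{t}^{-1}(\cdot)$ known and Assumptions \ref{A-1} and \ref{A-FOC} unchanged, equations (\ref{eq:phi_qt}) and the material first-order condition (\ref{eq:foc_m}) identify each firm's markup via (\ref{eq:markup_identified})--(\ref{eq:markup_function}); (\ref{eq:output_elasticities_pro2}) identifies the output elasticities; integration as in (\ref{eq:f_integral}) with $f_{t}(m_{t0}^{*},k_{t}^{*},l_{t}^{*})=0$ identifies $f_{t}(\cdot)$; and then $\varphi_{t}^{-1}(\bar{r}_{t},z_{t})=f_{t}(x_{t})+\mathbb{M}_{t}^{-1}(x_{t},z_{t})$, $p_{it}=\bar{r}_{it}-\varphi_{t}^{-1}(\bar{r}_{it},z_{it})$, and $\omega_{it}=\mathbb{M}_{t}^{-1}(x_{it},z_{it})$ are identified. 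The remaining indeterminacy is precisely the triple $(a_{1t},a_{2t},b_{t})$ of (\ref{eq:equivalence}), giving ``up to location and scale'' for the production function, output quantities, output prices, and TFP, and ``up to scale'' for markups and output elasticities.

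The main obstacle is the double role of $l_{t-1}$ in Step 2: it enters $\bar{h}_{t}(x_{t-1},z_{t-1})$ in (\ref{eq:model}) and is also the instrument for the endogenous $l_{t}$, so the same variation cannot serve both purposes. The resolution --- and the only part of the argument that is not bookkeeping reusing Lemma \ref{lem:step2} and Proposition \ref{P-step3} --- is the split in Step 2: first identify everything that does \emph{not} require variation in $l_{t}$ (the $m_{t},k_{t},z_{t}$-derivatives of $\mathbb{M}_{t}^{-1}$ and the $q_{t-1}$-derivatives of $\bar{h}_{t}$) from derivative ratios of $G_{m_{t}|v_{t}}$, which needs only conditional-on-$l_{t}$ independence of $\eta_{t}$; then, after subtracting the now-known $\Lambda_{lt}$ and $\Lambda_{ht}$, recover the $l_{t}$-component $d_{l}(l_{t})$ by a nonparametric IV regression of $H_{it}$ on $l_{it}$ instrumented by $l_{it-1}$. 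The delicate point there is verifying that the completeness condition of Assumption \ref{assu:completeness} delivers uniqueness and that the normalization $d_{l}(l_{t}^{*})=0$ removes the leftover additive constant $d$.
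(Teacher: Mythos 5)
Your proposal is correct and follows essentially the same route as the paper: Step 1 via Lemma \ref{lem:step2} unchanged, Step 2 split into derivative-ratio identification of the $(m_{t},k_{t},z_{t})$- and $q_{t-1}$-components conditional on $l_{t}$ followed by nonparametric IV recovery of $d_{l}(\cdot)$ and $d$ from $H_{it}=d-d_{l}(l_{it})+\eta_{it}$ using $l_{t-1}$ and completeness, and Step 3 by reusing the argument of Proposition \ref{P-step3}. You also correctly isolate the one non-routine point --- that $l_{t-1}$ cannot simultaneously identify $\bar{h}_{t}$ and instrument $l_{t}$, which is exactly why the paper stages Step 2 the way it does --- and the normalization $d_{l}(l_{t}^{*})=0$ from Assumption \ref{A-2} pins down $d$ as in the paper's footnote.
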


\subsubsection{Endogenous Firm Characteristics}

Firm characteristics $z_{t}$ may correlate with $\eta_{t}$. For
simplicity, we again assume that $l_{t}$ is exogenous. We show that
even in the absence of any IV for $z_{t}$, we can identify the markup
and the production function. If valid IVs for $z_{t}$ are available,
all the same objects can be identified as before. 

We modify Assumption \ref{A-3} so that $z_{t}$ may correlate with
$\eta_{t}$.
\begin{assumption}
\label{assu:z_t} (i) Assumptions \ref{A-3} (a), (d), (e), and (f)
hold. (ii) $\eta_{t}$ is independent of $\bar{w}_{t}:=(k_{t},l_{t},x_{t-1},z_{t-1})'\in\mathcal{\bar{W}}:=\mathcal{K}\times\mathcal{L}\times\mathcal{X}\times\mathcal{Z}$.
$\bar{w}$ is continuously distributed on $\mathcal{\bar{W}}$. (iii)
For each $(x_{t-1},z_{t-1})\in\mathcal{X}\times\mathcal{Z}$, $\mathscr{A}_{m_{t}}\left(x_{t-1},z_{t-1}\right)=\{(\tilde{x}_{t},\tilde{z}_{t})\in\mathcal{X}\times\mathcal{Z}|\partial G_{m_{t}|v_{t}}(\tilde{m}_{t}|\tilde{k}_{t},\tilde{l}_{t},\tilde{z}_{t},x_{t-1},z_{t-1})/\partial m_{t}>0\}$
is non-empty.
\end{assumption}

\paragraph{Identification without Instrument Variables.}

The conditional distribution of $m_{t}$ given $v_{t}$ satisfies

\[
G_{m_{t}|v_{t}}(m|v_{t})=G_{\eta_{t}|z_{t}}\left(\mathbb{M}_{t}^{-1}(m,k_{t},l_{t},z_{t})-\bar{h}_{t}\left(x_{t-1},z_{t-1}\right)|z_{t}\right).
\]
Taking the derivatives of both sides with respect to $m$, $q_{t}\in\{m_{t},k_{t},l_{t}\}$
and $q_{t-1}\in\{k_{t-1},l_{t-1},m_{t-1},z_{t-1}\}$, we obtain (\ref{eq:identified_M})
and (\ref{eq:identified_h}). Following the same steps as in subsection
\ref{subsec:Endogenous-Labor-Input}, we identify $\partial\mathbb{M}_{t}^{-1}(m,k_{t},l_{t},z_{t})/\partial q_{t}$
and $\partial\bar{h}\left(x_{t-1},z_{t-1}\right)/\partial q_{t-1}$
up to scale. 

Since $E[\varepsilon_{t}|x_{t},z_{t}]=0$, Lemma \ref{lem:step2}
continues to hold and $\phi_{t}(\cdot)$ is identified. Therefore,
using (\ref{eq:phi_qt}) and the first-order condition (\ref{eq:foc_m})
with the identified derivatives of $\mathbb{M}_{t}^{-1}(\cdot)$,
it is possible to identify markup (\ref{eq:markup_identified}) and
output elasticities (\ref{eq:output_elasticities_pro2}) up to scale.
Integrating the output elasticities, we can identify the production
function, following (\ref{eq:f_integral}).
\begin{prop}
\label{prop:withoutIV} Suppose that $z_{t}$ may correlate with $\eta_{t}$
and that Assumptions \ref{A-data}, \ref{A-2}, \ref{A-FOC}, and
\ref{assu:z_t} hold. Then, we can identify the markup $\partial\varphi_{t}^{-1}(\bar{r}_{it},z_{it})/\partial r_{t}$
of each firm up to scale and the production function $f_{t}(\cdot)$
up to scale and location. 
\end{prop}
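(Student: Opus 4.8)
The plan is to mirror the argument behind Lemma \ref{lem:step2} and Propositions \ref{P-step1} and \ref{P-step3}, but to treat the (possibly endogenous) characteristic $z_t$ as a conditioning variable rather than as an excluded demand shifter, exactly as in the endogenous-labor construction of subsection \ref{subsec:Endogenous-Labor-Input}. The key observation is that, even though $z_t$ may be correlated with $\eta_t$, Assumption \ref{assu:z_t}(ii) still delivers $\eta_t \perp \bar{w}_t$ with $\bar{w}_t := (k_t,l_t,x_{t-1},z_{t-1})'$, so that
\[
G_{m_t|v_t}(m|v_t)=G_{\eta_t|z_t}\left(\mathbb{M}_t^{-1}(m,k_t,l_t,z_t)-\bar{h}_t(x_{t-1},z_{t-1})\,|\,z_t\right).
\]
Every conditional density appearing below is $g_{\eta_t|z_t}(\cdot\,|\,z_t)$ evaluated at the common argument $\mathbb{M}_t^{-1}(m_t,k_t,l_t,z_t)-\bar{h}_t(x_{t-1},z_{t-1})$ with the same $z_t$, and hence it cancels in any ratio of derivatives taken with respect to the arguments other than $z_t$.

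First I would differentiate this identity with respect to $q_t\in\{m_t,k_t,l_t\}$ and $q_{t-1}\in\{k_{t-1},l_{t-1},m_{t-1},z_{t-1}\}$ and form ratios; using the rank and support conditions carried over through Assumption \ref{assu:z_t}(i)---namely Assumptions \ref{A-3}(f) and \ref{assu:z_t}(iii)---to choose evaluation points $(\tilde{x}_{t-1},\tilde{z}_{t-1})$ and $(\tilde{x}_t,\tilde{z}_t)$ at which the relevant derivatives of $G_{m_t|v_t}$ do not vanish, this yields the identification formulas (\ref{eq:identified_M})--(\ref{eq:identified_h}) for $\partial\mathbb{M}_t^{-1}(m_t,k_t,l_t,z_t)/\partial q_t$ with $q_t\in\{m_t,k_t,l_t\}$ and for $\partial\bar{h}(x_{t-1},z_{t-1})/\partial q_{t-1}$. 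The only free quantity left in (\ref{eq:identified_M}) is the number $\partial\bar{h}(\tilde{x}_{t-1},\tilde{z}_{t-1})/\partial q_{t-1}$, and it is pinned down by the normalization in Assumption \ref{A-2} exactly as in the passage (\ref{eq:dM/dq})--(\ref{eq:hbar})--(\ref{eq:dM_dq_identified}): integrate $\partial\mathbb{M}_t^{-1}(s,k_t^*,l_t^*,z_t^*)/\partial m_t$ over $s\in[m_{t0}^*,m_{t1}^*]$ and equate to $1$. Thus $\partial\mathbb{M}_t^{-1}/\partial q_t$ for $q_t\in\{m_t,k_t,l_t\}$ is identified (up to the scale normalization).

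Finally, I would close the argument exactly as in Step 3. Since $E[\varepsilon_t|x_t,z_t]=0$ still holds, Lemma \ref{lem:step2} continues to apply, so $\phi_t(x_t,z_t)=E[r_t|x_t,z_t]$, $\bar{r}_{it}=\phi_t(x_{it},z_{it})$, and $\varepsilon_{it}$ are identified. Substituting $\partial\mathbb{M}_t^{-1}/\partial m_t$, the identified $\partial\phi_t/\partial m_t$, and the material-expenditure ratio from Assumption \ref{A-data}(c) into the first-order condition (\ref{eq:foc_m}) gives the markup $\partial\varphi_t^{-1}(\bar{r}_{it},z_{it})/\partial\bar{r}_t$ up to scale through (\ref{eq:markup_identified})--(\ref{eq:markup_function}); combining this with (\ref{eq:phi_qt}) gives the output elasticities $\partial f_t(x_t)/\partial q_t$ for $q_t\in\{m_t,k_t,l_t\}$ up to scale through (\ref{eq:output_elasticities_pro2}); and integrating these in the $(m,k,l)$ directions, using $f_t(m_{t0}^*,k_t^*,l_t^*)=0$ from Assumption \ref{A-2}, recovers $f_t$ up to scale and location as in (\ref{eq:f_integral}).

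I expect the main obstacle---and the reason the conclusion is weaker than that of Proposition \ref{P-step3}---to be the impossibility of differentiating the displayed identity in the $z_t$ direction: doing so introduces the extra term $\partial G_{\eta_t|z_t}(\cdot\,|\,z_t)/\partial z_t$, which does not cancel in ratios, so $\partial\mathbb{M}_t^{-1}/\partial z_t$, and with it the full control function $\mathbb{M}_t^{-1}(x_t,z_t)$, the inverse revenue function $\varphi_t^{-1}$, output quantities, output prices, and TFP, remain unidentified absent an instrument for $z_t$. Accordingly the proof must be organized so that neither the markup formula (\ref{eq:markup_identified}) nor the elasticity formula (\ref{eq:output_elasticities_pro2}) nor the integration recovering $f_t$ ever requires a derivative of $\mathbb{M}_t^{-1}$ in $z$ or an integration along $z$; this is the case, since $f_t$ does not vary in $z$ and (\ref{eq:markup_identified})--(\ref{eq:output_elasticities_pro2}) involve only $m_t,k_t,l_t$ derivatives. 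A minor point to verify is that the evaluation points supplied by Assumptions \ref{A-3}(f) and \ref{assu:z_t}(iii) can be chosen uniformly over the remaining arguments and that $g_{\eta_t|z_t}(\cdot\,|\,z_t)$ is positive on the relevant range, so that the ratios are well defined; both are built into those assumptions.
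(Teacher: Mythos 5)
Your proposal is correct and follows essentially the same route as the paper: condition on $z_{t}$ so that the density $g_{\eta_{t}|z_{t}}$ cancels in ratios of derivatives taken in directions other than $z_{t}$, recover $\partial\mathbb{M}_{t}^{-1}/\partial q_{t}$ for $q_{t}\in\{m_{t},k_{t},l_{t}\}$ with the scale fixed by Assumption \ref{A-2}, and then combine Lemma \ref{lem:step2} with the first-order condition (\ref{eq:foc_m}) to obtain the markup via (\ref{eq:markup_identified}), the output elasticities via (\ref{eq:output_elasticities_pro2}), and $f_{t}$ by integration as in (\ref{eq:f_integral}). Your added explanation of why the $z_{t}$-derivative is unavailable, and why none of the surviving formulas require it, matches the paper's reasoning for the weaker conclusion.
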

Applying Propositions \ref{P-3} and Proposition \ref{P-CRS}, it
is possible to identify the changes in markup and output elasticities
overtime and the levels of markup and elasticities, respectively. 

\paragraph{Identification with Instrument Variables.}

To identify $\varphi_{t}^{-1}(\cdot)$ and $\mathbb{M}_{t}^{-1}(\cdot)$,
we need a set of IVs $\zeta_{t}$ for $z_{t}$. A candidate for $\zeta_{t}$
is $z_{t-1}$ if $z_{t-1}$ correlates with $z_{t}$. 
\begin{assumption}
\label{A-IV_z} (a) There exits a set of instruments $\zeta_{t}$
such that $E[\eta_{t}|\zeta_{t}]=0$ a.s. (b) For all functions $\delta(z_{t}):\mathcal{Z}\rightarrow\mathbb{R}$
such that $E[\delta(z_{t})|\zeta_{t}]<\infty$, $E[\delta(z_{t})|\zeta_{t}]=0$
a.s. implies $\delta(z_{t})=0$ a.s.
\end{assumption}
Following similar steps by which to derive (\ref{eq:identified2}),
we obtain 
\begin{align*}
\mathbb{M}_{t}^{-1}(m_{t},k_{t},l_{t},z_{t}) & =d_{z}(z_{t})+\Lambda_{zt}\left(x_{t},z_{t}\right)
\end{align*}
and (\ref{eq:identified2}), where $d_{z}(z_{t}):=\mathbb{M}_{t}^{-1}(m_{t0}^{*},k_{t}^{*},l_{t}^{*},z_{t})$
is an unknown function to be identified; $\Lambda_{zt}\left(x_{t},z_{t}\right)$
is identified and treated as a known function.\footnote{Specifically, $\Lambda_{zt}\left(x_{t},z_{t}\right)$ is given by
\begin{align*}
\Lambda_{zt}\left(x_{t},z_{t}\right) & :=\int_{m_{t0}^{*}}^{m_{t}}\frac{\partial\mathbb{M}_{t}^{-1}(s,k_{t},l_{t},z_{t})}{\partial m_{t}}ds+\int_{k_{t}^{*}}^{k_{t}}\frac{\partial\mathbb{M}_{t}^{-1}(m_{t0}^{*},s,l_{t},z_{t})}{\partial k_{t}}ds+\int_{l_{t}^{*}}^{l_{t}}\frac{\partial\mathbb{M}_{t}^{-1}(m_{t0}^{*},k_{t}^{*},s,z_{t})}{\partial l_{t}}ds.
\end{align*}
} Defining $H_{it}^{zh}:=\Lambda_{zt}\left(x_{it},z_{it}\right)-\Lambda_{ht}\left(x_{it-1},z_{it-1}\right)$
as a known variable, we rewrite model (\ref{eq:model}) as
\[
H_{it}^{zh}=d-d_{z}(z_{it})+\eta_{it}.
\]
From Assumption \ref{A-IV_z}, the moment condition, $E[\eta_{it}|\zeta_{it}]=E\left[H_{it}^{zh}-d+d_{z}(z_{it})|\zeta_{it}\right]=0$,
identifies $\{d,d_{z}(z_{t})\}$.
\begin{prop}
\label{P-IV2} Suppose that $z_{t}$ may correlate with $\eta_{t}$
and a set of IVs $\zeta_{t}$ satisfies Assumption \ref{A-IV_z}.
Suppose Assumptions \ref{A-data}, \ref{A-2}, \ref{A-FOC}, and \ref{assu:z_t}
hold. Then, we can identify $\varphi_{t}^{-1}(\cdot)$ and $\mathbb{M}_{t}^{-1}(\cdot)$
up to scale and location and identify $G_{\eta}(\cdot)$ up to scale.
That is, output quantities, output prices and TFP are identified up
to scale and location.
\end{prop}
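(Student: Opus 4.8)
The plan is to follow the three-step architecture of Propositions~\ref{P-step1} and~\ref{P-step3}, inserting one nonparametric instrumental-variable inversion at the single place where the endogeneity of $z_t$ breaks the transformation-model argument. First, as in the discussion preceding the statement, I would condition on $z_t$ inside $G_{m_t\mid v_t}(m\mid v_t)=G_{\eta_t\mid z_t}\big(\mathbb{M}_t^{-1}(m,k_t,l_t,z_t)-\bar h_t(x_{t-1},z_{t-1})\mid z_t\big)$ and form ratios of derivatives with respect to $q_t\in\{m_t,k_t,l_t\}$ and $q_{t-1}\in\{k_{t-1},l_{t-1},m_{t-1},z_{t-1}\}$; the conditional density $g_{\eta_t\mid z_t}$ cancels, yielding (\ref{eq:identified_M}) and (\ref{eq:identified_h}) and hence identification of $\partial\mathbb{M}_t^{-1}/\partial q_t$ (for $q_t\in\{m_t,k_t,l_t\}$) and $\partial\bar h/\partial q_{t-1}$ up to a common scale fixed by Assumption~\ref{A-2}. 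Integrating these partials as in (\ref{eq:f_integral}) produces $\mathbb{M}_t^{-1}(m_t,k_t,l_t,z_t)=d_z(z_t)+\Lambda_{zt}(x_t,z_t)$ together with (\ref{eq:identified2}), where $\Lambda_{zt}$ and $\Lambda_{ht}$ are already identified and the only remaining unknowns are the constant $d$ and the scalar-argument function $d_z(z_t)=\mathbb{M}_t^{-1}(m_{t0}^*,k_t^*,l_t^*,z_t)$, which the transformation-model step cannot recover because differentiating in $z_t$ would bring in $\partial G_{\eta_t\mid z_t}/\partial z_t$.

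The main step is to pin down $d_z(\cdot)$ by a nonparametric IV argument. Substituting the two representations into the Markov equation (\ref{eq:model}) and defining the known variable $H_{it}^{zh}:=\Lambda_{zt}(x_{it},z_{it})-\Lambda_{ht}(x_{it-1},z_{it-1})$ gives the additively separable model $H_{it}^{zh}=d-d_z(z_{it})+\eta_{it}$. By Assumption~\ref{A-IV_z}(a), $E[H_{it}^{zh}-d+d_z(z_{it})\mid\zeta_{it}]=E[\eta_{it}\mid\zeta_{it}]=0$ a.s., and the completeness condition in Assumption~\ref{A-IV_z}(b) makes $(d,d_z(\cdot))$ the unique solution: if $(\tilde d,\tilde d_z)$ also satisfies the moment equation, then $E[(\tilde d-d)+(\tilde d_z-d_z)(z_{it})\mid\zeta_{it}]=0$ forces $(\tilde d-d)+(\tilde d_z-d_z)(z_{it})=0$ a.s., and evaluating at the normalization point $z_t^*$, where $d_z(z_t^*)=\mathbb{M}_t^{-1}(m_{t0}^*,k_t^*,l_t^*,z_t^*)=0$ by Assumption~\ref{A-2}, gives $\tilde d=d$ and hence $\tilde d_z=d_z$. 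Thus $\mathbb{M}_t^{-1}(\cdot)$ is identified up to scale and location, and with it $\omega_{it}=\mathbb{M}_t^{-1}(x_{it},z_{it})$, the regression function $\bar h_t(x_{t-1},z_{t-1})=d+\Lambda_{ht}(x_{t-1},z_{t-1})$, the innovation $\eta_{it}=\omega_{it}-\bar h_t(x_{it-1},z_{it-1})$, and therefore $G_\eta(\cdot)$ up to scale.

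It remains to rerun Steps 1 and 3 unchanged. Since $E[\varepsilon_t\mid x_t,z_t]=0$ is still imposed (Assumption~\ref{A-data}), Lemma~\ref{lem:step2} identifies $\phi_t(\cdot)=E[r_t\mid x_t,z_t]$, $\bar r_{it}$, and $\varepsilon_{it}$ as in (\ref{eq:phi_t}); that $z_t$ is endogenous for $\eta_t$ is irrelevant here because Step~1 only uses mean-independence of the measurement error. Then the proof of Proposition~\ref{P-step3} applies verbatim: combining (\ref{eq:phi_qt}) with the material first-order condition (\ref{eq:foc_m}) and the now-identified $\partial\mathbb{M}_t^{-1}/\partial m_t$ identifies each firm's markup $\partial\varphi_t^{-1}(\bar r_{it},z_{it})/\partial\bar r_t$ through (\ref{eq:markup_identified}); (\ref{eq:output_elasticities_pro2}) then gives the output elasticities $\partial f_t(x_t)/\partial q_t$ for $q_t\in\{m_t,k_t,l_t\}$; integrating as in (\ref{eq:f_integral}) with the normalization in Assumption~\ref{A-2} recovers $f_t(\cdot)$; the identity $\varphi_t^{-1}(\bar r_t,z_t)=f_t(x_t)+\mathbb{M}_t^{-1}(x_t,z_t)$ on the nonempty fiber $X_t(\bar r_t,z_t)$ identifies $\varphi_t^{-1}(\cdot)$; and $p_{it}=\bar r_{it}-\varphi_t^{-1}(\bar r_{it},z_{it})$ gives output prices. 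All of these objects are identified only modulo the location--scale parameters $(a_{1t},a_{2t},b_t)$ of (\ref{eq:equivalence}), i.e.\ up to scale and location. I expect the nonparametric IV inversion to be the only genuinely new obstacle: all of its identifying force is supplied by the completeness condition of Assumption~\ref{A-IV_z}(b), and without it $d_z(\cdot)$---and therefore the $z_t$-dependence of $\mathbb{M}_t^{-1}$, of the markup level, and of $\varphi_t^{-1}$---would not be identified even though the production-function elasticities still would be.
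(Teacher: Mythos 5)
Your proposal is correct and follows essentially the same route as the paper: derive the partials of $\mathbb{M}_{t}^{-1}$ and $\bar{h}$ from ratios of derivatives of $G_{m_{t}|v_{t}}$ (with the conditional density $g_{\eta_{t}|z_{t}}$ cancelling), integrate to isolate the unknown function $d_{z}(z_{t})$ and constant $d$, and pin them down via the moment condition $E[H_{it}^{zh}-d+d_{z}(z_{it})|\zeta_{it}]=0$ together with the completeness condition in Assumption \ref{A-IV_z}(b) and the normalization $d_{z}(z_{t}^{*})=0$. Your explicit uniqueness argument mirrors the footnote the paper gives for the analogous labor-IV case, and your observation that Steps 1 and 3 then apply unchanged matches the paper's treatment.
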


\subsubsection{Alternative Settings}

The Appendix presents the identification results in three alternative
settings. The identification argument remains the same but requires
some additional steps. 

\paragraph{Discrete Firm Characteristics.}

Observable firm characteristics $z_{t}$ may constitute a discrete
variable. Appendix \ref{subsec:Discrete_z} provides a proof.

\paragraph{Unobservable Firm-Level Demand-Shifter.}

The identification can incorporate an unobserved demand shifter $\xi_{it}$,
which can be called quality. Let $y_{it}^{\dagger}:=y_{it}+\xi_{it}$
and $p_{it}^{\dagger}:=y_{it}-\xi_{it}$ be the quality-adjusted output
and the quality-adjusted price, respectively. We consider the following
inverse function and revenue function:
\begin{align}
p_{it}^{\dagger} & =\psi_{t}\left(y_{it}^{\dagger},z_{it}\right),\nonumber \\
\bar{r}_{it} & =\varphi_{t}\left(y_{it}^{\dagger},z_{it}\right)=\varphi_{t}\left(f_{t}(x_{t})+\omega_{it}^{\dagger},z_{it}\right)\label{eq:quality}
\end{align}
where $\omega_{it}^{\dagger}\equiv\omega_{it}+\xi_{it}$ is a composite
of TFP and quality. In Appendix \ref{subsec:quality}, we show that
(\ref{eq:quality}) derives from a representative consumer's maximization
problem where $\exp(\xi_{it})$ enters the utility function in a multiplicative
manner with quantity. In (\ref{eq:quality}), higher quality allows
a firm to earn more revenue for a given output. We assume that $\tilde{\omega}_{it}$
follows a first-order Markov process $\omega_{it}^{\dagger}=h\left(\omega_{it-1}^{\dagger}\right)+\eta_{it}$. 

Under the current setting, the model structure becomes identical to
the main model where $(p_{it},y_{it},\omega_{it})$ are replaced with
$(p_{it}^{\dagger},y_{it}^{\dagger},\omega_{it}^{\dagger})$. Therefore,
applying precisely the same steps, we can identify all functions identified
in Section 3 and the quality-adjusted variables $(p_{it}^{\dagger},y_{it}^{\dagger},\omega_{it}^{\dagger})$.

\paragraph{IID Productivity Shock.}

As an alternative error structure, we consider an i.i.d. production
shock $e_{it}$ to output instead of a measurement error $\varepsilon_{it}$.
Then, the firm's observed revenue $r_{it}$ and inputs $x_{it}$ are
related as follows:
\begin{align}
r_{it} & =\varphi_{t}(f_{t}(x_{it})+\omega_{it}+e_{it},z_{it}).\label{eq:rev_iid}
\end{align}
A firm chooses $m_{it}$ at time $t$ by maximizing the expected profit:
\begin{align*}
m_{it} & =\mathbb{M}_{t}(\omega_{it},k_{it},l_{it},z_{it})\\
 & :=\argmax_{m\in\mathcal{M}}\ E\left[\exp\left(\varphi_{t}(f_{t}(m,k_{it},l_{it})+\omega_{it}+e_{it},z_{it})\right)|\mathcal{I}_{it}\right]-\exp(p_{t}^{m}+m).
\end{align*}
where $\mathcal{I}_{it}$ is the set of information for the firm that
includes all past variables and all time$-t$ variables except $e_{it}$.
The identification of the control function $\omega_{it}=\mathbb{M}_{t}^{-1}\left(m_{it},k_{it},l_{it},z_{it}\right)$
remains the same because $\mathbb{M}_{t}^{-1}(\cdot)$ continues to
be a function of the same variables.

In the second step, the revenue function (\ref{eq:rev_iid}) is written
as: 
\begin{align}
\varphi_{t}^{-1}(r_{it},z_{it}) & =f_{t}(x_{it})+\mathbb{M}_{t}^{-1}(x_{it},z_{it})+e_{it}.\label{eq:model2-IID}
\end{align}
Model (\ref{eq:model2-IID}) also belongs to the class of transformation
models studied by \citet{chiappori2015nonparametric}. Therefore,
by applying the nonparametric identification of a transformation model
and using the first-order condition for the material, we can identify
$\varphi_{t}(\cdot)$ and $f_{t}(\cdot)$ up to scale and location
from the conditional distribution of $r_{it}$ given $(x_{it},z_{it})$
under the assumptions similar to those for Proposition \ref{P-step3}.
As an additional complication, the first-order condition includes
expectation with respect to $e_{t}$. Therefore, we first identify
the distribution of $e_{t}$ to derive the first-order condition.
Appendix \ref{subsec:IID-Productivity-Shock} provides a proof. 

Because of the i.i.d. shock $e_{it}$, the realized value of $\partial\varphi_{t}^{-1}(r_{it},z_{it})/\partial r_{t}$
no longer equals the markup. We identify the markup from the cost
minimization, following \citet{hall1988relation} and \citet{de2012markups}.
As shown in Appendix \ref{subsec:IID-Productivity-Shock}, the equation
for the markup $\mu_{it}$ becomes 
\[
\mu_{it}=\frac{\partial f_{t}(x_{it})/\partial m_{it}}{\exp(p_{t}^{m}+m_{it})/\exp\left(r_{it}-e_{it}\right)}.
\]
The difference from the original Hall-De Loecker-Warzynski markup
equation (\ref{eq:DLW}) is $\exp\left(r_{it}-e_{it}\right)$ instead
of $\exp\left(\bar{r}_{it}\right)=\exp\left(r_{it}-\varepsilon_{it}\right)$.
While $\bar{r}_{t}=\phi_{t}(x_{t},z_{t})$ in (\ref{eq:DLW}) is a
deterministic function of $(x_{t},z_{t})$, $r_{t}-e_{t}$ is generally
not. Therefore, the markups are different across firms even after
being conditioned on $(x_{t},z_{t})$. 

\section{Concluding Remarks\label{sec:Concluding-Remarks}}

The current study developes constructive nonparametric identification
of production function and markup from revenue data. Our method simultaneously
addresses two fundamental identification issues raised in the literature
of production function estimation since \citet{ma44ecma}\textemdash namely,
correlations between inputs and TFP, and biases from markup heterogeneity
when revenue is used as output. Under standard assumptions, when revenue
is modeled as a function of output (rather than a mere proxy for output)
and firm's observed characteristics, various economic objects of interest
can be identified from revenue data. In an ongoing follow-up research,
we provide an estimation procedure and plan to estimate these objects
from an actual dataset.

\bibliographystyle{asa}
\bibliography{./markup_new}

\newpage{}

\appendix

\section{Online Appendix (Not for Publication)}

\setcounter{page}{1}
\renewcommand{\thepage}{\Alph{section}.\arabic{page}}
\setcounter{equation}{0}
\renewcommand{\theequation}{\Alph{section}.\arabic{equation}}
\setcounter{assumption}{0}
\renewcommand{\theassumption}{\Alph{section}.\arabic{assumption}}
\setcounter{thm}{0}
\renewcommand{\thethm}{\Alph{section}.\arabic{thm}}
\setcounter{prop}{0}
\renewcommand{\theprop}{\Alph{section}.\arabic{prop}}
\setcounter{lem}{0}
\renewcommand{\thelem}{\Alph{section}.\arabic{lem}}

\subsection{Identification of Demand Function \label{subsec:Demand}}

\subsubsection{Proof for Proposition \ref{P-demand}}

The proof for Proposition \ref{P-demand} uses the following result
of \citeauthor{matsuyama2017beyond} (2017).
\begin{thm}
\label{thm:1}(\citeauthor{matsuyama2017beyond}, 2017, Remark 3 and
Proposition 1). Consider a mapping $\mathbf{s(Y)}:=(s_{1}(Y_{1}),...,s_{N}(Y_{N}))'$
from $\mathbb{R}_{+}^{N}$ to $\mathbb{R}_{+}^{N}$, which is differentiable
almost everywhere, is normalized by 
\begin{equation}
\sum_{i=1}^{N}s_{i}(Y_{i}^{*})=1,\label{eq:norm_s}
\end{equation}
for some point $\mathbf{Y}^{*}:=(Y_{1}^{*},...,Y_{N}^{*})$ and satisfies
the following conditions 
\begin{align}
s_{i}'(Y_{i})Y_{i} & <s_{i}(Y_{i})\text{ for }i=1,...,N,\nonumber \\
s_{i}'(Y_{i}) & s_{j}'(Y_{j})\ge0\text{ for }i,j=1,...,N,\label{eq:HSAcondition}
\end{align}
for all $\mathbf{Y}$ such that $\sum_{i=1}^{N}s_{i}(Y_{i})=1$. Then,
(1) for any such mapping, there exists a unique monotone, convex,
continuous, and homothetic rational preference that generates the
HSA demand system described by 
\begin{align*}
P_{i} & =\frac{\Phi}{Y_{i}}s_{i}\left(\frac{Y_{i}}{A\left(\mathbf{Y}\right)}\right)\text{ for }i=1,..,N,
\end{align*}
where $\Phi:=\sum_{i=1}^{N}P_{i}Y_{i}$ and $A(\mathbf{Y})$ is obtained
by solving
\[
\sum_{i=1}^{N}s_{i}\left(\frac{Y_{i}}{A\left(\mathbf{Y}\right)}\right)=1.
\]
 (2) This homothetic preference is described by a utility function
$U$ which is defined by
\begin{equation}
\ln U(\mathbf{Y})=\ln A(\mathbf{Y})+\sum_{i=1}^{N}\int_{c}^{Y_{i}/A(Y)}\frac{s_{i}\left(\xi\right)}{\xi}d\xi,\label{eq:utility}
\end{equation}
where $c$ is a constant.
\end{thm}
\citet{matsuyama2017beyond} proved (1) from the Antonelli's integrability
theorem. See their paper for the proof. \citet{matsuyama2017beyond}
provides a proof for (2) for the case of direct demand functions instead
of inverse demand functions considered here. So we will provide the
proof for (2) in the following proof for Proposition \ref{P-demand}
(b).

\paragraph{Proof for Proposition \ref{P-demand}}
\begin{proof}
(a) We construct $S_{t}(Y_{i}/A_{t}(\mathbf{Y},z_{t}),z_{it})$ and
$A_{t}(\mathbf{Y}_{t},\mathbf{z}_{t})$ as is explained in the main
text. Fix $\mathbf{z}_{t}:=(z_{1t},...,z_{Nt})$ and time $t$. For
$\mathbf{Y}\in\mathscr{\mathscr{\bar{Y}}}$, define $A(\mathbf{Y}):=A_{t}(\mathbf{Y},\mathbf{z}_{t})$
and $s(\mathbf{Y}):=\left(s_{1}(Y_{1}),...,s_{N}(Y_{N})\right)$ such
that $s_{i}(Y_{i})=S_{t}(Y_{i},z_{it})$. 

Define $\mathscr{\mathscr{\bar{Y}}}_{A}:=\{\mathbf{Y}/A(\mathbf{Y}):\mathbf{Y}\in\mathscr{\mathscr{\bar{Y}}}\}.$
Then, for all $\mathbf{Y}\in\mathscr{\bar{Y}}_{A}$, $\sum_{i=1}^{N}s_{i}(Y_{i})=1$
holds by construction of $A_{t}(\cdot)$. At the same time, for all
$\mathbf{Y}$ that satisfies $\sum_{i=1}^{N}s_{i}(Y_{i})=1$, $A(\mathbf{Y})=1$
holds so that $\mathbf{Y}\in\mathscr{\bar{Y}}_{A}$. Therefore, $\mathscr{\bar{Y}}_{A}=\{\mathbf{Y}\in\mathscr{\bar{Y}}:\sum_{i=1}^{N}s_{i}(Y_{i})=1\}$. 

Consider $\mathbf{Y}\in\mathscr{Y}_{A}$. From Assumption \ref{A-1}
(b) and $y:=\ln Y$,
\[
0<\frac{\partial\varphi_{t}\left(\ln Y,z\right)}{\partial\ln Y}=1+\frac{\partial\psi_{t}\left(\ln Y,z\right)}{\partial\ln Y}<1
\]
holds. The above inequality implies 
\[
s_{i}'(Y)>0\text{ and }s_{i}'(Y)Y<s_{i}(Y)\text{ for all }i\text{ and }Y
\]
because 
\begin{align*}
s_{i}'(Y)Y & =\frac{\exp\left(\varphi_{t}\left(\ln Y,z_{it}\right)\right)}{\Phi_{t}}\frac{\partial\varphi_{t}\left(\ln Y,z_{it}\right)}{\partial\ln Y}\\
 & =s_{i}(Y)\frac{\partial\varphi_{t}\left(\ln Y,z_{it}\right)}{\partial\ln\tilde{Y}}.
\end{align*}
Therefore, $\mathbf{s(Y)}$ satisfies the inequalities in (\ref{eq:HSAcondition})
for all $\mathbf{Y}$ satisfying $\sum_{i=1}^{N}s_{i}(Y_{i})=1$.
From Theorem \ref{thm:1} (1), there exists a unique monotone, convex,
continuous, and homothetic rational preference that generates 
\begin{align*}
P_{it} & =\frac{\Phi_{t}}{Y_{it}}s_{i}\left(\frac{Y_{it}}{A\left(\mathbf{Y}_{t}\right)}\right)\\
 & =\frac{\Phi_{t}}{Y_{it}}S_{t}\left(\frac{Y_{it}}{A\left(\mathbf{Y}_{t},\mathbf{z}_{t}\right)},z_{it}\right),
\end{align*}
where $\Phi_{t}$ is the consumer's budget.  

(b) The following derivation of the utility function follows the steps
in \citet{matsuyama2017beyond}. Let $U_{t}(\mathbf{Y}_{t},\mathbf{z}_{t})$
be the utility function that is homogenous of degree one with respect
to $\mathbf{Y}_{t}$. Then, the indirect utility is linear in income
$\Phi_{t}$:
\begin{equation}
V_{t}(\mathbf{P}_{t},\Phi_{t})=\max_{\mathbf{Y}_{t}}\{U_{t}(\mathbf{Y}_{t},\mathbf{z}_{t})|\mathbf{P}_{t}'\mathbf{Y}_{t}\le\Phi_{t}\}=\frac{\Phi_{t}}{\Pi_{t}(\mathbf{P}_{t})},\label{eq:utility_max}
\end{equation}
where $\Pi_{t}(\mathbf{P}_{t})$ is the ideal price index. The first-order
condition is given by
\[
\frac{\partial U_{t}\left(\mathbf{Y}_{t},\mathbf{z}_{t}\right)}{\partial Y_{it}}=\lambda_{t}P_{it},
\]
where $\lambda_{t}=1/\Pi_{t}(\mathbf{P}_{t})$ is the Lagrange multiplier.
The Roy's identity derives the demand for firm $i$ as
\begin{equation}
Y_{it}=-\frac{\partial V_{t}/\partial P_{it}}{\partial V_{t}/\partial\Phi_{t}}=\frac{\Phi_{t}}{P_{it}}\left(\frac{\partial\Pi_{t}}{\partial P_{it}}\frac{P_{it}}{\Pi_{t}}\right).\label{Roy}
\end{equation}
From (\ref{eq:utility_max}), the expenditure function is written
as $e_{t}(\mathbf{P}_{t},U_{t})=\Pi_{t}(\mathbf{P}_{t})U_{t}.$ Applying
the Shepard's lemma derives the demand for firm $i$ as
\begin{equation}
Y_{it}=\frac{\partial e_{t}(\mathbf{P}_{t},U_{t})}{\partial P_{it}}=\frac{\partial\Pi_{t}}{\partial P_{it}}U_{t}.\label{eq:Shepard}
\end{equation}
Using (\ref{eq:Shepard}), $\lambda_{t}=1/\Pi_{t}$ and the first-order
condition, we obtain
\[
\frac{\partial\Pi_{t}}{\partial P_{it}}\frac{P_{it}}{\Pi_{t}}=\frac{Y_{it}}{U_{t}}\frac{P_{it}}{\Pi_{t}}=\frac{Y_{it}}{U_{t}}\lambda_{t}P_{it}=\frac{\partial U_{t}}{\partial Y_{it}}\frac{Y_{it}}{U_{t}}.
\]
Therefore, from (\ref{Roy}), we have 
\[
S_{t}\left(\frac{Y_{it}}{A\left(\mathbf{Y}_{t},\mathbf{z}_{t}\right)},z_{it}\right)=\frac{P_{it}Y_{it}}{\Phi_{t}}=\frac{\partial\Pi_{t}}{\partial P_{it}}\frac{P_{it}}{\Pi_{t}}=\frac{\partial U_{t}}{\partial Y_{it}}\frac{Y_{it}}{U_{t}},
\]
which can be written as 
\begin{equation}
\frac{\partial\ln U_{t}(\mathbf{Y}_{t},\mathbf{z}_{t})}{\partial Y_{it}}=\frac{1}{Y_{it}}S_{t}\left(\frac{Y_{it}}{A_{t}\left(\mathbf{Y}_{t},\mathbf{z}_{t}\right)},z_{it}\right).\label{eq:dlnU/dY}
\end{equation}
Let $A_{t}=A_{t}\left(\mathbf{Y}_{t},\mathbf{z}_{t}\right)$. Since
$U_{t}(\mathbf{Y}_{t},\mathbf{z}_{t})$ is homogeneous of degree one
with respect to $\mathbf{Y}_{t}$, $\partial U_{t}(\mathbf{Y}_{t},\mathbf{z}_{t})/\partial Y_{it}$
is homogenous of degree zero with respect to $\mathbf{Y}_{t}$. Therefore,
it holds 
\begin{align*}
\frac{\partial\ln U_{t}(\mathbf{Y}_{t}/A_{t},\mathbf{z}_{t})}{\partial Y_{it}} & =\frac{\partial U_{t}(\mathbf{Y}_{t}/A_{t},\mathbf{z}_{t})}{\partial Y_{it}}\frac{1}{U_{t}(\mathbf{Y}_{t}/A_{t},\mathbf{z}_{t})}\\
 & =\frac{\partial U_{t}(\mathbf{Y}_{t},\mathbf{z}_{t})}{\partial Y_{it}}\frac{A_{t}}{U_{t}(\mathbf{Y}_{t},\mathbf{z}_{t})}\\
 & =A_{t}\frac{\partial\ln U_{t}(\mathbf{Y}_{t},\mathbf{z}_{t})}{\partial Y_{it}}.
\end{align*}
Then, (\ref{eq:dlnU/dY}) becomes simplified as 
\begin{align}
\frac{\partial\ln U_{t}(\mathbf{Y}_{t},\mathbf{z}_{t})}{\partial Y_{it}} & =\frac{1}{Y_{it}}S_{t}\left(\frac{Y_{it}}{A_{t}},z_{it}\right)\nonumber \\
\Leftrightarrow\frac{\partial\ln U_{t}(\mathbf{Y}_{t}/A_{t},\mathbf{z}_{t})}{\partial Y_{it}} & =\frac{A_{t}}{Y_{it}}S_{t}\left(\frac{Y_{it}}{A_{t}},z_{it}\right)\nonumber \\
\Leftrightarrow\frac{\partial\ln U_{t}(\tilde{\mathbf{Y}}_{t},\mathbf{z}_{t})}{\partial\tilde{Y}_{it}} & =\frac{S_{t}\left(\tilde{Y}_{it},z_{it}\right)}{\tilde{Y}_{it}},\label{eq:lnU}
\end{align}
where $\tilde{Y}_{it}:=Y_{it}/A_{t}$ and $\tilde{\mathbf{Y}}_{t}:=\left(\tilde{Y}_{1t},...,\tilde{Y}_{Nt}\right)$.
Let $c_{t}(\mathbf{z}_{t}):=\left(c_{1t}(\mathbf{z}_{t}),...,c_{Nt}(\mathbf{z}_{t})\right)$
be defined by $U_{t}(c_{t}(\mathbf{z}_{t}),\mathbf{z}_{t})=1$. Then,
integration of (\ref{eq:lnU}) leads to 
\[
\ln U_{t}(\tilde{\mathbf{Y}}_{t},\mathbf{z}_{t})=\sum_{i=1}^{N}\int_{c_{it}(\mathbf{z}_{t})}^{\tilde{Y}_{it}}\frac{S_{t}\left(\xi,z_{it}\right)}{\xi}d\xi.
\]
Since $\ln U_{t}(\tilde{\mathbf{Y}}_{t},\mathbf{z}_{t})=\ln U_{t}(\mathbf{Y}_{t}/A_{t},\mathbf{z}_{t})=\ln U_{t}(\mathbf{Y}_{t},\mathbf{z}_{t})-\ln A_{t}$,
we obtain the utility function stated in the proposition as follows:
\[
\ln U_{t}(\mathbf{Y}_{t},\mathbf{z}_{t})=\ln A_{t}\left(\mathbf{Y}_{t},\mathbf{z}_{t}\right)+\sum_{i=1}^{N}\int_{c_{it}(\mathbf{z}_{t})}^{Y_{it}/A_{t}\left(\mathbf{Y}_{t},\mathbf{z}_{t}\right)}\frac{S_{t}\left(\xi,z_{it}\right)}{\xi}d\xi.
\]

(c) The homothetic preference implies that the market share $P_{it}Y_{it}/\Phi_{t}$
depends only on a price vector and is independent of income. This
property requires $A_{t}\left(\mathbf{Y}_{t},\mathbf{z}_{t}\right)$
to be homogenous of degree one with respect to $\mathbf{Y}_{t}$ so
that for any $k>0$, it
\[
S_{t}\left(\frac{kY_{it}}{A_{t}\left(k\mathbf{Y}_{t},\mathbf{z}_{t}\right)},z_{it}\right)=S_{t}\left(\frac{kY_{it}}{kA_{t}\left(\mathbf{Y}_{t},\mathbf{z}_{t}\right)},z_{it}\right)=S_{t}\left(\frac{Y_{it}}{A_{t}\left(\mathbf{Y}_{t},\mathbf{z}_{t}\right)},z_{it}\right).
\]

Let $\varphi_{t}^{-1}(\bar{r}_{it},z_{it})$ be the identified log
output and $\varphi_{t}^{*-1}(\bar{r}_{it},z_{it})$ be its true value.
Since $\varphi_{t}^{-1}(\bar{r}_{it},z_{it})$ is identified up to
location, there is $a\in\mathbb{R}$ such that $\varphi_{t}^{-1}(\bar{r}_{it},z_{it})=a+\varphi_{t}^{*-1}(\bar{r}_{it},z_{it})$.

The identified output $Y_{it}$ and the true output $Y_{it}^{*}$
are related as follows:
\begin{align*}
Y_{it} & =\exp(\varphi_{t}^{-1}(\bar{r}_{it},z_{it}))\\
 & =\exp(a+\varphi_{t}^{*-1}(\bar{r}_{it},z_{it}))\\
 & =\exp(a)Y_{it}^{*}.
\end{align*}
Since $\varphi_{t}(y_{t},z_{t})=\varphi_{t}^{*}(y_{t}-a,z_{t})$ for
all $y_{t}$ and $z_{t}$,
\[
\varphi_{t}\left(\ln Y_{it},z_{it}\right)=\varphi_{t}^{*}\left(\ln Y_{it}-a,z_{it}\right)=\varphi_{t}^{*}\left(\ln Y_{it}^{*},z_{it}\right).
\]
Then, the market share function $S_{t}(Y_{it},z_{it}):=\exp\left(\varphi_{t}\left(\ln Y_{it},z_{it}\right)\right)/\Phi_{t}$
constructed from the identified outputs agrees with the market share
function $S_{t}^{*}(Y_{it}^{*},z_{it}):=\exp\left(\varphi_{t}^{*}\left(\ln Y_{it}^{*},z_{it}\right)\right)/\Phi_{t}$
constructed from the true outputs: 
\[
S_{t}(Y_{it},z_{it})=\frac{\exp\left(\varphi_{t}\left(\ln Y_{it},z_{it}\right)\right)}{\Phi_{t}}=\frac{\exp\left(\varphi_{t}^{*}\left(\ln Y_{it}^{*},z_{it}\right)\right)}{\Phi_{t}}=S_{t}^{*}(Y_{it}^{*},z_{it}).
\]
Thus, the identified demand system does not depend on the location
normalization of $\varphi^{-1}(\cdot)$.

Since the quantity index $A_{t}(\mathbf{Y}_{t},\mathbf{z}_{t})$ is
homogenous of degree one with respect to $\mathbf{Y}_{t}$, 
\begin{align*}
\frac{Y_{it}}{A(\mathbf{Y}_{t},\mathbf{z}_{t})} & =\frac{\exp(a)Y_{it}^{*}}{A(\exp(a)\mathbf{Y}_{t}^{*},\mathbf{z}_{t})}=\frac{\exp(a)Y_{it}^{*}}{\exp(a)A(\mathbf{Y}_{t}^{*},\mathbf{z}_{t})}=\frac{Y_{it}^{*}}{A(\mathbf{Y}_{t}^{*},\mathbf{z}_{t})}
\end{align*}
Let $U_{t}(\mathbf{Y}_{t},\mathbf{z}_{t})$ be the identified utility
and $U_{t}^{*}(\mathbf{Y}_{t}^{*},\mathbf{z}_{t})$ be the true utility.
Then, they are related as
\begin{align*}
\ln U_{t}(\mathbf{Y}_{t},\mathbf{z}_{t}) & =\ln A_{t}(\mathbf{Y}_{t},\mathbf{z}_{t})+\sum_{i=1}^{N}\int_{c_{i}(\mathbf{z}_{t})}^{Y_{it}/A_{t}(\mathbf{Y}_{t},\mathbf{z}_{t})}\frac{S_{t}\left(\xi,z_{it}\right)}{\xi}d\xi,\\
 & =a+\ln A_{t}\left(\mathbf{Y}_{t}^{*},\mathbf{z}_{t}\right)+\sum_{i=1}^{N}\int_{c_{i}^{*}(\mathbf{z}_{t})}^{Y_{it}^{*}/A_{t}\left(\mathbf{Y}_{t}^{*},\mathbf{z}_{t}\right)}\frac{S_{t}^{*}\left(\xi,z_{it}\right)}{\xi}d\xi\\
 & =a+\ln U_{t}^{*}(\mathbf{Y}_{t}^{*},\mathbf{z}_{t}),
\end{align*}
where $c_{t}^{*}(\mathbf{z}_{t}):=\left(c_{1t}^{*}(\mathbf{z}_{t}),...,c_{Nt}^{*}(\mathbf{z}_{t})\right)$
defined by $U^{*}(c_{t}^{*}(\mathbf{z}_{t}),\mathbf{z}_{t})=1$. Therefore,
the log utility function is identified up to the location normalization
of $\varphi_{t}^{-1}(\cdot)$. The identified utility function is
a monotonic transformation of the true utility function, which implies
both utility functions represent the same consumer preference. 
\end{proof}

\subsection{Discrete Firm Characteristics $z_{t}$ \label{subsec:Discrete_z}}

This section proves Propositions \ref{P-step1} and \ref{P-step3}
for the case that $z_{it}$ is a discrete variable and have finite
support $\mathcal{Z}:=\{z^{1},...,z^{J}\}$. 

The following assumption modifies Assumption \ref{A-1} for discrete
$z_{it}$.
\begin{assumption}
\label{A-1-discrete} (a) $f_{t}(\cdot)$ is continuously differentiable
with respect to $(m,k,l)$ on $\mathcal{M}\times\mathcal{K}\times\mathcal{L}$
and strictly increasing in $m$. (b) For every $z\in\mathcal{Z}$,
$\varphi_{t}(\cdot,z)$ is strictly increasing and invertible with
its inverse $\varphi_{t}^{-1}(\bar{r},z)$, which is continuously
differentiable with respect to $\bar{r}$ on $\mathcal{\bar{R}}$.
(c) For every $(k,l,z)\in\mathcal{K}\times\mathcal{L}\times\mathcal{Z}$,
$\mathbb{M}_{t}(\cdot,k,l,z)$ is strictly increasing and invertible
with its inverse $\mathbb{M}_{t}^{-1}(m,k,l,z)$, which is continuously
differentiable with respect to $(m,k,l)$ on $\mathcal{M}\times\mathcal{K}\times\mathcal{L}$.
(d) $\varepsilon_{t}$ is mean independent of $x_{t}$ and $z_{t}$
with $E\left[\varepsilon_{t}|x_{t},z_{t}\right]=0$.
\end{assumption}
The following assumption modifies Assumption \ref{A-3} for discrete
$z_{it}$.
\begin{assumption}
\label{A-3-discrete} (a) The distribution $G_{\eta}(\cdot)$ of $\eta$
is absolutely continuous with a density function $g_{\eta}(\cdot)$
that is continuous on its support. (b) $\eta_{t}$ is independent
of $v_{t}:=(k_{t},l_{t},z_{t},x_{t-1},z_{t-1})'\in\mathcal{V}:=\mathcal{K}\times\mathcal{L}\times\mathcal{Z}\times\mathcal{X}\times\mathcal{Z}$.
(c) $x$ is continuously distributed on $\mathcal{X}$. (d) The support
$\varOmega$ of $\omega$ is an interval $[\text{\ensuremath{\underbar{\ensuremath{\omega}}}},\bar{\omega}]\subset\mathbb{R}$
where $\text{\ensuremath{\underbar{\ensuremath{\omega}}}}<0$ and
$1<\bar{\omega}$. (e) $h(\cdot)$ is continuously differentiable
with respect to $\omega$ on $\Omega$. (f) The set $\mathcal{A}_{q_{t-1}}:=\{(x_{t-1},z_{t-1})\in\mathcal{X}\times\mathcal{Z}:\partial G_{m_{t}|v_{t}}(m_{t}|v_{t})/\partial q_{t-1}\neq0\text{ for all }(m_{t},k_{t},l_{t},z_{t})\in\mathcal{M}\times\mathcal{K}\times\mathcal{L}\times\mathcal{Z}\}$
is nonempty for some $q_{t-1}\in\{k_{t-1},l_{t-1},m_{t-1},z_{t-1}\}$.
(g) For each $(x_{t-1},z_{t-1})\in\mathcal{X}\times\mathcal{Z}$,
it is possible to find $(x_{t},z_{t})\in\mathcal{X}\times\mathcal{Z}$
such that $\partial G_{m_{t}|v_{t}}(m_{t}|k_{t},l_{t},z_{t},x_{t-1},z_{t-1})/\partial m_{t}>0$. 
\end{assumption}
A sufficient condition for Assumption \ref{A-3-discrete} (g) is $g_{\eta}(\eta)>0$
for all $\eta\in\mathbb{R}$, under which (\ref{eq:m2-dis}) below
shows $\partial G_{m_{t}|v_{t}}(m_{t}|k_{t},l_{t},z_{t},x_{t-1},z_{t-1})/\partial m_{t}>0$
holds for all $(x_{t},z_{t})$.

The following proposition establishes the identification of $\mathbb{M}_{t}^{-1}(\cdot)$.
\begin{prop}
\label{P-step1-discrete} Suppose that Assumptions \ref{A-data},
\ref{A-2}, \ref{A-1-discrete}, and \ref{A-3-discrete} hold. Then,
we can identify $\mathbb{M}_{t}^{-1}(m_{t},k_{t},l_{t},z_{t})$ up
to scale and location, and identify $G_{\eta}(\cdot)$ up to scale.
\end{prop}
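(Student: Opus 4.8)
The plan is to run the proof of Proposition~\ref{P-step1} slice-by-slice in $z_t$, and to compensate for the fact that $z_t$ can no longer be differentiated by transferring information across the slices $z^1,\dots,z^J$ through the common error law $G_\eta$. Throughout, everything is identified under the normalization of Assumption~\ref{A-2}, with the residual indeterminacy being exactly the $(a_{1t},a_{2t},b_t)$ of (\ref{eq:equivalence}).

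\textbf{Step 1: the slice $z_t=z_t^*$.} Fix $z_t$ at the normalization value $z_t^*$. Repeating the computation leading to (\ref{eq:dG2})--(\ref{eq:dM/dq}), but differentiating only with respect to the continuous arguments $q_t\in\{m_t,k_t,l_t\}$ and the continuous lagged instrument $q_{t-1}$ supplied by Assumption~\ref{A-3-discrete}(f), gives, for $(\tilde x_{t-1},\tilde z_{t-1})\in\mathcal A_{q_{t-1}}$,
\[
\frac{\partial \mathbb{M}_t^{-1}(m_t,k_t,l_t,z_t^*)}{\partial q_t}=-\frac{\partial\bar h_t(\tilde x_{t-1},\tilde z_{t-1})}{\partial q_{t-1}}\,\frac{\partial G_{m_t|v_t}(m_t\mid k_t,l_t,z_t^*,\tilde x_{t-1},\tilde z_{t-1})/\partial q_t}{\partial G_{m_t|v_t}(m_t\mid k_t,l_t,z_t^*,\tilde x_{t-1},\tilde z_{t-1})/\partial q_{t-1}}.
\]
Exactly as in (\ref{eq:hbar}), the normalization $\mathbb{M}_t^{-1}(m^*_{t1},k^*_t,l^*_t,z^*_t)-\mathbb{M}_t^{-1}(m^*_{t0},k^*_t,l^*_t,z^*_t)=1$ fixes $\partial\bar h_t(\tilde x_{t-1},\tilde z_{t-1})/\partial q_{t-1}$, hence every $\partial\mathbb{M}_t^{-1}(\cdot,z_t^*)/\partial q_t$; integrating over $(m_t,k_t,l_t)$ as in (\ref{eq:M_t_const}), now without the $z_t$-integral, and using $\mathbb{M}_t^{-1}(m^*_{t0},k^*_t,l^*_t,z^*_t)=0$, identifies the whole slice $\mathbb{M}_t^{-1}(\cdot,z_t^*)$.

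\textbf{Step 2: the error law $G_\eta$.} For firms with $z_{it}=z_t^*$, the productivity $\omega_{it}=\mathbb{M}_t^{-1}(m_{it},k_{it},l_{it},z_t^*)$ is now known. Since $\eta_t$ is independent of $v_t$ (Assumption~\ref{A-3-discrete}(b)), we have $\bar h_t(x_{t-1},z_{t-1})=E[\omega_{it}\mid x_{it-1}=x_{t-1},z_{it-1}=z_{t-1},z_{it}=z_t^*]$ (after recentering $\eta_t$ to mean zero) for every lagged value co-occurring with $z_t^*$, so $\eta_{it}=\omega_{it}-\bar h_t(x_{it-1},z_{it-1})$ is identified on that subpopulation; by independence, its distribution is the full $G_\eta$, identified on the support of $\eta$, where it is continuous and strictly increasing, hence invertible. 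In particular $\bar h_t(\tilde x_{t-1},\tilde z_{t-1})$ is identified, since $(\tilde x_{t-1},\tilde z_{t-1})$ co-occurs with $z_t^*$.

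\textbf{Step 3: the remaining slices $z^j\neq z_t^*$, and Step 4: pinning the levels.} For fixed $z_t=z^j$ the same ratio formula as in Step~1 applies, and the factor $\partial\bar h_t(\tilde x_{t-1},\tilde z_{t-1})/\partial q_{t-1}$ appearing in it is the \emph{same} number identified in Step~1, because $\bar h_t$ does not depend on $z_t$ and because membership $(\tilde x_{t-1},\tilde z_{t-1})\in\mathcal A_{q_{t-1}}$ forces $z_t=z^j$ to have positive conditional probability given $(\tilde x_{t-1},\tilde z_{t-1})$, so the conditional c.d.f.\ and its derivatives are well defined and nonzero there. Thus $\partial\mathbb{M}_t^{-1}(\cdot,z^j)/\partial q_t$ is identified with the same, normalization-consistent scale, and integrating over $(m_t,k_t,l_t)$ yields $\mathbb{M}_t^{-1}(m_t,k_t,l_t,z^j)=\kappa_j+\widetilde M_t(m_t,k_t,l_t,z^j)$ with $\widetilde M_t$ a known function and only $\kappa_j:=\mathbb{M}_t^{-1}(m^*_{t0},k^*_t,l^*_t,z^j)$ unknown. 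To recover $\kappa_j$, evaluate the structural identity $G_{m_t|v_t}(m_t\mid k_t,l_t,z^j,\tilde x_{t-1},\tilde z_{t-1})=G_\eta\!\left(\kappa_j+\widetilde M_t(m_t,k_t,l_t,z^j)-\bar h_t(\tilde x_{t-1},\tilde z_{t-1})\right)$ at a point where the left-hand side lies in the interior of the range of $G_\eta$ (such a point exists under Assumption~\ref{A-3-discrete}, e.g.\ under the sufficient condition $g_\eta>0$ on $\mathbb R$); since $G_\eta$ is invertible there and $\widetilde M_t$, $\bar h_t(\tilde x_{t-1},\tilde z_{t-1})$ are known, solve $\kappa_j=G_\eta^{-1}\!\big(G_{m_t|v_t}(m_t\mid k_t,l_t,z^j,\tilde x_{t-1},\tilde z_{t-1})\big)+\bar h_t(\tilde x_{t-1},\tilde z_{t-1})-\widetilde M_t(m_t,k_t,l_t,z^j)$. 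This identifies $\mathbb{M}_t^{-1}$ on all of $\mathcal M\times\mathcal K\times\mathcal L\times\mathcal Z$, which together with Step~2 proves the proposition. I expect Steps~3--4 to be the crux: because $z_t$ cannot be differentiated, the across-slice \emph{levels} $\kappa_j$ are not delivered by the transformation-model argument alone, and recovering them requires combining the scale exactly fixed on the $z_t^*$-slice with the invertibility of the common error distribution $G_\eta$; the other place needing care is checking that Assumption~\ref{A-3-discrete}(f) already guarantees the requisite overlap between lagged states and each $z^j$.
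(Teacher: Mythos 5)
Your proposal is correct in its overall architecture and reaches the same conclusion, but it resolves the crux --- the across-slice level constants $\mathbb{M}_{t}^{-1}(m_{t0}^{*},k_{t}^{*},l_{t}^{*},z^{j})$ --- by a genuinely different device than the paper. You first recover the full error law $G_{\eta}$ from the $z_{t}^{*}$-subpopulation (legitimate, since $\eta_{t}\perp v_{t}$ and $v_{t}$ includes $z_{t}$, so the conditional law of $\eta_{t}$ given $z_{t}=z_{t}^{*}$ is the unconditional one), and then invert $G_{\eta}$ pointwise in the structural identity $G_{m_{t}|v_{t}}=G_{\eta}\bigl(\kappa_{j}+\widetilde{M}_{t}-\bar{h}_{t}\bigr)$ to solve for each $\kappa_{j}$. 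The paper instead parametrizes the unknowns as $\{c_{0}(z),c_{2}(z)\}_{z\in\mathcal{Z}}$, writes $0=E[\eta_{t}|z_{t},z_{t-1}]=\widetilde{H}_{t}(z_{t},z_{t-1})+c_{0}(z_{t})-c_{2}(z_{t-1})$, and solves this linear system by evaluating first at $z_{t}=z^{1}$ and then at $z_{t-1}=z^{1}$; it identifies $G_{\eta}$ only at the very end, as a by-product. The trade-off: the paper's moment-based route needs only the first moment of $\eta_{t}$ and no invertibility of $G_{\eta}$, whereas your route extracts more distributional information (and would deliver over-identifying restrictions, one per evaluation point), but it leans on $G_{\eta}$ being strictly increasing at the chosen quantile. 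That is not guaranteed by Assumption \ref{A-3-discrete}(g) as literally stated --- (g) only asserts existence of \emph{some} $(x_{t},z_{t})$ with $\partial G_{m_{t}|v_{t}}/\partial m_{t}>0$, not one on the particular slice $z_{t}=z^{j}$ paired with your chosen $(\tilde{x}_{t-1},\tilde{z}_{t-1})$ --- so your Step 4 implicitly strengthens (g) to something like the sufficient condition $g_{\eta}>0$ that the paper mentions only as an aside. Your reading of $\mathcal{A}_{q_{t-1}}$ as guaranteeing overlap of $(\tilde{x}_{t-1},\tilde{z}_{t-1})$ with every $z^{j}$ is the same implicit support condition the paper relies on, and is fine. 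With that one regularity caveat made explicit, your argument is a valid alternative proof.
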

\begin{proof}
Choose normalization points $(m_{t1}^{*},k_{t}^{*},l_{t}^{*})$ and
$(m_{t0}^{*},k_{t}^{*},l_{t}^{*})$ in Assumption \ref{A-2} as well
as $x_{t-1}^{*}\in\mathcal{X}$ such that, for $z_{t},z_{t-1}\in\mathcal{Z}$,
\begin{equation}
\mathbb{M}_{t}^{-1}(m_{t0}^{*},k_{t}^{*},l_{t}^{*},z_{t})=c_{0}(z_{t}),\,\mathbb{M}_{t}^{-1}(m_{t1}^{*},k_{t}^{*},l_{t}^{*},z_{t})=c_{1}(z_{t})\text{, }\text{and }\ensuremath{\bar{h}(x_{t-1}^{*},z_{t-1})=c_{2}(z_{t-1}),}\label{eq:norm_dis}
\end{equation}
where $\{c_{0}(z_{t}),c_{1}(z_{t}),c_{2}(z_{t-1})\}_{z_{t},z_{t-1}\in\mathcal{Z}}$
are unknown constants. Without loss of generality, let $z_{t}^{*}$
in Assumption \ref{A-2} be $z_{t}^{*}=z^{1}$. Thus, the normalization
in Assumption \ref{A-2} is imposed as 
\[
c_{0}(z^{1})=0\text{ and }c_{1}(z^{1})=1.
\]

From $\eta_{t}\perp v_{t}$, the conditional distribution of $m_{t}$
given $v_{t}$ satisfies 
\begin{align*}
G_{m_{t}|v_{t}}(m_{t}|v_{t}) & =G_{\eta}\left(\mathbb{M}_{t}^{-1}(m_{t},k_{t},l_{t},z_{t})-\bar{h}_{t}\left(x_{t-1},z_{t-1}\right)\right).
\end{align*}
Taking the derivatives of $G_{m_{t}|v_{t}}(m_{t}|v_{t})$ with respect
to $q_{t}\in\{m_{t},k_{t},l_{t}\}$ and $q_{t-1}\in\{k_{t-1},l_{t-1},m_{t-1}\}$
. The derivatives of $G_{m_{t}|v_{t}}(m|v)$ are
\begin{align}
\frac{\partial G_{m_{t}|v_{t}}\left(m_{t}|v_{t}\right)}{\partial q_{t}} & =\frac{\partial\mathbb{M}_{t}^{-1}(m_{t},k_{t},l_{t},z_{t})}{\partial q_{t}}g_{\eta}\left(\mathbb{M}_{t}^{-1}(m_{t},k_{t},l_{t},z_{t})-\bar{h}_{t}\left(x_{t-1},z_{t-1}\right)\right),\label{eq:m2-dis}\\
\frac{\partial G_{m_{t}|v_{t}}\left(m_{t}|v_{t}\right)}{\partial q_{t-1}} & =-\frac{\partial\bar{h}\left(x_{t-1},z_{t-1}\right)}{\partial q_{t-1}}g_{\eta}\left(\mathbb{M}_{t}^{-1}(m_{t},k_{t},l_{t},z_{t})-\bar{h}_{t}\left(x_{t-1},z_{t-1}\right)\right).\label{eq:m3-dis}
\end{align}
Using Assumption \ref{A-3-discrete} (f), we can choose $q_{t-1}\in\{k_{t-1},l_{t-1},m_{t-1},z_{t-1}\}$
and $(\tilde{x}_{t-1},\tilde{z}_{t-1})\in\mathcal{A}_{q_{t-1}}$ such
that $\partial G_{m_{t}|v_{t}}\left(m_{t}|k_{t},l_{t},z_{t},\tilde{x}_{t-1},\tilde{z}_{t-1}\right)/\partial q_{t-1}\neq0$
for all $(m_{t},k_{t},l_{t},z_{t})\in\mathcal{M}\times\mathcal{K}\times\mathcal{L}\times\mathcal{Z}$.
Dividing (\ref{eq:m2-dis}) by (\ref{eq:m3-dis}), respectively, we
obtain for $q_{t}\in\{m_{t},k_{t},l_{t}\}$
\begin{align}
\frac{\partial\mathbb{M}_{t}^{-1}(m_{t},k_{t},l_{t},z_{t})}{\partial q_{t}} & =-\frac{\partial\bar{h}\left(\tilde{x}_{t-1},\tilde{z}_{t-1}\right)}{\partial q_{t-1}}\frac{\partial G_{m_{t}|v_{t}}\left(m_{t}|k_{t},l_{t},z_{t},\tilde{x}_{t-1},\tilde{z}_{t-1}\right)/\partial q_{t}}{\partial G_{m_{t}|v_{t}}\left(m_{t}|k_{t},l_{t},z_{t},\tilde{x}_{t-1},\tilde{z}_{t-1}\right)/\partial q_{t-1}}.\label{eq:dM/dq_dis}
\end{align}
Then, from (\ref{eq:norm_dis}) and (\ref{eq:dM/dm_id_dis}), we have
\begin{align*}
1 & =c_{1}(z^{1})-c_{0}(z^{1})\\
 & =\mathbb{M}_{t}^{-1}(m_{t1}^{*},k_{t}^{*},l_{t}^{*},z^{1})-\mathbb{M}_{t}^{-1}(m_{t0}^{*},k_{t}^{*},l_{t}^{*},z^{1})\\
 & =-\frac{\partial\bar{h}\left(\tilde{x}_{t-1},\tilde{z}_{t-1}\right)}{\partial q_{t-1}}\int_{m_{t0}^{*}}^{m_{t1}^{*}}\frac{\partial G_{m_{t}|v_{t}}\left(m|k_{t}^{*},l_{t}^{*},z^{1},\tilde{x}_{t-1},\tilde{z}_{t-1}\right)/\partial m_{t}}{\partial G_{m_{t}|v_{t}}\left(m|k_{t}^{*},l_{t}^{*},z^{1},\tilde{x}_{t-1},\tilde{z}_{t-1}\right)/\partial q_{t-1}}dm_{t}
\end{align*}
and therefore identify $\partial\bar{h}\left(\tilde{x}_{t-1},\tilde{z}_{t-1}\right)/\partial q_{t-1}$
 as
\begin{equation}
\frac{\partial\bar{h}\left(\tilde{x}_{t-1},\tilde{z}_{t-1}\right)}{\partial q_{t-1}}=-\tilde{S}_{q_{t-1}},\label{eq:h-bar_dis2}
\end{equation}
where 
\begin{align*}
\tilde{S}_{q_{t-1}} & :=\left(\int_{m_{t0}^{*}}^{m_{t1}^{*}}\frac{\partial G_{m_{t}|v_{t}}\left(m|k_{t}^{*},l_{t}^{*},z^{1},\tilde{x}_{t-1},\tilde{z}_{t-1}\right)/\partial m_{t}}{\partial G_{m_{t}|v_{t}}\left(m|k_{t}^{*},l_{t}^{*},z^{1},\tilde{x}_{t-1},\tilde{z}_{t-1}\right)/\partial q_{t-1}}dm_{t}\right)^{-1}.
\end{align*}
 By substituting (\ref{eq:h-bar_dis2}) into (\ref{eq:dM/dq_dis}),
we can identify $\partial\mathbb{M}_{t}^{-1}(m_{t},k_{t},l_{t},z_{t})/\partial m_{t}$
and $\partial\mathbb{M}_{t}^{-1}(m_{t},k_{t},l_{t},z_{t})/\partial q_{t}$
as
\begin{align}
\frac{\partial\mathbb{M}_{t}^{-1}(m_{t},k_{t},l_{t},z_{t})}{\partial m_{t}} & =\tilde{S}_{q_{t-1}}T_{m_{t}q_{t-1}}(x_{t},z_{t}),\nonumber \\
\frac{\partial\mathbb{M}_{t}^{-1}(m_{t},k_{t},l_{t},z_{t})}{\partial q_{t}} & =\tilde{S}_{q_{t-1}}T_{q_{t}q_{t-1}}(x_{t},z_{t}),\label{eq:dM/dm_id_dis}
\end{align}
where
\begin{align*}
T_{m_{t}q_{t-1}}(x_{t},z_{t}) & :=\text{\ensuremath{\frac{\partial G_{m_{t}|v_{t}}\left(m_{t}|k_{t},l_{t},z_{t},\tilde{x}_{t-1},\tilde{z}_{t-1}\right)/\partial m_{t}}{\partial G_{m_{t}|v_{t}}\left(m_{t}|k_{t},l_{t},z_{t},\tilde{x}_{t-1},\tilde{z}_{t-1}\right)/\partial q_{t-1}}}},\\
T_{q_{t}q_{t-1}}(x_{t},z_{t}) & :=\frac{\partial G_{m_{t}|v_{t}}\left(m_{t}|k_{t},l_{t},z_{t},\tilde{x}_{t-1},\tilde{z}_{t-1}\right)/\partial q_{t}}{\partial G_{m_{t}|v_{t}}\left(m_{t}|k_{t},l_{t},z_{t},\tilde{x}_{t-1},\tilde{z}_{t-1}\right)/\partial q_{t-1}}.
\end{align*}

From (\ref{eq:norm_dis}) and (\ref{eq:dM/dm_id_dis}), $\mathbb{M}_{t}^{-1}(x_{t},z_{t})$
is written as

\begin{equation}
\mathbb{M}_{t}^{-1}(x_{t},z_{t})=c_{0}(z_{t})+\Lambda_{m}(x_{t},z_{t}),\label{eq:M_dis}
\end{equation}

where
\begin{align*}
\Lambda_{m}(x_{t},z_{t}) & :=\tilde{S}_{q_{t-1}}\left\{ \int_{m_{t0}^{*}}^{m_{t}}T_{m_{t}q_{t-1}}(s,k_{t},l_{t},z_{t})ds\right.\\
 & \left.+\int_{k^{*}}^{k_{t}}T_{k_{t}q_{t-1}}(m_{t0}^{*},s,l_{t},z_{t})ds+\int_{l^{*}}^{l_{t}}T_{l_{t}q_{t-1}}(m_{t0}^{*},k_{t}^{*},s,z_{t})ds\right\} .
\end{align*}

From Assumption \ref{A-3-discrete} (g), for a given point $(x_{t-1},z_{t-1})\in\mathcal{X}\times\mathcal{Z}$,
we can find some point $(\tilde{m}_{t},\tilde{k}_{t},\tilde{l}_{t},\tilde{z}_{t})\in\mathcal{X}\times\mathcal{Z}$
such that $\partial G_{m_{t}|v_{t}}\left(\tilde{m}_{t}|\tilde{k}_{t},\tilde{l}_{t},\tilde{z}_{t},x_{t-1},z_{t-1}\right)/\partial m>0$.
Dividing (\ref{eq:m3-dis}) by (\ref{eq:m2-dis}) identifies $\partial\bar{h}\left(x_{t-1},z_{t-1}\right)/\partial q_{t-1}$
as
\begin{align*}
\frac{\partial\bar{h}\left(x_{t-1},z_{t-1}\right)}{\partial q_{t-1}} & =-\frac{\partial G_{m_{t}|v_{t}}\left(\tilde{m}_{t}|\tilde{k}_{t},\tilde{l}_{t},\tilde{z}_{t},x_{t-1},z_{t-1}\right)/\partial q_{t-1}}{\partial G_{m_{t}|v_{t}}\left(\tilde{m}_{t}|\tilde{k}_{t},\tilde{l}_{t},\tilde{z}_{t},x_{t-1},z_{t-1}\right)/\partial m}\frac{\partial\mathbb{M}_{t}^{-1}(\tilde{m}_{t},\tilde{k}_{t},\tilde{l}_{t},\tilde{z}_{t})}{\partial m}.
\end{align*}
Repeating this, we can identify $\partial\bar{h}\left(x_{t-1},z_{t-1}\right)/\partial q_{t-1}$
for all $\left(x_{t-1},z_{t-1}\right)\in\mathcal{X}\times\mathcal{Z}$.
From (\ref{eq:norm_dis}) and (\ref{eq:h-bar_dis2}), we can write
$\bar{h}_{t}(x_{t-1},z_{t-1})$ as

\begin{equation}
\bar{h}_{t}(x_{t-1},z_{t-1})=c_{2}(z_{t-1})+\Lambda_{\bar{h}}(x_{t-1},z_{t-1})\label{eq:hbar_dis}
\end{equation}

with
\begin{align*}
\Lambda_{\bar{h}}(x_{t-1},z_{t-1}) & :=\int_{m_{t-1}^{*}}^{m_{t-1}}\frac{\partial\bar{h}_{t}(s,k_{t-1},l_{t-1},z_{t-1})}{\partial m_{t-1}}ds\\
 & +\int_{k_{t-1}^{*}}^{k_{t-1}}\frac{\partial\bar{h}_{t}(m_{t-1}^{*},s,l_{t-1},z_{t-1})}{\partial k_{t-1}}ds+\int_{l_{t-1}^{*}}^{l_{t-1}}\frac{\partial\bar{h}_{t}(m_{t-1}^{*},k_{t-1}^{*},s,z_{t-1})}{\partial l_{t-1}}ds.
\end{align*}
Therefore, we can identify $\mathbb{M}_{t}^{-1}(m,k_{t},l_{t},z_{t})$
and $\bar{h}_{t}\left(x_{t-1},z_{t-1}\right)$ up to $\left\{ c_{0}(z),c_{2}(z)\right\} _{z\in\mathcal{Z}}$. 

Define $\widetilde{H}_{t}(z_{t},z_{t-1}):=E[\Lambda_{m}(m_{t},k_{t},l_{t},z_{t})-\Lambda_{\bar{h}}(x_{t-1},z_{t-1})|z_{t},z_{t-1}]$.
To determine $\left\{ c_{0}(z),c_{2}(z)\right\} _{z\in\mathcal{Z}}$,
we evaluate 
\begin{align*}
0 & =E\left[\eta_{t}|z_{t},z_{t-1}\right]\\
 & =E\left[\mathbb{M}_{t}^{-1}(m,k_{t},l_{t},z_{t})-\bar{h}_{t}\left(x_{t-1},z_{t-1}\right)|z_{t},z_{t-1}\right]\\
 & =\widetilde{H}_{t}(z_{t},z_{t-1})+c_{0}(z_{t})-c_{2}(z_{t-1})
\end{align*}
 at different values of $(z_{t},z_{t-1})\in\mathcal{Z}^{2}$. First,
evaluating $E\left[\eta_{t}|z_{t},z_{t-1}\right]=0$ at $z_{t}=z^{1}$,
and noting that $c_{0}(z^{1})=0$, we have
\[
c_{2}(z_{t-1})=\widetilde{H}_{t}(z^{1},z_{t-1}).
\]
Therefore, $c_{2}(z)$ is identified for all $z\in\mathcal{Z}$. Second,
evaluating $E\left[\eta_{t}|z_{t},z_{t-1}\right]=0$ at $z_{t-1}=z^{1}$,
we identify $c_{0}(z)$ as
\begin{align*}
c_{0}(z_{t}) & =c_{2}(z^{1})-\widetilde{H}_{t}(z_{t},z^{1})\\
 & =\widetilde{H}_{t}(z^{1},z^{1})-\widetilde{H}_{t}(z_{t},z^{1}).
\end{align*}
Given that $\left\{ c_{0}(z),c_{2}(z)\right\} _{z\in\mathcal{Z}}$
are identified, we can identified $\mathbb{M}_{t}^{-1}(m_{t},k_{t},l_{t},z_{t})$
and $\bar{h}_{t}(x_{t-1},z_{t})$ from (\ref{eq:M_dis}) and (\ref{eq:hbar_dis}).

Each firm's TFP $\omega_{it}=\mathbb{M}_{t}^{-1}(m_{it},k_{it},l_{it},z_{it})$
is identified up to scale and location normalization. From $E\left[\eta_{it}|x_{t-1},z_{t-1}\right]=0$,
we can identify $\bar{h}_{t}(x_{t-1},z_{t-1})=E\left[\omega_{it}|x_{t-1},z_{t-1}\right]$
and $\eta_{it}=\omega_{it}-\bar{h}_{t}(x_{it-1},z_{it-1})$. Thus,
we obtain the distribution of $\eta_{t}$, $G_{\eta_{t}}(\eta)$. 
\end{proof}
Note that the proofs for Lemma \ref{lem:step2} and Proposition \ref{P-step3}
do not rely on the continuity of $z_{t}$. Therefore, the exactly
same proof proves the following proposition. 
\begin{prop}
\label{P-step3-discrete} Suppose that Assumptions \ref{A-data},
\ref{A-2}, \ref{A-1-discrete}, \ref{A-3-discrete}, and \ref{A-FOC}
hold. Then, we can identify $\varphi_{t}^{-1}(\cdot)$ and $f_{t}(\cdot)$
up to scale and location and each firm's markup $\partial\varphi_{t}^{-1}(\bar{r}_{it},z_{it})/\partial r_{t}$
up to scale.
\end{prop}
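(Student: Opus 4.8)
The plan is to replay the three-step argument of Subsection \ref{subsec:Nonparametric} essentially verbatim, replacing the appeal to Proposition \ref{P-step1} with an appeal to its discrete analogue Proposition \ref{P-step1-discrete}, and to check that no step ever differentiates with respect to the now-discrete shifter $z_t$. First, Step 1 carries over unchanged: the identification of $\phi_t(\cdot)$, $\bar r_{it}$, and $\varepsilon_{it}$ in Lemma \ref{lem:step2} uses only the conditional moment restriction $E[\varepsilon_t \mid x_t, z_t] = 0$ from Assumption \ref{A-1-discrete}(d), so $\phi_t(x_t,z_t) = E[r_{it}\mid x_t,z_t]$, $\bar r_{it} = \phi_t(x_{it},z_{it})$, and $\varepsilon_{it} = r_{it} - \phi_t(x_{it},z_{it})$ are recovered whether $z_t$ is continuous or discrete. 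Next, Proposition \ref{P-step1-discrete} supplies $\mathbb{M}_t^{-1}(m_t,k_t,l_t,z_t)$ up to scale and location under the normalization of Assumption \ref{A-2}, together with $G_\eta(\cdot)$ and hence $\omega_{it}$, $\bar h_t$, and $\eta_{it}$.

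Then for Step 3 I would differentiate the identity $\varphi_t^{-1}(\phi_t(x_t,z_t),z_t) = f_t(x_t) + \mathbb{M}_t^{-1}(x_t,z_t)$ only with respect to the continuous inputs $q_t \in \{m_t,k_t,l_t\}$, which reproduces (\ref{eq:phi_qt}); since $z_t$ is held fixed throughout, this step is insensitive to whether $z_t$ is continuous or discrete. Combining (\ref{eq:phi_qt}) for $q_t = m_t$ with the first-order condition (\ref{eq:foc_m}) of Assumption \ref{A-FOC} identifies the markup $\partial\varphi_t^{-1}(\bar r_{it},z_{it})/\partial\bar r_t$ exactly as in (\ref{eq:markup_identified})--(\ref{eq:markup_function}), and substituting this back into (\ref{eq:phi_qt}) for each $q_t$ gives the output elasticities $\partial f_t(x_t)/\partial q_t$ as in (\ref{eq:output_elasticities_pro2}). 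Integrating these elasticities along $m_t$, then $k_t$, then $l_t$, and using $f_t(m_{t0}^*, k_t^*, l_t^*) = 0$ from Assumption \ref{A-2} identifies $f_t(\cdot)$ as in (\ref{eq:f_integral}). Finally, for any $(\bar r_t, z_t)$ in the support of $\bar r_t$, I would pick $x_t$ with $\phi_t(x_t,z_t) = \bar r_t$ — nonempty by construction of that support — and read off $\varphi_t^{-1}(\bar r_t, z_t) = f_t(x_t) + \mathbb{M}_t^{-1}(x_t, z_t)$.

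I do not expect a genuine obstacle here beyond bookkeeping: all the work specific to discreteness was already absorbed into Proposition \ref{P-step1-discrete}, whose construction pins down the location constants $\{c_0(z), c_2(z)\}_{z\in\mathcal Z}$ by evaluating $E[\eta_t\mid z_t, z_{t-1}] = 0$ at finitely many pairs $(z_t,z_{t-1})$. The one point worth stating explicitly is that Step 3 never uses a derivative in $z_t$ — indeed, as the remark following Proposition \ref{P-step3} already observes, the identification of $\varphi_t^{-1}(\cdot)$ does not rely on the restriction (\ref{eq:phi_zt}) — so the proof of Proposition \ref{P-step3} applies word for word once $\mathbb{M}_t^{-1}$ and $\phi_t$ are in hand, giving $\varphi_t^{-1}(\cdot)$ and $f_t(\cdot)$ up to scale and location and each firm's markup up to scale.
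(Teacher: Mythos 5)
Your proposal is correct and follows exactly the route the paper takes: the paper's own justification for Proposition \ref{P-step3-discrete} is precisely the observation that the proofs of Lemma \ref{lem:step2} and Proposition \ref{P-step3} never differentiate with respect to $z_{t}$, so they carry over verbatim once Proposition \ref{P-step1-discrete} supplies $\mathbb{M}_{t}^{-1}(\cdot)$. Your explicit check that only (\ref{eq:phi_qt}), and not (\ref{eq:phi_zt}), is needed in Step 3 is the same point the paper makes, just spelled out in more detail.
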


\subsection{Demand Function with Unobservable Demand Shifter\label{subsec:quality}}

We derive the demand function (\ref{eq:quality}) form a representative
consumer's maximization problem. Suppose there are $I$ products.
Let $Y_{it}=\exp(y_{it})$ and $P_{it}=\exp(p_{it})$ be the output
and price levels of firm $i$. Consider a representative consumer's
utility maximization problem: 

\[
\max_{\{Y_{it}\}_{i=1}^{I}}U\left(u\left(\exp(\xi_{1t})Y_{1t},z_{1t}\right),....,u\left(\exp(\xi_{It})Y_{It},z_{It}\right)\right)\text{ s.t. }\sum_{i=1}^{I}P_{it}Y_{it}=Y_{t},
\]
where $Y_{t}$ is income, the upper tier utility $U(\cdot)$ is symmetric
in its arguments and the lower tier $u(\cdot)$ is common for all
products. Using $p_{it}^{\dagger}:=p_{it}-\xi_{it}$ and $y_{it}^{\dagger}:=y_{it}+\xi_{it}$,
the utility maximization problem is rewritten as
\[
\max_{\{y_{it}^{\dagger}\}_{i=1}^{I}}U\left(u\left(\exp\left(y_{1t}^{\dagger}\right),z_{1t}\right),....,u\left(\exp\left(y_{It}^{\dagger}\right),z_{It}\right)\right)\text{ s.t. }\sum_{i=1}^{I}\exp\left(p_{it}^{\dagger}\right)\exp\left(y_{it}^{\dagger}\right)=Y_{t}.
\]
The first-order condition for maximization is 
\[
U'\frac{\partial u\left(\exp\left(y_{it}^{\dagger}\right),z_{it}\right)}{\partial\exp\left(y_{it}^{\dagger}\right)}=\lambda_{t}\exp\left(p_{it}^{\dagger}\right),
\]
where $\lambda_{t}$ is the Lagrange multiplier and each firm takes
$\lambda_{t}$ and $U'$ as given under monopolistic competition.
The inverse demand function for firm $i$ is written as: 
\[
p_{it}^{\dagger}=\psi_{t}(y_{it}^{\dagger},z_{it}).
\]

\subsection{IID Productivity Shock\label{subsec:IID-Productivity-Shock}}

A firm receives an i.i.d. shock $e_{it}$ to output after choosing
inputs: 
\[
y_{it}=f_{t}(x_{it})+\omega_{it}+e_{it}.
\]
We suppose that firm's revenue $r_{it}$ is given by 
\begin{equation}
r_{it}=\varphi_{t}(y_{it},z_{it})=\varphi_{t}(f_{t}(x_{it})+\omega_{it}+e_{it},z_{it}).\label{eq:rev2}
\end{equation}
A firm chooses $m_{it}$ at time $t$ by maximizing the expected profit
conditional on the information available at the time denoted by $\mathcal{I}_{it}$
that includes all past variables and all time $t$ variables except
$e_{it}$:
\begin{align}
m_{it} & =\mathbb{M}_{t}(\omega_{it},k_{it},l_{it},z_{it})\nonumber \\
 & :=\argmax_{m\in\mathcal{M}}\ E\left[\exp\left(\varphi_{t}(f_{t}(m,k_{it},l_{it})+\omega_{it}+e_{it},z_{it})\right)|\mathcal{I}_{it}\right]-\exp(p_{t}^{m}+m)\nonumber \\
 & =\argmax_{m\in\mathcal{M}}\ E_{e}\left[\exp\left(\varphi_{t}(f_{t}(m,k_{it},l_{it})+\omega_{it}+e_{it},z_{it})\right)\right]-\exp(p_{t}^{m}+m),\label{eq:profit_maximization_2}
\end{align}
where $E_{e}$ is the expectation operator with respect to $e$.

The identification of $\varphi_{t}^{-1}(\cdot)$ and $f_{t}(\cdot)$
in the second step uses the conditional \emph{distribution} of $r_{t}$
given $w_{t}:=\left(x_{t},z_{t}\right)$, beyond the conditional \emph{expectation}
in Assumption \ref{A-data}. 
\begin{assumption}
\label{A-data-IID} The following information at time $t$ is known:
(a) the conditional distribution $G_{m_{t}|v_{t}}(\cdot)$ of $m_{t}$
given $v_{t}$; (b) the conditional distribution $G_{r_{t}|w_{t}}\left(r|w\right)$
of $r_{t}$ given $w_{t}:=\left(x_{t},z_{t}\right)$; (c) firm's expenditure
on material $\exp(p_{t}^{m}+m_{it})$. 
\end{assumption}

\subsubsection{Identification of Control Function and TFP}

Since $\mathbb{M}_{t}^{-1}(m_{it},k_{it},l_{it},z_{it})$ remains
a function of the same set of variables, Proposition \ref{P-step1}
holds with the same proof.
\begin{prop}
\label{P-step1-iid} Suppose that Assumptions \ref{A-1}, \ref{A-2},
\ref{A-3}, and \ref{A-data-IID} hold. Then, we can identify $\mathbb{M}_{t}^{-1}(m_{t},k_{t},l_{t},z_{t})$
up to scale and location for all $(m_{t},k_{t},l_{t},z_{t})\in\mathcal{M}\times\mathcal{K}\times\mathcal{L}\times\mathcal{Z}$
and identify $G_{\eta}(\cdot)$ up to scale.
\end{prop}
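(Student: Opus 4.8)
The plan is to observe that the i.i.d. output shock $e_{it}$ enters the model only through realized revenue, not through the material demand, so Step 2 is structurally identical to the baseline case and the proof of Proposition \ref{P-step1} transfers verbatim. First I would note that $m_{it}$ is chosen before $e_{it}$ is realized: in (\ref{eq:profit_maximization_2}) the firm maximizes expected profit, with the operator $E_e$ integrating $e_{it}$ out, because $e_{it} \notin \mathcal{I}_{it}$. Hence the optimal material input is a deterministic function $m_{it} = \mathbb{M}_t(\omega_{it}, k_{it}, l_{it}, z_{it})$ of exactly the same arguments as in the main model. Under Assumption \ref{A-1}(c) its inverse $\omega_{it} = \mathbb{M}_t^{-1}(m_{it}, k_{it}, l_{it}, z_{it})$ exists, and substituting it into the Markov process (\ref{omega}) yields the same transformation-model equation
\[
\mathbb{M}_t^{-1}(m_{it}, k_{it}, l_{it}, z_{it}) = \bar{h}_t(x_{it-1}, z_{it-1}) + \eta_{it},
\]
exactly as in (\ref{eq:model}).

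Next I would check that the only data input used in the proof of Proposition \ref{P-step1} is available here. Assumption \ref{A-data-IID}(a) supplies $G_{m_t|v_t}(\cdot)$, and by $\eta_t \perp v_t$ from Assumption \ref{A-3}(b) this factors as $G_{m_t|v_t}(m_t|v_t) = G_\eta\!\left(\mathbb{M}_t^{-1}(m_t,k_t,l_t,z_t) - \bar{h}_t(x_{t-1},z_{t-1})\right)$, identical to the display in that proof. I would then reproduce the chain of arguments there: differentiate with respect to $q_t \in \{m_t, k_t, l_t, z_t\}$ and with respect to $q_{t-1} \in \{k_{t-1}, l_{t-1}, m_{t-1}, z_{t-1}\}$, form their ratio to cancel $g_\eta(\cdot)$, invoke Assumption \ref{A-3}(f) to select a $q_{t-1}$ and a point $(\tilde{x}_{t-1}, \tilde{z}_{t-1})$ with a nonvanishing derivative, use the normalization in Assumption \ref{A-2} to pin down $\partial \bar{h}/\partial q_{t-1}$ via the integral constant $S_{q_{t-1}}$, and then integrate the resulting identified partial derivatives of $\mathbb{M}_t^{-1}$ along a path from the normalization point $(m_{t0}^*, k_t^*, l_t^*, z_t^*)$ to recover $\mathbb{M}_t^{-1}(\cdot)$ up to scale and location. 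Finally, from $\omega_{it} = \mathbb{M}_t^{-1}(m_{it},k_{it},l_{it},z_{it})$ I would set $\bar{h}_t(x_{t-1},z_{t-1}) = E[\omega_{it}|x_{t-1},z_{t-1}]$ and $\eta_{it} = \omega_{it} - \bar{h}_t(x_{it-1},z_{it-1})$, giving $G_\eta$ up to the scale normalization of $\eta_t$.

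Because the argument is a direct transcription of the proof of Proposition \ref{P-step1}, there is essentially no new obstacle; the one point I would state explicitly — and the only one that genuinely requires the new setting to be handled with care — is that the presence of $e_{it}$ does not change the list of arguments of $\mathbb{M}_t(\cdot)$, which follows from $e_{it} \notin \mathcal{I}_{it}$ so that $m_{it}$ cannot respond to it. The substantive work specific to the i.i.d.\ shock (deriving the first-order condition with the $E_e$-expectation and identifying the distribution of $e_t$) does not arise in this proposition; it is deferred to the subsequent identification of $\varphi_t^{-1}(\cdot)$ and $f_t(\cdot)$.
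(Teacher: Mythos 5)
Your proposal is correct and matches the paper's own treatment: the paper simply observes that $\mathbb{M}_{t}^{-1}(\cdot)$ remains a function of the same variables because $m_{it}$ is chosen before $e_{it}$ is realized, so the proof of Proposition \ref{P-step1} applies verbatim, which is exactly the argument you spell out. Your additional care in verifying that Assumption \ref{A-data-IID}(a) supplies the needed conditional distribution and that $e_{it}\notin\mathcal{I}_{it}$ is the key reason the control function is unchanged is a faithful elaboration of the paper's one-line justification.
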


\subsubsection{Identification of Production Function}

We make the following assumption that corresponds to Assumption A1\textendash A3
and A5\textendash A6 in \citet{chiappori2015nonparametric}. (Assumption
\ref{A-1} (b) corresponds to Assumption A4 in \citet{chiappori2015nonparametric}.) 
\begin{assumption}
\label{A-3-iid} (a) The distribution $G_{e_{t}}(\cdot)$ of $e_{t}$
is absolutely continuous with a density function $g_{e_{t}}(\cdot)$
that is continuous on its support. (b) $e_{t}$ is independent of
$w_{t}:=\left(x_{t},z_{t}\right)'$ with $med(e_{t}|w_{t})=0$. (c)
$w_{t}$ is continuously distributed on $\mathcal{W}:=\mathcal{X}\times\mathcal{Z}$.
(d) The support $\mathcal{Y}$ of $y_{t}$ is an interval on $\mathbb{R}$
that contains $0$. (e) The set $\mathcal{B}_{q_{t}}:=\{x_{t}\in\mathcal{X}:\partial G_{r_{t}|w_{t}}(r|w_{t})/\partial q_{t}\neq0\text{ for every }(r_{t},z_{t})\in\mathcal{R}\times\mathcal{Z}\}$
is nonempty for some $q_{t}\in\{m_{t},k_{t},l_{t}\}$.
\end{assumption}
The conditional median restriction in Assumption \ref{A-3-iid}(b)
is location normalization. We continue to use the first-order condition
with respect to material as a restriction for identification.
\begin{assumption}
\label{A-4-iid}The first-order condition with respect to material
for the profit maximization problem (\ref{eq:profit_maximization_2})
holds for all firms as follows:
\begin{equation}
E_{e}\left[\exp\left(\varphi_{t}(\tilde{y}_{it}+e_{it},z_{it})\right)\frac{\partial\varphi_{t}(\tilde{y}_{it}+e_{it},z_{it})}{\partial\tilde{y}_{it}}\right]\frac{\partial f_{t}(x_{it})}{\partial m_{it}}=\exp(p_{t}^{m}+m_{it}),\label{eq:foc2}
\end{equation}
where $\tilde{y}_{it}:=f_{t}(x_{it})+\omega_{it}$ and the expectation
$E_{e}$ is taken with respect to $e_{it}$.
\end{assumption}
\begin{prop}
\label{P-step2-iid} Suppose that Assumptions \ref{A-1}, \ref{A-2},\ref{A-3},
\ref{A-data-IID}, \ref{A-3-iid}, and \ref{A-4-iid} hold. Then,
we can identify $\varphi_{t}^{-1}(\cdot)$, $f_{t}(\cdot)$, and $G_{e_{t}}(\cdot)$
up to scale and location.
\end{prop}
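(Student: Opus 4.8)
The plan is to reduce the problem to a transformation model of the class studied by \citet{chiappori2015nonparametric} and then invoke the first-order condition, paralleling the proofs of Propositions \ref{P-step1} and \ref{P-step3}. First I would substitute the control function $\omega_{it}=\mathbb{M}_{t}^{-1}(m_{it},k_{it},l_{it},z_{it})$ --- which is already identified (up to scale and location) by Proposition \ref{P-step1-iid} and is therefore treated as known --- into the revenue equation \eqref{eq:rev2} and invert $\varphi_{t}$ to obtain \eqref{eq:model2-IID}, i.e. $\varphi_{t}^{-1}(r_{t},z_{t})=f_{t}(x_{t})+\mathbb{M}_{t}^{-1}(x_{t},z_{t})+e_{t}$. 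Since $\mathbb{M}_{t}^{-1}(\cdot)$ is now a known function and $e_{t}$ is independent of $w_{t}=(x_{t},z_{t})$ with $med(e_{t}|w_{t})=0$ (Assumption \ref{A-3-iid}(b)), this is a transformation model with unknown strictly increasing transformation $\varphi_{t}^{-1}(\cdot,z_{t})$ (monotone by Assumption \ref{A-1}(b), the counterpart of their A4) and covariate index $g_{t}(x_{t},z_{t}):=f_{t}(x_{t})+\mathbb{M}_{t}^{-1}(x_{t},z_{t})$; Assumptions \ref{A-3-iid}(a)--(e) are the analogues of conditions A1--A3 and A5--A6 in \citet{chiappori2015nonparametric}, with (e) playing the role of the rank condition that Assumption \ref{A-3}(f) plays in Proposition \ref{P-step1}.

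Second, I would run the Chiappori-type argument from the conditional distribution $G_{r_{t}|w_{t}}$, which by independence of $e_{t}$ and $w_{t}$ satisfies $G_{r_{t}|w_{t}}(r|w_{t})=G_{e_{t}}\!\big(\varphi_{t}^{-1}(r,z_{t})-g_{t}(x_{t},z_{t})\big)$. Differentiating with respect to $r$, with respect to an input $q_{t}\in\{m_{t},k_{t},l_{t}\}$, and with respect to $z_{t}$, and forming ratios of these derivatives at a point $\tilde{x}_{t}\in\mathcal{B}_{q_{t}}$ supplied by Assumption \ref{A-3-iid}(e), identifies $\partial\varphi_{t}^{-1}(r,z_{t})/\partial r$ and the gradient of $g_{t}$ in $(m_{t},k_{t},l_{t})$ up to a common positive scale, and $\partial\varphi_{t}^{-1}/\partial z_{t}-\partial g_{t}/\partial z_{t}$ up to the same scale; this is the computation already carried out in \eqref{eq:dM/dq}--\eqref{eq:dM_dq_identified}. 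Integrating and using the normalization in Assumption \ref{A-2} (note $g_{t}(m_{t0}^{*},k_{t}^{*},l_{t}^{*},z_{t}^{*})=0$ because both summands vanish there) together with $med(e_{t}|w_{t})=0$ and $0\in\mathcal{Y}$ (Assumption \ref{A-3-iid}(d)) fixes the location, so $g_{t}$ --- hence, via the known $\mathbb{M}_{t}^{-1}$, $f_{t}=g_{t}-\mathbb{M}_{t}^{-1}$ --- and $\varphi_{t}^{-1}$ are identified up to scale. Exactly as in the proof of Proposition \ref{P-step3}, $\varphi_{t}^{-1}(r,z)$ on its support is then recovered from $\varphi_{t}^{-1}(r,z)=f_{t}(x)+\mathbb{M}_{t}^{-1}(x,z)+e$ evaluated at firms with $(x_{it},z_{it})=(x,z)$, and the residual $e_{it}=\varphi_{t}^{-1}(r_{it},z_{it})-f_{t}(x_{it})-\mathbb{M}_{t}^{-1}(x_{it},z_{it})$ is identified for every firm, which delivers $G_{e_{t}}(\cdot)$.

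Third, the first-order condition \eqref{eq:foc2} of Assumption \ref{A-4-iid} is invoked to coordinate the scale normalization across the two identification steps --- so that $f_{t}$, $\mathbb{M}_{t}^{-1}$, and $\varphi_{t}^{-1}$ sit on a common scale --- in the same spirit in which \eqref{eq:foc_m} supplies the missing equation in the proof of Proposition \ref{P-step3}. I expect the main obstacle, and the reason this does not follow verbatim from either \citet{chiappori2015nonparametric} or Proposition \ref{P-step3}, to be twofold. (i) The observable $z_{t}$ enters both the transformation $\varphi_{t}^{-1}(\cdot,z_{t})$ and the index $g_{t}(\cdot,z_{t})$, which is outside the standard transformation-model template; this has to be handled by first recovering $f_{t}$ and $\mathbb{M}_{t}^{-1}$ and only then reading off $\varphi_{t}^{-1}(\cdot,z)$, and by using $g_{t}(x,z)-g_{t}(x,z')=\mathbb{M}_{t}^{-1}(x,z)-\mathbb{M}_{t}^{-1}(x,z')$ to keep the per-$z$ normalizations mutually consistent. (ii) Unlike \eqref{eq:foc_m}, the first-order condition \eqref{eq:foc2} contains the expectation $E_{e}[\cdot]$, so it cannot be imposed until $G_{e_{t}}$ and $\varphi_{t}$ have been recovered; this dictates the order of the argument --- transformation-model identification first, first-order condition second --- and some care is needed to verify that the resulting system pins down the normalization. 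The remaining verifications (that Assumption \ref{A-3-iid}(e) yields a usable pair $(q_{t},\tilde{x}_{t})$ for every relevant $(r,z)$, that the integration paths stay inside the support by Assumption \ref{A-3-iid}(d), and that the recovered $\varphi_{t}^{-1}$ and $G_{e_{t}}$ are mutually consistent) are routine, mirroring the main-text proofs.
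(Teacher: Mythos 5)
Your architecture coincides with the paper's: identify $\mathbb{M}_{t}^{-1}$ first (Proposition \ref{P-step1-iid}), treat \eqref{eq:model2-IID} as a transformation model, run the Chiappori-type derivative-ratio argument on $G_{r_{t}|w_{t}}$ to recover $\partial\varphi_{t}^{-1}(r,z_{t}^{*})/\partial r$ up to one unknown constant, integrate using the normalization and the median restriction, and then call on the first-order condition to fix the scale. Two side remarks are slightly off --- recovering $\varphi_{t}^{-1}(r,z)$ ``from $f_{t}(x)+\mathbb{M}_{t}^{-1}(x,z)+e$ evaluated at firms with $(x_{it},z_{it})=(x,z)$'' is circular here because $e_{it}$ is itself defined through $\varphi_{t}^{-1}$ (unlike Proposition \ref{P-step3}, where $\bar{r}_{t}$ is a deterministic function of $(x_{t},z_{t})$); and before the scale is fixed you only obtain the distribution of $\tilde{e}_{t}:=e_{t}/c_{m}$, not of $e_{t}$ --- but neither is fatal, since the integration route you also describe covers the first and the ``up to scale'' qualifier covers the second.

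The genuine gap is the step you defer with ``some care is needed to verify that the resulting system pins down the normalization.'' That verification is the core of the paper's proof and is not routine. Writing $c_{m}:=\partial f_{t}(\tilde{x}_{t})/\partial q_{t}+\partial\mathbb{M}_{t}^{-1}(\tilde{x}_{t},z_{t}^{*})/\partial q_{t}$ for the unknown scale as in \eqref{eq:c_m}, the paper shows in \eqref{eq:upsilon} that the expectation term in \eqref{eq:foc2}, multiplied by $c_{m}$, equals a fully identified function $\Upsilon(x_{t})$: substituting $\varphi_{t}(y,z_{t}^{*})=D_{t}(y/c_{m})$ with $D_{t}:=S_{q_{t}}^{-1}$ and $y_{t}/c_{m}=S_{q_{t}}(\phi_{t}(x_{t},z_{t}^{*}))+\tilde{e}_{t}$, the factor $1/c_{m}$ from $\partial\varphi_{t}/\partial y_{t}$ cancels the external $c_{m}$, and every remaining object ($D_{t}$, $S_{q_{t}}$, $\phi_{t}$, $G_{\tilde{e}_{t}}$) is already identified. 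The first-order condition then reads $\Upsilon(x_{t})c_{m}^{-1}\,\partial f_{t}(x_{t})/\partial m_{t}=\exp(p_{t}^{m}+m_{t})$; evaluating it at $\tilde{x}_{t}$ and combining with the definition of $c_{m}$ (for $q_{t}=m_{t}$) yields the closed form $c_{m}=\frac{\Upsilon(\tilde{x}_{t})}{\Upsilon(\tilde{x}_{t})-\exp(p_{t}^{m}+\tilde{m}_{t})}\,\frac{\partial\mathbb{M}_{t}^{-1}(\tilde{x}_{t},z_{t}^{*})}{\partial m_{t}}$. Without this cancellation-and-evaluation argument, the FOC is a single equation in which the unknown scale enters nonlinearly through $\varphi_{t}$, $G_{e_{t}}$, and $f_{t}$ simultaneously, and it is not evident that it determines $c_{m}$. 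The point matters because the proposition requires $\varphi_{t}^{-1}$ and $f_{t}$ to sit on the \emph{same} scale as the already-normalized $\mathbb{M}_{t}^{-1}$; leaving $c_{m}$ free means $f_{t}=\varphi_{t}^{-1}\circ\phi_{t}-\mathbb{M}_{t}^{-1}$ is a difference of objects on incommensurate scales and is not identified even up to a single scale factor.
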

\begin{proof}
Because $\varphi_{t}$ is strictly increase in its first argument,
from $med(e_{t}|w_{t})=0$, we can identify 
\begin{align*}
\phi_{t}(x_{t},z_{t}) & :=\varphi_{t}(f_{t}(x_{t})+\mathbb{M}_{t}^{-1}\left(x_{t},z_{t}\right),z_{t})\\
 & =med(r_{t}|x_{t},z_{t}).
\end{align*}
 From 
\begin{align}
\varphi_{t}^{-1}(\phi_{t}(x_{t},z_{t}),z_{t}) & =f_{t}(x_{t})+\mathbb{M}_{t}^{-1}\left(x_{t},z_{t}\right),\label{eq:varphi_median}
\end{align}
the error term $e_{t}$ is expressed as 
\begin{equation}
e_{t}=\varphi_{t}^{-1}(r_{t},z_{t})-\varphi_{t}^{-1}(\phi_{t}(x_{t},z_{t}),z_{t}).\label{eq:error}
\end{equation}

From $e_{t}\perp w_{t}$ and $w_{t}:=(x_{t},z_{t})$, the conditional
distribution function $G_{r_{t}|w_{t}}(r_{t}|w_{t})$ satisfies
\begin{align}
G_{r_{t}|w_{t}}(r_{t}|w_{t}) & =G_{e_{t}|w_{t}}(\varphi_{t}^{-1}(r,z_{t})-f_{t}(x_{t})-\mathbb{M}_{t}^{-1}\left(x_{t},z_{t}\right)|w_{t})\nonumber \\
 & =G_{e_{t}}\left(\varphi_{t}^{-1}(r,z_{t})-f_{t}(x_{t})-\mathbb{M}_{t}^{-1}\left(x_{t},z_{t}\right)\right).\label{eq:CDF_epsilon}
\end{align}
For $q_{t}\in\{m_{t},k_{t},l_{t}\}$, the derivatives of (\ref{eq:CDF_epsilon})
are
\begin{align}
\frac{\partial G_{r_{t}|w_{t}}(r_{t}|w_{t})}{\partial r} & =\frac{\partial\varphi_{t}^{-1}(r_{t},z_{t})}{\partial r}g_{e_{t}}\left(\varphi_{t}^{-1}(r_{t},z_{t})-f_{t}(x_{t})-\mathbb{M}_{t}^{-1}\left(x_{t},z_{t}\right)\right),\label{eq:deriv1}\\
\frac{\partial G_{r_{t}|w_{t}}(r_{t}|w_{t})}{\partial q_{t}} & =-\left(\frac{\partial f_{t}(x_{t})}{\partial q_{t}}+\frac{\partial\mathbb{M}_{t}^{-1}\left(x_{t},z_{t}\right)}{\partial q_{t}}\right)g_{e_{t}}\left(\varphi_{t}^{-1}(r_{t},z_{t})-f_{t}(x_{t})-\mathbb{M}_{t}^{-1}\left(x_{t},z_{t}\right)\right),\label{eq:deriv2}\\
\frac{\partial G_{r_{t}|w_{t}}(r_{t}|w_{t})}{\partial z_{t}} & =\left(\frac{\partial\varphi_{t}^{-1}(r_{t},z_{t})}{\partial z_{t}}-\frac{\partial\mathbb{M}_{t}^{-1}\left(x_{t},z_{t}\right)}{\partial z_{t}}\right)g_{e_{t}}\left(\varphi_{t}^{-1}(r_{t},z_{t})-f_{t}(x_{t})-\mathbb{M}_{t}^{-1}\left(x_{t},z_{t}\right)\right).\label{eq:deriv3}
\end{align}
Using Assumption \ref{A-3-iid}(e), choose $q_{t}\in\{m_{t},k_{t},l_{t}\}$
and $\tilde{x}_{t}\in\mathcal{B}_{q_{t}}$ such that $\partial G_{r_{t}|w_{t}}\left(r_{t}|\tilde{x}_{t},z_{t}\right)/\partial q_{t}\neq0$
for all $(r_{t},z_{t})\in\mathcal{R}\times\mathcal{Z}$. Dividing
(\ref{eq:deriv1}) by (\ref{eq:deriv2}) and (\ref{eq:deriv3}) by
(\ref{eq:deriv2}), respectively, we obtain
\begin{align}
\frac{\partial\varphi_{t}^{-1}(r_{t},z_{t})}{\partial r} & =-\left(\frac{\partial f_{t}(\tilde{x}_{t})}{\partial q_{t}}+\frac{\partial\mathbb{M}_{t}^{-1}\left(\tilde{x}_{t},z_{t}\right)}{\partial q_{t}}\right)\frac{\partial G_{r_{t}|w_{t}}(r_{t}|\tilde{x}_{t},z_{t})/\partial r}{\partial G_{r_{t}|w_{t}}(r_{t}|\tilde{x}_{t},z_{t})/\partial q_{t}},\label{eq:varphi_r}\\
\frac{\partial\varphi_{t}^{-1}(r_{t},z_{t})}{\partial z_{t}}-\frac{\partial\mathbb{M}_{t}^{-1}\left(\tilde{x}_{t},z_{t}\right)}{\partial z_{t}} & =-\left(\frac{\partial f_{t}(\tilde{x}_{t})}{\partial q_{t}}+\frac{\partial\mathbb{M}_{t}^{-1}\left(\tilde{x}_{t},z_{t}\right)}{\partial q_{t}}\right)\frac{\partial G_{r_{t}|w_{t}}(r_{t}|\tilde{x}_{t},z_{t})/\partial z_{t}}{\partial G_{r_{t}|w_{t}}(r_{t}|\tilde{x}_{t},z_{t})/\partial q_{t}},\label{eq:varphi_z}
\end{align}
for all $r_{t}\in\mathcal{{R}}$.

Let $x_{t0}^{*}:=(m_{t0}^{*},k_{t}^{*},l_{t}^{*})$ and $r_{t}^{*}:=\phi_{t}(x_{t0}^{*},z_{t}^{*})$.
Then, the normalization Assumption \ref{A-2} implies:
\begin{align*}
\varphi_{t}^{-1}(r_{t}^{*},z_{t}^{*}) & =\varphi_{t}^{-1}(\phi_{t}(x_{t0}^{*},z_{t}^{*}),z_{t}^{*}).\\
 & =f_{t}(x_{t0}^{*})+\mathbb{M}_{t}^{-1}\left(x_{t0}^{*},z_{t}^{*}\right)\\
 & =0.
\end{align*}
Integrating (\ref{eq:varphi_r}) with respect to $r$ and using $\varphi_{t}^{-1}(r_{t}^{*},z_{t}^{*})=0$,
we obtain 
\begin{align}
\varphi_{t}^{-1}(r_{t},z_{t}^{*}) & =\int_{r_{t}^{*}}^{r_{t}}\frac{\partial\varphi_{t}^{-1}(s,z_{t}^{*})}{\partial r}ds\nonumber \\
 & =\left(\frac{\partial f_{t}(\tilde{x}_{t})}{\partial q_{t}}+\frac{\partial\mathbb{M}_{t}^{-1}\left(\tilde{x}_{t},z_{t}^{*}\right)}{\partial q_{t}}\right)S_{q_{t}}(r_{t}),\label{eq:varphi_1}
\end{align}
where
\begin{equation}
S_{q_{t}}(r_{t}):=-\int_{r_{t}^{*}}^{r_{t}}\frac{\partial G_{r_{t}|w_{t}}(s|\tilde{x}_{t},z_{t}^{*})/\partial r}{\partial G_{r_{t}|w_{t}}(s|\tilde{x}_{t},z_{t}^{*})/\partial q_{t}}ds>0\label{eq:S_q}
\end{equation}
is well-defined under Assumption \ref{A-3-iid}(e).

Define
\begin{equation}
c_{m}:=\frac{\partial f_{t}(\tilde{x}_{t})}{\partial q_{t}}+\frac{\partial\mathbb{M}_{t}^{-1}\left(\tilde{x}_{t},z_{t}^{*}\right)}{\partial q_{t}}.\label{eq:c_m}
\end{equation}
 From (\ref{eq:varphi_1}) and (\ref{eq:error}), $\varphi_{t}^{-1}(r_{t},z_{t}^{*})$
and $e_{t}$ are identified up to $c_{m}$ as: 
\begin{align}
\varphi_{t}^{-1}(r_{t},z_{t}^{*}) & =c_{m}S_{q_{t}}(r_{t})\label{eq:varphi_inv}\\
e_{t} & =c_{m}\left[S_{q_{t}}(r_{t})-S_{q_{t}}(\phi(x_{t},z_{t}^{*}))\right].\label{eq:e_t}
\end{align}
Because $e_{t}$ is independent of $z_{t}$ and $x_{t}$, we can identify
the distribution of $\tilde{e}_{t}:=e_{t}/c_{m}$ as $G_{\tilde{e}_{t}}(t)=\Pr(S_{q_{t}}(r_{t})-S_{q_{t}}(\phi(x_{t},z_{t}^{*}))\leq t|x_{t},z_{t}^{*})$
from (\ref{eq:e_t}).

Let $y_{t}:=\varphi_{t}^{-1}(r_{t},z_{t}^{*})=f(x_{t})+\mathbb{M}_{t}^{-1}\left(x_{t},z_{t}^{*}\right)+e_{t}$.
Then, (\ref{eq:varphi_inv}) implies
\begin{align}
\frac{y_{t}}{c_{m}} & =\frac{\varphi_{t}^{-1}(r_{t},z_{t}^{*})}{c_{m}}=S_{q_{t}}(r_{t}).\label{eq:ycm0}
\end{align}
Since $S_{q_{t}}(\cdot)$ is an increasing function, there exists
its inverse function $D(\cdot):=S_{q_{t}}^{-1}(\cdot)$ such that:
\begin{align}
r_{t}=\varphi_{t}(y_{t},z_{t}^{*}) & =D_{t}\left(\frac{y_{t}}{c_{m}}\right)\text{ and }\frac{\partial\varphi_{t}(y_{t},z_{t}^{*})}{\partial y_{t}}=\frac{1}{c_{m}}D'_{t}\left(\frac{y_{t}}{c_{m}}\right)\label{eq:varphi}
\end{align}
From $y_{t}-e_{t}=f(x_{t})+\mathbb{M}_{t}^{-1}\left(x_{t},z_{t}^{*}\right)=\varphi_{t}^{-1}(\phi_{t}(x_{t},z_{t}^{*}),z_{t}^{*})$,
(\ref{eq:ycm0}) implies 
\begin{align}
\frac{y_{t}}{c_{m}} & -\tilde{e}_{t}=\frac{\varphi_{t}^{-1}(\phi_{t}(x_{t},z_{t}^{*}),z_{t}^{*})}{c_{m}}=S_{q_{t}}(\phi_{t}(x_{t},z_{t}^{*})).\label{eq:ycm}
\end{align}

From (\ref{eq:varphi}) and (\ref{eq:ycm}), the expectation term
in the first-order condition (\ref{eq:varphi_inv}) for a firm with
$(x_{t},z_{t}^{*})$ times $c_{m}$ can be written as: 
\begin{align}
 & c_{m}E_{e}\left[\exp\left(\varphi_{t}(y_{t},z_{t}^{*})\right)\frac{\partial\varphi_{t}(y_{t},z_{t}^{*})}{\partial y_{t}}\right]\nonumber \\
= & c_{m}E_{e}\left[\exp\left(D_{t}\left(\frac{y_{t}}{c_{m}}\right)\right)\frac{1}{c_{m}}D_{t}'\left(\frac{y_{t}}{c_{m}}\right)\right]\text{ from }(\ref{eq:varphi})\nonumber \\
= & E_{e}\left[\exp\left(D_{t}\left(S_{q_{t}}(\phi(x_{t},z_{t}^{*}))+\tilde{e}_{t}\right)\right)D_{t}'\left(S_{q_{t}}(\phi(x_{t},z_{t}^{*}))+\tilde{e}_{t}\right)\right]\text{ from }(\ref{eq:ycm})\nonumber \\
= & \int\exp\left(D_{t}\left(S_{q_{t}}(\phi(x_{t},z_{t}^{*}))+\tilde{e}_{t}\right)\right)D_{t}'\left(S_{q_{t}}(\phi(x_{t},z_{t}^{*}))+\tilde{e}_{t}\right)dG_{\tilde{e}_{t}}(s)\nonumber \\
=: & \Upsilon(x_{t})\label{eq:upsilon}
\end{align}
where $\Upsilon(x_{t})$ is identified because $D_{t}(\cdot)$, $S_{q_{t}}\left(\cdot\right)$,
$\phi(\cdot)$, and $G_{\tilde{e}_{t}}(\cdot)$ are already identified.

From (\ref{eq:upsilon}), the first-order condition (\ref{eq:varphi_inv})
for a firm with $(x_{t},z_{t}^{*})$ becomes
\begin{equation}
\frac{\Upsilon(x_{t})}{c_{m}}\frac{\partial f_{t}(x_{t})}{\partial m_{t}}=\exp(p_{t}^{m}+m_{t}).\label{eq:foc_exp}
\end{equation}
Evaluating (\ref{eq:foc_exp}) at $(\tilde{x}_{t},z_{t}^{*})$ and
substituting it into (\ref{eq:c_m}), we identify $c_{m}$ as

\[
c_{m}=\frac{\Upsilon(\tilde{x}_{t})}{\Upsilon(\tilde{x}_{t})-\exp(p_{t}^{m}+\tilde{m}_{t})}\frac{\partial\mathbb{M}_{t}^{-1}\left(\tilde{x}_{t},z_{t}^{*}\right)}{\partial m_{t}}.
\]
Given that $c_{m}$ is identified, we identify $\varphi_{t}^{-1}(r_{t},z_{t}^{*})$
from (\ref{eq:varphi_inv}), $e_{t}$ from (\ref{eq:e_t}) and $f_{t}(\cdot)$
as
\[
f(x_{t})=\varphi_{t}^{-1}(\phi(x_{t},z_{t}^{*}),z_{t}^{*})-\mathbb{M}_{t}^{-1}\left(x_{t},z_{t}^{*}\right).
\]
Finally, we identify $\partial\varphi_{t}^{-1}(r_{t},z_{t})/\partial z_{t}$
from (\ref{eq:varphi_z}) as
\[
\frac{\partial\varphi_{t}^{-1}(r_{t},z_{t})}{\partial z_{t}}=-\left(\frac{\partial f_{t}(\tilde{x}_{t})}{\partial q_{t}}+\frac{\partial\mathbb{M}_{t}^{-1}\left(\tilde{x}_{t},z_{t}\right)}{\partial q_{t}}\right)\frac{\partial G_{r_{t}|w_{t}}(r_{t}|\tilde{x}_{t},z_{t})/\partial z_{t}}{\partial G_{r_{t}|w_{t}}(r_{t}|\tilde{x}_{t},z_{t})/\partial q_{t}}+\frac{\partial\mathbb{M}_{t}^{-1}\left(\tilde{x}_{t},z_{t}\right)}{\partial z_{t}}.
\]
and $\varphi_{t}^{-1}(r_{t},z_{t})$ as:
\[
\varphi_{t}^{-1}(r_{t},z_{t})=\varphi_{t}^{-1}(r_{t},z_{t}^{*})+\int_{z_{t}^{*}}^{z_{t}}\frac{\partial\varphi_{t}^{-1}(r_{t},s)}{\partial z_{t}}ds.
\]
\end{proof}

\subsubsection{Identification of Markup}

Because of the i.i.d. shock $e_{it}$, the first-order condition (\ref{eq:foc2})
includes the expectation with respect to $e_{it}$. Thus, the identified
value of $\partial\varphi_{t}^{-1}(r_{it},z_{it})/\partial r_{it}$
no longer equals the markup. Instead, we obtain the markup from the
cost minimization, following \citet{hall1988relation} and \citet{de2012markups}.

Consider a cost minimization problem of producing $\exp(\tilde{y}_{it})$
unit of output: 
\begin{equation}
C_{t}(\tilde{y}_{it},k_{it},l_{it}):=\min_{m}\exp(p_{t}^{m}+m)\text{ s.t. }\exp\left(f_{t}(m,k_{it},l_{it})+\omega_{it}\right)\geq\exp\left(\tilde{y}_{it}\right).\label{eq:cost_minimization}
\end{equation}
The first-order condition is 
\begin{equation}
\lambda_{it}\exp\left(\tilde{y}_{it}\right)\frac{\partial f_{t}(x_{t})}{\partial m_{t}}=\exp(p_{t}^{m}+m_{it})\label{eq:foc_cost}
\end{equation}
where $\lambda_{it}$ is the Lagrange multiplier and interpreted as
the marginal costs. Using the cost function (\ref{eq:cost_minimization}),
we write the profit maximization problem:
\begin{equation}
\max_{\tilde{y}_{jt}}E\left[\exp\left(\varphi_{t}(\tilde{y}_{it}+e_{it},z_{it})\right)|\mathcal{I}_{it}\right]-C_{t}(\tilde{y}_{it},k_{it},l_{it}).\label{eq:reduced problem}
\end{equation}
The first-order condition for (\ref{eq:reduced problem}) is 
\begin{equation}
E_{e}\left[\exp\left(\varphi_{t}(\tilde{y}_{it}+e_{it},z_{it})\right)\frac{\partial\varphi_{t}(\tilde{y}_{it}+e_{it},z_{it})}{\partial\tilde{y_{it}}}\right]=\frac{\partial C_{t}(\tilde{y}_{it},k_{it},l_{it})}{\partial\tilde{y}_{it}}=\lambda_{it}\exp\left(\tilde{y}_{it}\right).\label{eq:foc1}
\end{equation}
Substituting (\ref{eq:foc1}) into (\ref{eq:foc_cost}) obtains the
first-order condition (\ref{eq:foc2}) for the profit maximization
problem (\ref{eq:profit_maximization_2}). Therefore, the problem
(\ref{eq:reduced problem}) and the problem (\ref{eq:profit_maximization_2})
achieve the identical maximized profit.

From (\ref{eq:foc2}) and (\ref{eq:foc1}), the marginal cost $\lambda_{it}$
is expressed as 
\begin{align*}
\lambda_{it} & =\frac{E_{e}\left[\exp\left(\varphi_{t}(\tilde{y}_{it}+e_{it},z_{it})\right)\frac{\partial\varphi_{t}(\tilde{y}_{it}+e_{it},z_{it})}{\partial\tilde{y_{it}}}\right]}{\exp\left(y_{it}-e_{it}\right)}\\
 & =\frac{\exp(p_{t}^{m}+m_{it})/\frac{\partial f_{t}(x_{it})}{\partial m_{it}}}{\exp\left(y_{it}-e_{it}\right)}.
\end{align*}
Then, the markup becomes 
\[
\frac{\exp(p_{it})}{\lambda_{it}}=\frac{\partial f_{t}(x_{it})/\partial m_{it}}{\exp(p_{t}^{m}+m_{it})/\exp\left(r_{it}-e_{it}\right)},
\]
which is identified given our identification of $\partial f_{t}(x_{it})/\partial m_{it}$
and $e_{it}$.
\end{document}